\DeclareMathOperator*{\argmin}{arg\,min}
\newcommand{\pb}{P_B}
\newtheorem{theorem}{Theorem}
\newtheorem{lemma}{Lemma}
\newtheorem{corollary}{Corollary}
\newtheorem{definition}{Definition}
\newtheorem{remark}{Remark}
\def\blue{\textcolor{blue}}
\newcommand{\ZW}[1]{\text{Zero-Wait-{#1}}}
\newcommand{\ZWone}{\ZW{1}}
\newcommand{\ZWtwo}{\ZW{2}}
\newcommand{\Wone}{\text{Wait-1}}
\newcommand{\Agelimit}[1]{[#1]^{\bar\Delta}}
\title{Status Update Control and Analysis under Two-Way Delay}
\author{
\IEEEauthorblockN{Mohammad~Moltafet, Markus~Leinonen, Marian~Codreanu, and Roy~D.~Yates 
}


\thanks{Mohammad Moltafet is with the Department of Electrical and Computer Engineering, University of California Santa Cruz (UCSC), Santa Cruz, CA 95064, USA (e-mail: mmoltafe@ucsc.edu), Markus Leinonen, deceased, was with Centre for Wireless Communications--Radio Technologies, University of Oulu,
 Oulu, Finland,
Marian Codreanu is with the Department of Science and Technology, Link\"{o}ping University, Link\"{o}ping 58183, Sweden (e-mail: marian.codreanu@liu.se), and Roy D. Yates is with the Wireless Information Network Laboratory, ECE
Department, Rutgers University, North Brunswick, New Jersey 08902, USA (e-mail:
ryates@winlab.rutgers.edu).
}
}  
\begin{document}
\maketitle
\sloppy

\begin{abstract}
We study status updating under two-way delay in a system consisting of a sampler, a sink, and a controller residing at the sink. The controller drives the sampling process by sending request packets to the sampler. Upon receiving a request, the sampler generates a sample and transmits the status update packet to the sink. Transmissions of both request and status update packets encounter random delays. We develop optimal control policies to minimize the average age of information (AoI) using the tools of Markov decision processes in two scenarios. We begin with the system having at most one active request, i.e., a generated request for which the sink has not yet received a status update packet. Then, as the main distinctive feature of this paper, we initiate pipelined-type status updating by studying a system having at most two active requests. Furthermore, we conduct AoI analysis by deriving the average AoI expressions for the $\ZW{1}$, $\ZW{2}$, and $\Wone$ policies. According to the $\ZW{1}$ policy, whenever a status update packet is delivered to the sink, a new request packet is inserted into the system. The $\ZW{2}$ policy operates similarly, except that the system can hold two active requests. According to the $\Wone$ policy, whenever a status update packet is delivered to the sink, a new request is sent after a waiting time which is a function of the current AoI. Numerical results illustrate the performance of each status updating policy under varying system parameter values.

\emph{Index Terms--}  AoI, two-way delay, status updating control, Markov decision process.
\end{abstract}

\section{Introduction}
A key enabler for emerging time-critical applications (e.g., industrial automation or drone control) in wireless sensor networks (WSNs) and Internet of things (IoT) is the timely delivery of status updates of real-world physical processes. The age of information (AoI) is a relatively new destination-centric metric 
for the information freshness in status update systems \cite{5984917,Sunbook2019,9380899}. A status update packet contains a measured value of a monitored process and a time stamp
for when the sample was generated. 
For each source node, the AoI at a destination measures the difference between the current time and the time stamp of the most recent
received status update \cite{5984917,6195689}.

In this paper, we study status updating under two-way delay in the discrete-time system depicted in Figure~\ref{Model}. 
A sampler monitors a random process and sends updates through a ``forward link'' service facility to a sink/monitor. 
At the sink, a controller drives the sampling process by sending request packets to the sampler
on a ``reverse link'' control channel.

This system is subject to two-way delays in that
packets on both forward and reverse links experience random delays. In particular, each request packet on the reverse link  requires a random time to be delivered to the sampler. Upon receiving a request, the sampler immediately generates a sample and initiates the transmission of the corresponding status update packet which requires a random time to arrive at the sink. 
This model arises in many applications, where the monitor (sink) is responsible for controlling the sampling process of a remote source (e.g., a wireless sensor); accordingly, the monitor sends sampling control messages (i.e., packets) over the wireless links, which arrive at the source after a random delay. Such delayed control messaging is not present in the majority of the existing literature on AoI analysis and control; the few exceptions 
are elaborated in Section \ref{sec_related}.

\subsection{Contributions}

The delayed control channel opens up a framework to study status updating under \textit{pipelining-type} packet flow, where multiple requests are concurrently being processed in the system. In this setting, we define an {\em active request} as a request that the controller has created, but for which the sink has not yet received the corresponding status update packet.
As the main contribution, we take a step towards improving the timeliness of status updates in pipelining-based status update systems as follows.
We initially consider a system with at most one active request, {thus operating in a stop-and-wait fashion.}
Then, initiating pipelining-type status updating, we consider a system that supports two active requests. As this system leads to queued packets when a server is busy, waiting buffers 
are introduced at both forward and reverse links. 

The first part of this paper deals with AoI-optimal status updating \textit{control} in the two scenarios regarding the number of active requests. Assuming that the controller has global knowledge of the system state (i.e., the existence of packets at the waiting buffers and servers and their age values), our objective is to find the optimal times to send request messages, i.e., the optimal control policy, that minimize the average AoI. For each scenario, we formulate a Markov decision process (MDP) problem that we subsequently solve via the relative value iteration (RVI) algorithm. 

The second part of this paper focuses on \textit{AoI analysis}. Since, in practice, the global system state is oftentimes not available at the controller, we analyze the performance of two control policies that operate solely based on the controller's local information: 1) the zero-wait policy with one active request ($\ZWone$ policy), and 2) the zero-wait policy with two active requests ($\ZWtwo$ policy). According to the $\ZWone$ policy, whenever a status update packet is delivered to the sink, a new request packet is immediately inserted into the system. The $\ZWtwo$ policy operates similarly; except the system always maintains two active requests.  We derive the closed-form expression of the average AoI under the $\ZWone$ and $\ZWtwo$ policies. 

Furthermore, driven by the fact that the zero-wait discipline is not necessarily age-optimal \cite{7283009,8000687}, we modify the $\ZWone$ policy by introducing a waiting time in the controller's update request process. Under this $\Wone$ policy, whenever a status update packet is delivered to the sink, a new request is sent after a waiting time which is a function of the current AoI. We derive the closed-form expression of the average AoI under the $\Wone$ policy. 

Finally, numerical results are presented to show the performance of each status updating policy under different system parameter values. Moreover, the threshold-type structure of an optimal status updating policy is illustrated.

\subsection{Organization}
The paper is organized as follows. Related work is discussed in Section~\ref{sec_related}. The system model and AoI performance metric are presented in Section~\ref{System Model}. Optimal control policies to minimize the average AoI are presented in Section~\ref{Optimal Control for AoI Minimization}. Average AoI analysis for 
$\ZW{1}$, $\ZWtwo$, and
$\Wone$ policies is  provided in Section~\ref{Average AoI Analysis}. 
Numerical results are presented in Section~\ref{Numerical Results}. Finally, concluding remarks are made in Section~\ref{Conclusions}.

\section{Related Work}\label{sec_related}
Analysis of AoI in various status update systems has witnessed great interest. 
The first analytical results were derived in \cite{6195689}, where the authors characterized the average AoI for M/M/1, D/M/1, and M/D/1 first-come first-served queueing models. In \cite{7415972}, the authors proposed the peak AoI as an alternative metric for information freshness. Since then, average AoI and peak AoI have extensively been studied in various queueing models. 
The average (peak) AoI for various queueing models with exponentially distributed service time and Poisson arrivals has been studied in, e.g.,
\cite{6310931,7541765,7415972,9048933,7364263,6284003}.
Besides exponentially distributed service time and Poisson arrivals, AoI has also been studied under various arrival processes and service time distributions in, e.g.,
\cite{9119460,8406909,8006504,8820073,soysal2019age,9048933,9048909,7282742,8886357}.
Using the framework of stochastic hybrid systems \cite{Hespanha-shs-06}, the average AoI and moment generating functions of the AoI have been characterized under various queueing models and packet management policies in, e.g.,   \cite{8469047,9103131,8437591,8406966,8437907,9013935,9048914,9252168,9162681,Moltafet2020mgf,9174099,9562231,MajedMGF,9611498}. 
AoI has also been studied in various discrete-time queueing systems \cite{7249268,8764468,9120608,9148775,9611393}.

Besides the age analysis, numerous works have focused on devising AoI-optimal status updating control procedures. Indeed, the optimization of, e.g., the sampling times of the sensors, scheduling, and resource management plays a critical role in the performance of status update systems. The works \cite{8772205,bhat2019throughput,8943134,8723545,9181539,gu2020optimizing,9155420,8445873,8764465,bedewy2021optimal,8000687,chen2021optimal,8648525,9598864,zakeri22,9540757,arxiv02929,9834697,9868923} investigated the optimal sampling problem in status update systems, whereas the works \cite{7283009,8006703,8406974,8437573,8123937,8822722,8606155,9546792} studied the optimal sampling problem with energy harvesting sources. A comprehensive literature review of recent works in AoI can be found in \cite{9380899}.

An efficient status update system optimally balances between too infrequent updates and excessively frequent updates that induce queueing delays. To this end, an intuitive approach is to send new updates as the system becomes empty. However, the work \cite{7283009} established a new notion for optimal status update control by showing that a lower average AoI value may be achieved by purposely asking the sensor to wait before taking and transmitting a fresh sample. The work \cite{8000687} generalized the idea in \cite{7283009} and studied optimal sampling with respect to different age penalty functions, where a source communicates over a Markov channel (imposing correlated packet transmission times) subject to a sampling rate constraint. The work \cite{8445873} proposed a new mutual-information-based metric to quantify information freshness and subsequently studied a sampling problem, where a Markov source sends updates through a queue to the sink. 
The authors proved that the optimal sampling policy is a threshold policy and found the optimal threshold. The work \cite{8764465} further extended the study of the sampling problem in \cite{8000687} by considering more general non-linear functions of AoI and investigating both continuous-time and discrete-time systems. 

In \cite{9868923}, the authors considered a status update system consisting of an information source, a server, and multiple users. The server
communicates with the information source to update its data, whereas users submit requests to the server to retrieve the updated data. The main goal of the server is to provide users with fresh information about the information source. However, maintaining fresh data at the server requires the server to update its database frequently, which incurs an update cost. They studied the tradeoff between the AoI and the update cost at the server.
The work \cite{8006703} considered an energy harvesting sensor and derived the optimal threshold, in terms of the current energy state and estimated age, to trigger a new sample to minimize the average AoI. Therein too, the threshold policy conforms to the idea of intentionally waiting prior to sampling, as in \cite{7283009}.

Extending beyond mere sampling, status update control has also been studied in more general network optimization frameworks. The authors of \cite{bedewy2019ageoptimal-conf,bedewy2021optimal} considered a multi-source status update system in which the different sources send their updates through a shared channel causing random delays. They studied joint sampling and transmission scheduling for optimizing the information freshness. The authors of \cite{chen2021optimal} considered two source nodes generating heterogeneous traffic with different power supplies and studied the peak-age-optimal status update scheduling. In \cite{8406974}, the authors considered an energy harvesting source sending status updates over a noisy channel and derived the optimal status updating policy to minimize the average AoI. In \cite{8648525}, the authors studied transmission scheduling of status updates over an error-prone communication channel via automatic repeat request (ARQ) protocols to minimize the average AoI subject to the constraint on the average number of transmissions. The work \cite{9598864} studied the problem of optimal scheduling and radio resource allocation in a WSN to minimize the average total transmit power subject to an average AoI constraint. The work in \cite{zakeri22} studied the AoI-optimal sampling and scheduling in a relay-based multi-source status update system under an average number of transmissions constraint. None of the aforementioned works, however, considers a delayed control channel.

The most closely related works to this paper are \cite{9540757,arxiv02929,9834697}; these contributions all incorporate a delayed feedback/ACK channel in a status update system. The work in \cite{9540757} considered a system consisting of a sensor, a controller, a forward sensor-to-controller channel, and a backward controller-to-sensor channel. A transmission in both the forward and backward channels experiences a random delay. A sensor can generate a new packet only after receiving a confirmation message of the previous packet from the controller, thus conforming to a \textit{stop-and-wait} ARQ mechanism. The authors introduced waiting times (in line with \cite{7283009,8000687}) at both the sensor and controller ends and derived optimal status updating policies to minimize the average AoI or to perform remote estimation of a Wiener process.

The works \cite{arxiv02929} and \cite{9834697} considered a system where status updates are transmitted through an unreliable channel and upon delivery of a packet, a feedback message is sent to the sampler, which experiences a random delay. In \cite{arxiv02929}, similarly to \cite{9540757}, a stop-and-wait packet flow is assumed, and the authors derived an optimal sampling policy that minimizes a long-term average age penalty under a sampling rate constraint. In  \cite{9834697}, the authors considered a controller at the sampler that
decides to stay idle or send a fresh update in each slot according to delayed ACK messages.   
They studied two scenarios to derive bounds on the optimal average AoI: 
\begin{enumerate*} \item 
in some specific slots, a genie temporarily takes over the queue in the backward channel and delivers all the queued ACK messages to the sampler instantaneously, and
\item 
the system can contain at most two packets, including both status updates and ACK messages. They showed that the optimal policy in the first scenario provides a lower bound on the optimal average AoI of the  system and the optimal status update policy in the second one provides an upper bound on the optimal average AoI of the  system.
\end{enumerate*}

In contrast to \cite{9540757,arxiv02929}, this paper relaxes the restriction to stop-and-wait mechanisms. As conjectured in \cite[Remark~2]{9540757} and \cite[Footnote~2]{arxiv02929}, this paper shows that, instead of working in a stop-and-wait fashion, requesting status updates \textit{anticipatively} is, in general, beneficial. Indeed, the consideration of pipelining (in the form of two active requests) in a status update system under two-way delay is the most distinctive feature of this paper. Also, different from \cite{9540757,arxiv02929,9834697}, we consider a strictly obedient sampler 
such that once it receives a request message, it is not allowed to wait prior to sampling. Overall, our analysis is quite different and is conducted for a discrete-time system, as opposed to the continuous-time systems in \cite{9540757,arxiv02929}. 
The work in \cite{9834697} is a parallel work to this paper, but in contrast to \cite{9834697},
 we derive closed-form expressions of the average AoI for three different control policies.

\begin{figure}[t]
\centering
\includegraphics[width=.97\linewidth,trim = 0mm 0mm 0mm 0mm,clip]{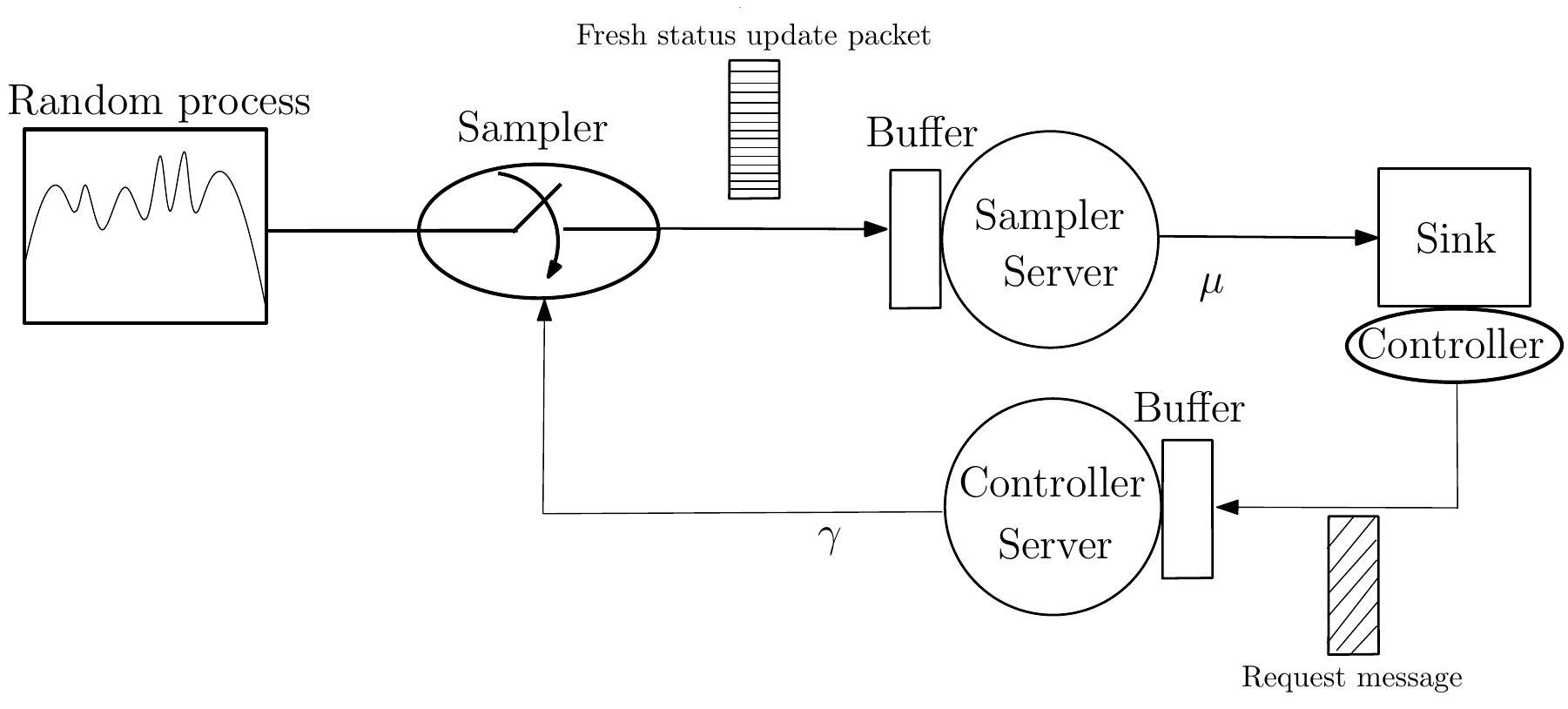}\vspace{-2mm}
\caption{The considered status update system with two-way delay.}
\vspace{-5mm}
\label{Model}
\end{figure}

\section{System Model and AoI Definition}\label{System Model}

\subsection{System Model}
We consider a status update system consisting of one sampler and one sink, as depicted in Fig.~\ref{Model}. We assume slotted communication with normalized slots ${t\in\{0,1,\dots\}}$; slot $t$ indicates the time interval $[t,t+1)$. The sampler monitors a random process and the sink is interested in timely information about the status of the process.

\subsubsection{Request Messaging}
We consider that the sampler can be commanded to take a sample of the process at will. More precisely, we consider a \textit{controller} at the sink who sends \textit{request messages} (i.e., request packets) to the sampler; once the sampler receives the request, it takes a sample. We consider that each request packet requires a random time to be successfully delivered to the sampler. We model this as a \textit{controller server} residing at the link from the controller to the sampler. We assume that the controller server serves request packets according to a geometrically distributed service time with mean
$1/\gamma$. 
 Moreover, the controller server is associated with a  \textit{controller buffer} that stores arriving request packets whenever the server is busy.

\subsubsection{Status Updating}
If the sampler receives a request packet at the end of slot $t-1$, it generates a sample at the beginning of slot $t$. The samples are sent to the sink as \textit{status update packets}, each containing the measured value of the monitored process and a time stamp representing the time when the sample was generated. We consider that each status update packet requires a random time to arrive at the sink. We model this as a server residing at the link from the sampler to the sink, henceforth called the \textit{sampler server}. We assume that the sampler server serves the packets according to a geometrically distributed service time with mean $1/\mu$. The sampler server is associated with a 
\textit{sampler buffer} that stores an arriving status update packet if the server is busy.

\subsubsection{System Capacity}
We will study two scenarios regarding the system capacity, i.e., the maximum number of packets (either request or status update packets) concurrently in the system.  
We define an \textit{active request} as a request that the controller has created, but for which the sink has not yet received the corresponding status update packet. In the first scenario, there is at most one active request concurrently in the system. This corresponds to the typical \textit{stop-and-wait} transmission policy, where a new request packet can be inserted only after the preceding requested status update packet is delivered to the sink. 
To give insights to the queueing dynamics, the system holds at most one status update packet at a time while the buffers are always empty. In the second scenario, we consider a system with at most two active requests. This takes a step toward \textit{pipelining}, where the system is simultaneously processing multiple packets. This leads to more intricate queueing dynamics: for example, the system may have two requests or two status update packets so that one packet is waiting in a buffer.

\subsection{AoI Definition}
The AoI at the sink is defined as the time elapsed since the last successfully received status update packet was generated. Formally, let $t_{i}$ denote the time slot at which the $i$th request message is generated.
This request message completes service at the controller server  at time slot $t'_i-1$  and is delivered to the sampler server at time slot $t'_i$, which is also the time slot that the $i$th status update packet is both generated at the sampler and arrives at the sampler server. 
To close the round trip passage of update $i$, let $\bar t_{i}$ denote the time slot at which the $i$th status update packet arrives at the sink. At time slot $\tau$, the index of the most recently received status update packet is
\begin{equation}\label{mnb00}
N(\tau)=\max\{i\colon \bar t_{i}\le \tau\}, 
\end{equation}
and the time stamp of the most recently received status update packet is
${U^\tau=t'_{N(\tau)}}$.
The (discrete) AoI at the sink is defined as the random process ${\delta^t=t-U^t}$. The evolution of the AoI is given as
\begin{align}\label{eq_age_evo}
&\delta^t=
    \begin{cases}
    t-
t'_{N(t)},& \mbox{if}~{t=\bar t_{N(t)}}
        \\
    \delta^{t-1} + 1,&\mbox{otherwise.}
    \end{cases}
\end{align}
A sample path of the AoI is exemplified in Fig.~\ref{stair_int_1}, which illustrates the staircase pattern of the AoI evolution.

To evaluate the AoI, 
we use the most commonly used metric, the average AoI \cite{8187436,9380899}, which is defined in the following. Let $(0,\tau)$ denote an observation interval. Accordingly, the time average AoI at the sink, denoted as  $\Delta_{\tau}$, is defined as
\begin{equation}\label{oointr}
\Delta_{\tau}=\dfrac{1}{\tau}\textstyle\sum_{t=1}^{\tau}\delta^t.
\end{equation}
The summation in \eqref{oointr} is equal to the area under $\delta^t$, which can be expressed as a sum of disjoint areas determined by polygons $\underline Q$, ${\{Q_{i}\}_{i=2}^{N(\tau)}}$, and $\bar{Q}$, as illustrated in Fig. \ref{stair_int_1}. 
Following the definition of  $N(\tau)$ in \eqref{mnb00}, $\Delta_{\tau}$ can be calculated as
\begin{align}\label{oointr01}
\Delta_{\tau}&=\dfrac{1}{\tau}\bigg(\underline Q+\textstyle\sum_{i=2}^{N(\tau)}Q_{i}+\bar{Q}\bigg)\nn
&=\dfrac{\underline Q+\bar{Q}}{\tau}+\dfrac{N(\tau)-1}{\tau}\dfrac{1}{N(\tau)-1}\sum_{i=2}^{N(\tau)}Q_{i}.
\end{align}

In our analysis, we use a transition between the discrete AoI process defined in \eqref{eq_age_evo} and the AoI process, which is equivalent to \eqref{eq_age_evo} except that the age increases continually during a slot. To this end, let $Q^+_{i}$ denote a trapezoid associated with $Q_{i}$ so that $Q_{i}$ is inscribed in $Q^+_{i}$, as illustrated in Fig. \ref{stair_int_1} for $i=2$. The trapezoids $\underline Q^+$ and $\bar Q^+$ are defined similarly. Thus, we can rewrite $\Delta_{\tau}$ in \eqref{oointr01} as 
\begin{align}\label{oointr02}
\!\!\Delta_{\tau}\!=\!\dfrac{\underline Q^++\bar{Q}^+}{\tau}+\dfrac{N(\tau)-1}{\tau}\dfrac{1}{N(\tau)-1}\textstyle\sum_{i=2}^{N(\tau)}Q^+_{i}-\dfrac{1}{2},
\end{align}
because by considering $\underline Q^+$, $\{Q^+_{i}\}_{i\ge1}$, and $\bar Q^+$, a triangle of area $1/2$ has been added in each slot to the area under $\delta^t$, and thus, we need to subtract the average area of these triangles, which is $1/2$. 

\begin{figure}[t]
\centering
\includegraphics[width=.97\linewidth,trim = 0mm 1mm 0mm 0mm,clip]{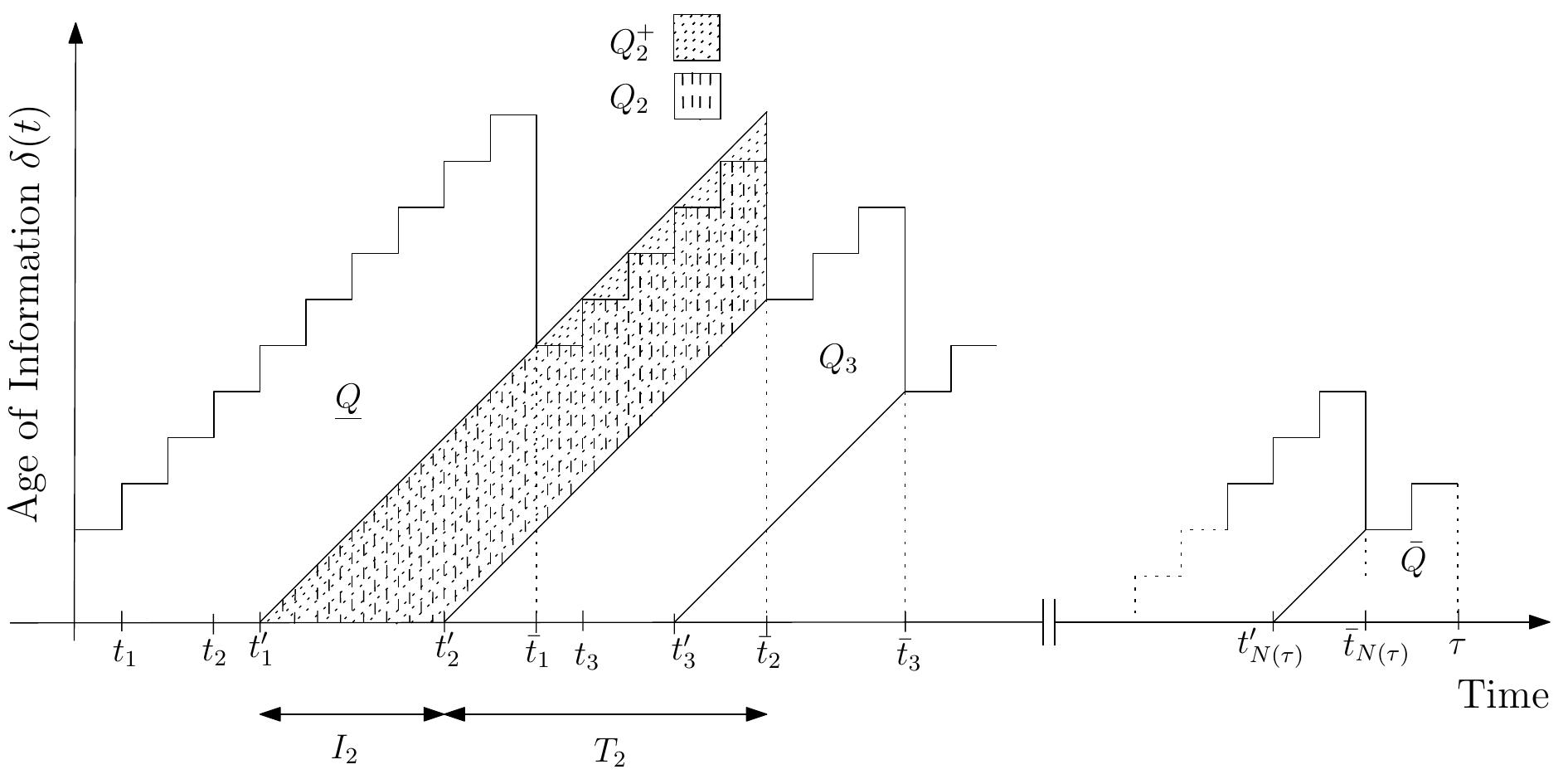}
\caption{An example of the evolution of the AoI. 
} \vspace{-5mm} 
\label{stair_int_1}
\end{figure}

Let $T_i=\bar t_i - t'_i$ denote the system time of status update packet $i$, defined as the time duration between the generation time at the sampler and the reception time at the sink. Let ${I_i=t'_{i}-t'_{i-1}}$ denote the interarrival time between  status update packets $i-1$ and $i$ at the sampler server. 
Since $Q^+_{i}$ can be calculated by subtracting the area of the isosceles triangle with sides $T_i$ from the area of the isosceles triangle with sides ${I_i+T_i}$ (see Fig.~\ref{stair_int_1}), 
\begin{equation}\label{QplusDefn}
Q^+_{i}=\frac{1}{2}I_i^2+I_iT_i.
\end{equation}

The (time) average AoI, denoted by $\Delta$, is defined as
${\Delta=\lim_{\tau\to\infty}\Delta_{\tau}}$.  
With the assumption that $\set{(I_i,T_i)}_{i\ge1}$ is a stationary ergodic random process,\footnote{Such ergodicity assumptions are common in the AoI literature \cite{6195689,8187436}. For the controller policies considered in this work, explicit verification can be nontrivial but not especially interesting.}  the limit $\lambda=\lim_{\tau\to\infty}({N(\tau)-1})/{\tau}$ exists and is given as $\lambda=1/\mathbb{E}[I_i]$ and
the sample average $\frac{1}{N(\tau)-1}\textstyle\sum_{i=2}^{N(\tau)}Q^+_{i}$ in \eqref{oointr02} converges to the stochastic average $\mathbb{E}[Q^+_{i}]$.
%
Since the term $({\underline{Q}^++\bar{Q}^+})/{\tau}$  in \eqref{oointr02} goes to zero as ${\tau\to\infty}$, it then follows from \eqref{oointr02} and \eqref{QplusDefn}
that the average AoI is
\begin{align}\label{A_AoI_Main}
\Delta&=
\dfrac{1}{\mathbb{E}[I_i]}\mathbb{E}\left[\frac{1}{2}I_i^2+I_iT_i\right]-\dfrac{1}{2}.
\end{align}
Going forward, we will use the average AoI  \eqref{A_AoI_Main} to compare controller policies.

\section{Optimal Control for Average AoI 
Minimization
}\label{Optimal Control for AoI Minimization} 
In this section, we study and design optimal control policies for the considered status updating system. More precisely, our goal is to determine the optimal times for the controller to send request messages to minimize the average AoI at the sink. We split the control design into two cases. In Section~\ref{sec_1P_control}, we consider the system with at most one active request, hereinafter referred to as \textit{$\text{1-Packet}$ system}. In Section~\ref{sec_2P_control}, we consider the system with at most two active requests, hereinafter referred to as \textit{$\text{2-Packet}$ system}. 
 
 
In this section, we use a common assumption (see e.g., \cite{9241401,9085402,8938128,8648525}) that the AoI is upper-bounded by a sufficiently large value $\bar\Delta$ so that $\delta^t\in\{1,2,\ldots,\bar\Delta\}$. This enables tractable design of optimal control policies by making the state spaces finite, while acknowledging that when the status information becomes excessively stale by reaching $\bar\Delta$, a time-critical end application would not be affected if counting further. How $\bar\Delta$ is chosen in practice is discussed in the numerical results of Section~\ref{Numerical Results}.

\subsection{$\text{1-Packet}$ System}\label{sec_1P_control}
In this section, we formulate the optimal status update control problem as a Markov decision process (MDP) problem and propose a relative value iteration (RVI) algorithm to find an optimal control policy.

\subsubsection{State} 
Let ${s^t\in\mathcal{S}}$ denote the state of the system in slot $t$, where ${\mathcal{S}}$ is the state space. We define the system state as ${s^t=(\delta^t,e_{\mathrm{s}}^t,E_{\mathrm{s}}^t,\Delta^t_{\mathrm{s}})}$ with the following four elements: 
\begin{enumerate*} \item $\delta^t$ is the AoI at the sink, 
\item $e^t_{\mathrm{s}}\in\{0,1\}$ is an occupancy indicator for the controller server; ${e^t_{\mathrm{s}}=1}$ indicates\footnote{The same convention is used for all binary occupancy indicators throughout the paper.} that there is a packet at the controller server, and ${e^t_{\mathrm{s}}=0}$ otherwise, 
\item $E^t_{\mathrm{s}}\in\{0,1\}$ is an occupancy indicator for the sampler server, and
\item $\Delta^t_{\mathrm{s}}$ is the age of the status update packet at the sampler server. 
\end{enumerate*}

\subsubsection{Action} 
The action taken in slot $t$ is 
${a^t\in\mathcal{A}=\{0,1\}}$, where ${a^t=1}$ indicates that the controller generates a request packet at the beginning of slot $t$, and ${a^t=0}$ otherwise.

\subsubsection{Policy} 
A policy $\pi$ is a (possibly randomized) mapping from the state space $\mathcal{S}$ to action space $\mathcal{A}$. It determines an action $a^t$ taken in the current state $s^t$. 
Note that according to the 1-Packet system constraint, 
an admissible policy $\pi\in\Pi_\textrm{1-Packet}$ must satisfy  ${a^t=0}$ if there is an active request in the system, i.e., if either ${e_\mathrm{s}^t=1}$ or ${E_\mathrm{s}^t=1}$.

\subsubsection{Problem Formulation}
Given an initial state ${s^0\in\mathcal{S}}$, the long-term average AoI, obtained by following a policy $\pi$, is defined as 
\begin{align}
J_{\pi}(s^0)=\limsup_{T\rightarrow\infty}\dfrac{1}{T}\sum_{t=0}^{T-1}\mathbb{E}\left[\delta^t \mid s^0\right].
\end{align}
where the expectation is with respect to the action taken in each slot and the randomness in the service times, i.e., required time to serve each packet in the controller and sampler servers. For the $\text{1-Packet}$ system, the objective is to find the optimal policy $\pi^*$ that provides the minimum long-term average AoI  
\begin{align}
(\textbf{PI})\quad\pi^*(s^0)=\argmin_{\pi \in \Pi_\textrm{1-Packet}} J_{\pi}(s^0),
\end{align}
Next, we recast problem (\textbf{PI}) as an MDP problem and solve it (i.e., find the optimal policy) using the RVI algorithm.

\subsubsection{MDP Modeling of Problem~(\textbf{PI})}\label{MDP Formulation 1}
The MDP of the $\text{1-Packet}$ system is defined by the tuple ${(\mathcal{S},\mathcal{A},\prob{s^{t+1}|s^t,a^t},C(s^t,a^t))}$, 
with $C(s^t,a^t)$ denoting the (immediate) cost of taking action $a^t$ in state $s^t$. Specifically,  we define $C(s^t,a^t)=\delta^{t+1}$, the AoI at slot $t+1$. The state transition probability $\prob{s'|s,a}=\prob{s^{t+1}|s^t,a^t}$ gives the probability of moving from the current state ${s=s^t}$ to the next state ${s'=s^{t+1}}$ under an action ${a=a^t}$. 

To facilitate a compact description of $\prob{s'|s,a}$, we employ the shorthand notations
${\bar\gamma\triangleq1-\gamma}$ and 
${\bar\mu\triangleq1-\mu}$, as well as the operator
\begin{equation}
\Agelimit{\xi}\triangleq\min\{\xi+1,\bar\Delta\}.
\end{equation}
We also adopt the convention that $\star$ indicates an irrelevant age value when the sampler server is idle.
Then, considering the system dynamics, the transition probability from state ${s=(\delta,e_{\mathrm{s}},E_{\mathrm{s}},\Delta_{\mathrm{s}})}$ to the next state ${s'=(\delta',e_{\mathrm{s}}',E_{\mathrm{s}}',\Delta'_{\mathrm{s}})}$ under an action $a$, 
is characterized as:
\begin{subequations}
\begin{align}
&\prob{s'|s=(\delta,0,0,\star),a}\nn
&\quad=
\begin{cases}
1, & a=0,~ s'=(\Agelimit{\delta},0,0,\star); \\
\gamma, & a=1,~s'=(\Agelimit{\delta},0,1,0);\\
\bar\gamma, & a=1,~ s'=(\Agelimit{\delta},1,0,\star);\\
0,& \text{otherwise.}
\end{cases}\\
&\prob{s'|s=(\delta,1,0,\star),a}\nn
&\quad=\begin{cases}
\gamma,   & a=0,~ s'=(\Agelimit{\delta},0,1,0); \\
\bar\gamma, & a=0,~s'=(\Agelimit{\delta},1,0,\star);\\
0,  & \text{otherwise.}
\end{cases}\\
&\prob{s'|s=(\delta,0,1,\Delta_{\mathrm{s}}),a}\nn
&\quad=
\begin{cases}
\mu,   & a=0,~ s'=(\Agelimit{\Delta_{\mathrm{s}}},0,0,\star); \\
\bar\mu, & a=0,~s'=(\Agelimit{\delta},0,1,\Agelimit{\Delta_{\mathrm{s}}});
\\
0,& \text{otherwise.}
\end{cases}
\end{align}
\end{subequations}

Based on the MDP construction above, problem (\textbf{PI}) is equivalently cast as an MDP problem   
\begin{equation}\label{MDP1}
\pi^*(s^0)\!=\!\argmin_{ \pi  \in \Pi_\textrm{1-Packet}} \left\{\limsup_{T\rightarrow\infty}\dfrac{1}{T}\sum_{t=0}^{T-1}\mathbb{E}\left[C(s^t,a^t) \mid s^0\right]\!\right\}. 
\end{equation}

\subsubsection{Optimal Policy}
According to \cite[Proposition 4.2.1]{bertsekas2007dynamic}, for a given initial state $s^0$, if we can find a scalar $\bar V$ and values  $V^*_s$ for all $s\in\mathcal{S}$ that satisfy
\begin{align}\label{BertPro421}
\bar{V}+V^*_s=\min_{a\in\mathcal{A}}\left\{C(s,a)+\sum_{s'\in\mathcal{S}}\Pr(s'|s,a)V^*_{s'}\right\},
\end{align}
then $\bar{V}$ is the optimal average cost for problem (\textbf{PI}),  $J_{\pi^*}(s^0)$, and the actions $a\in\mathcal{A}$ that attain the minimum in \eqref{BertPro421} constitute an optimal policy, $\pi^*(s^0)$. 

Next, we define a condition under which the Bellman's equation in \eqref{BertPro421} has a solution and the optimal average cost is independent of the initial state $s^0$ \cite[Proposition~4.2.3]{bertsekas2007dynamic}.
\begin{definition}[Weak accessibility condition]
The weak accessibility condition holds for an MDP problem if the states can be divided into two subsets $\mathcal{S}_{\mathrm{c}}$ and $\mathcal{S}_{\mathrm{t}}$ such that: I) for any two states $s$ and $s'$ in subset $\mathcal{S}_{\mathrm{c}}$, there exists a stationary policy such that the probability of moving from state $s$ to state $s'$ after some slots is non-zero, and II) all states in $\mathcal{S}_{\mathrm{t}}$ are transient under every stationary policy \cite[Definition~4.2.2]{bertsekas2007dynamic}.
\end{definition}

In the following lemma, we prove that the weak accessibility condition holds for the MDP problem defined in \eqref{MDP1}.

\begin{lemma}\label{W_acc_con_1}
For any $(\mu,\gamma)\in\{(\theta_1,\theta_2) \mid 0<\theta_1,
\theta_2\le1,~(\theta_1,\theta_2)\ne(1,1)\}$, the weak accessibility condition holds for the MDP problem \eqref{MDP1}.
\end{lemma}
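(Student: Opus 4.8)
The plan is to verify the weak accessibility condition by exhibiting, for every pair of states $s, s' \in \mathcal{S}$, a single stationary policy under which $s'$ is reachable from $s$ with positive probability in finitely many slots; this makes $\mathcal{S}_{\mathrm{c}} = \mathcal{S}$ and $\mathcal{S}_{\mathrm{t}} = \emptyset$, so part II of the definition is vacuous. Because the condition only asks for \emph{some} stationary policy per pair (and, as is standard, a policy that works for all pairs simultaneously suffices), I would fix one convenient policy — the zero-wait-type policy that sets $a = 1$ whenever no active request is present (i.e.\ whenever $e_{\mathrm s} = E_{\mathrm s} = 0$) and $a = 0$ otherwise — and argue that under this policy the induced Markov chain on $\mathcal{S}$ has a single recurrent class that contains every state reachable at all, together with the observation that every state communicates with the ``anchor'' state $(\bar\Delta, 0, 0, \star)$.

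The key steps, in order: (1) Reach the anchor from an arbitrary $s$. Starting from any state, repeatedly decline to generate requests where the policy permits and let the servers drain; since $\mu, \gamma \in (0,1]$ each server empties after a geometric number of slots with positive probability, and meanwhile $\delta^t$ climbs and saturates at $\bar\Delta$ by the $\Agelimit{\cdot}$ clipping, landing the chain in $(\bar\Delta, 0, 0, \star)$ with positive probability. (2) Reach an arbitrary $s' = (\delta', e_{\mathrm s}', E_{\mathrm s}', \Delta_{\mathrm s}')$ from the anchor. From $(\bar\Delta,0,0,\star)$, issuing a request and using the transition $(\delta,0,0,\star) \xrightarrow{a=1} (\Agelimit{\delta}, 0, 1, 0)$ (or $(\Agelimit{\delta},1,0,\star)$) with probability $\gamma$ (resp.\ $\bar\gamma$) lets us place a packet at whichever server we want; holding $a=0$ then lets $\Delta_{\mathrm s}$ increment step by step via $(\delta,0,1,\Delta_{\mathrm s}) \xrightarrow{a=1 \text{ forbidden, } a=0} (\Agelimit\delta, 0,1,\Agelimit{\Delta_{\mathrm s}})$ with probability $\bar\mu > 0$, so any target $\Delta_{\mathrm s}'$ (which, being an age, is at most $\bar\Delta$) is attainable. (3) Pin down $\delta'$: because a delivery resets $\delta$ to $\Agelimit{\Delta_{\mathrm s}}$, and both $\Delta_{\mathrm s}$ at delivery time and the number of subsequent no-delivery slots are freely controllable, any value $\delta' \in \{1, \ldots, \bar\Delta\}$ consistent with the companion coordinates is reachable. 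Chaining (1)–(3) gives a positive-probability path $s \rightsquigarrow (\bar\Delta,0,0,\star) \rightsquigarrow s'$.

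The one place that needs care — and the main obstacle — is where $\gamma = 1$ or $\mu = 1$, which is exactly why the hypothesis excludes $(\mu,\gamma) = (1,1)$. If $\mu = 1$ (but $\gamma < 1$), the sampler server never holds a packet for more than one slot, so states with $E_{\mathrm s} = 1$ and $\Delta_{\mathrm s}' \ge 1$ become unreachable; but such $(\delta, e_{\mathrm s}, 1, \Delta_{\mathrm s})$ with $\Delta_{\mathrm s} \ge 1$ are simply not valid system states in this regime, so one restricts $\mathcal{S}$ to the genuinely reachable states and the argument goes through on that reduced space, with $\bar\gamma > 0$ still providing the slack needed to park a request at the controller server. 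Symmetrically, if $\gamma = 1$ (but $\mu < 1$), one uses $\bar\mu > 0$ to linger at the sampler server; the prohibited case $(1,1)$ is the only one where \emph{both} sources of slack vanish and the reachable state set degenerates. So the write-up reduces to: (a) identify the reachable subset of $\mathcal{S}$ for the given $(\mu,\gamma)$; (b) show it is a single communicating class under the fixed zero-wait policy by the three-step path construction above; (c) conclude the weak accessibility condition with $\mathcal{S}_{\mathrm c}$ equal to that subset and $\mathcal{S}_{\mathrm t} = \emptyset$.
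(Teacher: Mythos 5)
Your overall strategy is genuinely different from the paper's: the paper does not verify the definition directly, but instead shows that \emph{every} deterministic admissible policy induces a unichain Markov chain and then invokes \cite[Proposition~4.2.5]{bertsekas2007dynamic}; the unichain check is done by exhibiting a single state reachable from all others under any deterministic policy, with the anchor chosen per regime --- $(\bar\Delta,1,0,\star)$ when $\gamma\ne1$ and $(\bar\Delta,0,1,\bar\Delta)$ when $\gamma=1,\ \mu<1$. Your route (fix one stationary policy, show all states communicate, set $\mathcal{S}_{\mathrm{t}}=\emptyset$) is admissible in principle, and your parenthetical observation that the definition actually allows a different policy for each ordered pair is the easiest escape hatch. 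But as written the argument has a real gap, and it sits exactly in the boundary case you flagged. When $\mu=1$ and $\gamma<1$, under your fixed zero-wait policy every delivery resets the age to $\Agelimit{\Delta_{\mathrm{s}}}=\Agelimit{0}=1$ and the system is empty only in the delivery slot, so the only empty-system state ever visited is $(1,0,0,\star)$: your anchor $(\bar\Delta,0,0,\star)$ is unreachable under the very policy you committed to. Note also that your step (1), ``repeatedly decline to generate requests where the policy permits,'' contradicts the zero-wait policy, which mandates $a=1$ whenever the system is empty.

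The proposed repair --- restrict to the reachable states and declare $\mathcal{S}_{\mathrm{t}}=\emptyset$ --- does not close this gap. The pair $(\mathcal{S}_{\mathrm{c}},\mathcal{S}_{\mathrm{t}})$ must partition all of $\mathcal{S}$, so the discarded states have to land in $\mathcal{S}_{\mathrm{t}}$ and be shown transient under \emph{every} stationary policy (part II of the definition). That is true for states like $(\delta,0,1,\Delta_{\mathrm{s}})$ with $\Delta_{\mathrm{s}}\ge1$ when $\mu=1$, but it is false for $(\bar\Delta,0,0,\star)$: under the always-idle policy that state is absorbing (the transition $(\bar\Delta,0,0,\star)\to(\Agelimit{\bar\Delta},0,0,\star)=(\bar\Delta,0,0,\star)$ has probability $1$ when $a=0$), hence recurrent, hence it must be placed in $\mathcal{S}_{\mathrm{c}}$ and connected to the rest. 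The fix is either to use the per-pair freedom you already mentioned (e.g., reach $(\bar\Delta,0,0,\star)$ from $(1,0,0,\star)$ by idling, and leave it by requesting), or to choose a regime-dependent anchor that is recurrent under your fixed policy in every regime --- for $\mu=1,\ \gamma<1$ the state $(\bar\Delta,1,0,\star)$ works, since a request parks at the controller server for $\bar\Delta$ consecutive slots with probability $\bar\gamma^{\bar\Delta}>0$, which is precisely the paper's choice.
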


\begin{proof}
If all  {deterministic}  policies in $\Pi_\textrm{1-Packet}$ are unichain, the weak accessibility condition holds \cite[Proposition~4.2.5]{bertsekas2007dynamic}. A  policy is unichain if in the Markov chain induced by the policy, there exists a state that can be reached from any other states with a non-zero probability 
\cite[Exercise~4.3]{gallager2013stochastic}. 
In the MDP problem \eqref{MDP1}, under any deterministic policy: I) for ${\gamma\ne1}$ and ${0<\mu\le1}$, the state $(\bar\Delta,1,0,\star)$ is accessible from any other states and
II) for ${\gamma=1}$ and ${0<\mu<1}$, the state $(\bar\Delta,0,1,\bar\Delta)$ is accessible from any other states. The justification for case I is that when there is a packet at the controller server, the probability of the event of having unsuccessful transmissions in $\bar{\Delta}$ consecutive slots is non-zero.  A similar argument can be stated for case II. 
Thus, every deterministic policy is unichain and consequently, the weak accessibility condition holds.
\end{proof}
Given that the weak accessibility condition holds, we can find an optimal deterministic policy $\pi^*$ and the optimal average cost $J^*$, which are independent of the initial state $s^0$, by solving the Bellman's equation in \eqref{BertPro421} \cite[Proposition~4.2.6]{bertsekas2007dynamic}. We solve it using the RVI algorithm \cite[Section~4.3]{bertsekas2007dynamic}, which turns the Bellman's equation into an iterative procedure. The steps of the RVI algorithm are summarized in Algorithm~\ref{RVI_algo}. 
As specified in lines~\ref{Algoline-pi} and ~\ref{Algoline-V}, the optimal deterministic action  $\pi^*_s$ and the scalar $V^*_s$ are updated for each state ${s\in\mathcal{S}}$ in each iteration, where ${s_{\text{ref}}\in\mathcal{S}}$ is an arbitrary chosen reference state.
 Once the iterative process has converged, the algorithm provides an optimal policy $\pi^*$ and the optimal value of the average AoI, obtained as $J_{\pi^*}=V^*_{s_{\text{ref}}}$.
The following theorem shows that the RVI algorithm presented in Algorithm~\ref{RVI_algo} converges to an optimal deterministic policy.

\begin{algorithm}[t]
\SetAlgoLined
\textbf{Input:} 1) State transition probabilities $\Pr(s'|s,a)$, 2) Stopping criterion threshold $\epsilon$;

\textbf{Initialization:} 
1) Set ${V^*_s=0}$, ${\forall s\in\mathcal{S}}$, 2) determine an arbitrary reference state ${s_{\text{ref}}\in\mathcal{S}}$, and 3) set $\phi>\epsilon$\;  

\While{$
\phi>\epsilon$}{
\For  {$s\in\mathcal{S}$}{
$\pi^*_s\leftarrow\arg{\displaystyle \min_{a\in\mathcal{A}}}\{C(s,a)+\sum_{s'\in\mathcal{S}}\Pr(s'|s,a)V^*_{s'}\}$\;\label{Algoline-pi}
 $V_s\leftarrow
C(s,\pi^*_s)+\sum_{s'\in\mathcal{S}}\Pr(s'|s,\pi^*_s)V^*_{s'} \label{Algoline-V}
-V^*_{s_{\text{ref}}}$\;

}
$\phi\leftarrow\max_{s\in\mathcal{S}}\{|V_s-V^*_s|\}$\;
$V^*_s\leftarrow V_s$, $\forall s\in\mathcal{S}$\;
}
\textbf{Output:} {{1) Optimal policy 
$\pi^*$,
2) optimal objective function value ${J_{\pi^*}=V^*_{s_{\text{ref}}}}$}.
}
\caption{ RVI algorithm }
\label{RVI_algo}
\end{algorithm} 

\begin{theorem}\label{RVI_conver}
The RVI algorithm presented in Algorithm~\ref{RVI_algo} converges to an optimal deterministic policy and returns the optimal value of the average AoI.
\end{theorem}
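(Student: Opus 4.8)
The plan is to combine Lemma~\ref{W_acc_con_1} with the standard convergence theory for relative value iteration in finite-state average-cost MDPs. First, because the $\bar\Delta$-truncation makes $\mathcal{S}$ and $\mathcal{A}$ finite and the per-stage cost $C(s,a)=\delta^{t+1}\le\bar\Delta$ bounded, and because Lemma~\ref{W_acc_con_1} establishes the weak accessibility condition, \cite[Proposition~4.2.3]{bertsekas2007dynamic} guarantees that Bellman's equation \eqref{BertPro421} admits a solution $(\bar V,\{V^*_s\}_{s\in\mathcal{S}})$ with $\bar V$ equal to the optimal average cost $J_{\pi^*}$, independent of the initial state; moreover, by \cite[Proposition~4.2.1, Proposition~4.2.6]{bertsekas2007dynamic} any policy selecting, in each state $s$, an action attaining the minimum on the right-hand side of \eqref{BertPro421} is an optimal \emph{deterministic} policy. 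Hence it remains only to show that the iterates produced by Algorithm~\ref{RVI_algo} converge to such a solution.

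Second, I would note that lines~\ref{Algoline-pi}--\ref{Algoline-V} of Algorithm~\ref{RVI_algo} are precisely the relative value iteration recursion $V^{(k+1)}_s = (TV^{(k)})_s - V^{(k)}_{s_{\text{ref}}}$, where $(TV)_s = \min_{a\in\mathcal{A}}\{C(s,a)+\sum_{s'}\Pr(s'|s,a)V_{s'}\}$ is the Bellman operator. By \cite[Section~4.3]{bertsekas2007dynamic}, this recursion converges to a solution of \eqref{BertPro421} whenever the MDP is unichain and the Markov chains induced by stationary deterministic policies are aperiodic. The unichain property was already established inside the proof of Lemma~\ref{W_acc_con_1}: every deterministic policy in $\Pi_\textrm{1-Packet}$ induces a chain with a single recurrent class. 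For aperiodicity, I would exhibit, for every deterministic policy, a recurrent state with a positive self-transition probability: when $\gamma<1$, the transition probabilities in Section~\ref{sec_1P_control} show that from $s=(\bar\Delta,1,0,\star)$ (which is recurrent under every deterministic policy by the proof of Lemma~\ref{W_acc_con_1}) the chain returns to $s$ with probability $\bar\gamma>0$ since $\Agelimit{\bar\Delta}=\bar\Delta$; when $\gamma=1$ (so necessarily $\mu<1$, as $(\mu,\gamma)\ne(1,1)$), the recurrent state $(\bar\Delta,0,1,\bar\Delta)$ returns to itself with probability $\bar\mu>0$. In either case the recurrent class has period one, so the induced chain is aperiodic. (Alternatively, one could invoke the standard aperiodicity transformation $\widetilde P=(1-\tau)I+\tau P$ for $\tau\in(0,1)$, which leaves the average cost and the set of optimal policies unchanged while rendering every chain aperiodic.)

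With unichain plus aperiodicity in hand, the RVI convergence result of \cite[Section~4.3]{bertsekas2007dynamic} gives that the span seminorm of $V^{(k)}-V^*$ tends to zero, so $V^{(k)}_{s_{\text{ref}}}\to\bar V=J_{\pi^*}$ and, for all sufficiently large $k$, the greedy actions computed in line~\ref{Algoline-pi} attain the minimum in \eqref{BertPro421}; by the first step these actions constitute an optimal deterministic policy. Hence, upon termination of the while-loop, Algorithm~\ref{RVI_algo} returns an arbitrarily accurate approximation of the optimal average AoI together with an associated optimal deterministic policy, which is the claim.

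The main obstacle I anticipate is not the invocation of Bertsekas's convergence theorem, which functions as a black box once its hypotheses are in place, but the clean verification that those hypotheses hold \emph{uniformly} over all deterministic policies: identifying a single recurrent, aperiodic state that works for every policy and correctly handling the boundary case $\gamma=1$ (with $\mu=1$ simultaneously excluded). Once the appropriate recurrent state is read off from the transition structure of Section~\ref{sec_1P_control}, the remaining steps are routine.
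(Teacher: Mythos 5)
Your proposal is correct and follows essentially the same route as the paper: it reduces the claim to showing that every deterministic policy induces a unichain, aperiodic Markov chain, reuses the unichain property from the proof of Lemma~\ref{W_acc_con_1}, and establishes aperiodicity via positive self-transition probabilities at the very same recurrent states $(\bar\Delta,1,0,\star)$ (probability $\bar\gamma$ when $\gamma<1$) and $(\bar\Delta,0,1,\bar\Delta)$ (probability $\bar\mu$ when $\gamma=1$, $\mu<1$), before invoking the standard RVI convergence result of Bertsekas. The only difference is that you spell out the justification for the self-loops and the role of the Bellman equation in slightly more detail than the paper does.
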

\begin{proof}
According to \cite[Page~209]{bertsekas2007dynamic}, it is sufficient to show that the Markov chain induced by every deterministic  policy is unichain and aperiodic. In the proof of Lemma~\ref{W_acc_con_1}, we showed that any deterministic policy induces a unichain Markov chain. According to   \cite[Exercise~4.1]{Gallager1996}, since the recurrent states  $(\bar\Delta,1,0,\star)$ (for ${\gamma\ne1}$ and ${0<\mu\le1}$)
and  $(\bar\Delta,0,1,\bar\Delta)$ (for ${\gamma=1}$ and ${0<\mu<1}$) have self-transitions, the Markov chain induced by every deterministic  policy is aperiodic.
\end{proof}

\subsection{$\text{2-Packet}$ System}\label{sec_2P_control}
Here, we extend the MDP construction of the $\text{1-Packet}$ system (described in Section \ref{sec_1P_control}) for the $\text{2-Packet}$ system and subsequently derive the optimal status update control policy.

For the $\text{2-Packet}$ system, the action space and the cost function are the same as for the $\text{1-Packet}$ system. However, the system state needs to be redefined to account for the two active requests.

\subsubsection{State}
Let ${\underline s^t\in\mathcal{\underline S}}$ denote the system state in slot $t$, where ${\mathcal{\underline S}}$ is the state space. We define the system state as ${\underline s^t=(\delta^t,e_{\mathrm{b}}^t,e_{\mathrm{s}}^t,E_{\mathrm{b}}^t,E_{\mathrm{s}}^t,\Delta^t_{\mathrm{b}},\Delta^t_{\mathrm{s}})}$ with the following seven elements: 
\begin{enumerate*}
\item $\delta^t$ is the AoI at the sink, 
\item ${e^t_{\mathrm{b}}\in\{0,1\}}$ is the occupancy indicator for the controller buffer,
\item ${e^t_{\mathrm{s}}\in\{0,1\}}$ is the occupancy indicator for the controller server,
\item ${E^t_{\mathrm{b}}\in\{0,1\}}$ is the occupancy indicator for the sampler buffer,
\item ${E^t_{\mathrm{s}}\in\{0,1\}}$ is the occupancy indicator for the sampler server, 
\item $\Delta^t_{\mathrm{b}}$ is the age of the status update packet at the sampler buffer, and 
\item $\Delta^t_{\mathrm{s}}$ is the age of the status update packet at the sampler server. 
\end{enumerate*}

\begin{remark}\label{remark_state_aware}
Having the four occupancy indicators $e^t_{\mathrm{b}}$, $e^t_{\mathrm{s}}$,  $E^t_{\mathrm{b}}$, and $E^t_{\mathrm{s}}$ in the state implies that the controller (i.e., the decision-maker) has global knowledge of the system occupancy. This consideration may not always be practical, and thus, the derived control policy establishes a performance bound for policies operating under local/imperfect knowledge.
\end{remark}

\subsubsection{Problem Formulation}
Given an initial state $\underline s^0$, the average AoI, obtained by following a policy $\pi$, is defined as 
\begin{align}
\underline J_{\pi}(\underline s^0)=\limsup_{T\rightarrow\infty}\dfrac{1}{T}\sum_{t=0}^{T-1}\mathbb{E}_{\pi}\left[\delta^t\mid\underline s^0\right].
\end{align}
Note that according to the 2-Packet system constraint, the set of admissible policies must satisfy that if there are two active requests in the system, i.e., ${(e_\mathrm{b}^t,e_\mathrm{s}^t)=(1,1)}$ or ${(e_\mathrm{s}^t,E_\mathrm{s}^t)=(1,1)}$ or ${(E_\mathrm{b}^t,E_\mathrm{s}^t)=(1,1)}$, then ${a^t=0}$. The problem of finding the optimal policy $\pi^*$ that provides the minimum long-term average AoI in the $\text{2-Packet}$ system is formulated as 
\begin{align}
(\textbf{PII})~\pi^*(\underline s^0)=\argmin_{\pi \in \Pi_\textrm{2-Packet}} \underline J_{\pi}(\underline s^0),
\end{align}
where $\Pi_\textrm{2-Packet}$ is the set of all admissible policies subject to the 2-Packet system constraint. Next, we recast problem (\textbf{PII}) as an MDP problem and solve it using the RVI algorithm.

\subsubsection{MDP Modeling for Problem (\textbf{PII})}\label{MDP Formulation 2}
The MDP of the $\text{2-Packet}$ system is defined by the tuple $(\mathcal{\underline S},\mathcal{A},\Pr(\underline s^{t+1}|\underline s^t,a^t),C(\underline s^t,a^t))$. The (immediate) cost function is defined as $C(\underline s^t,a^t)=\delta^{t+1}$ (similarly as for problem (\textbf{PI})). The transition probability $\Pr(\underline s'|\underline s,a)$ of moving from state 
$\underline s=(\delta,e_{\mathrm{b}},e_{\mathrm{s}},E_{\mathrm{b}},E_{\mathrm{s}},\Delta_{\mathrm{b}},\Delta_{\mathrm{s}})$
to the next state 
$\underline s'=(\delta',e_{\mathrm{b}}',e_{\mathrm{s}}',E_{\mathrm{b}}',E_{\mathrm{s}}',\Delta_{\mathrm{b}}',\Delta_{\mathrm{s}}')$ 
under an action $a$ is characterized as: 
\begin{subequations}
\begin{IEEEeqnarray}{lCl}
\mathrlap{\Pr(\underline s'|\underline s=(\delta,0,0,0,0,\star,\star),a)}&&\nn
&=&\begin{cases}
1, & a=0,~ \underline s'=(\Agelimit{\delta},0,0,0,0,\star,\star); \\
\gamma, & a=1,~\underline s'=(\Agelimit{\delta},0,0,0,1,\star,0);\\
\bar\gamma, & a=1,~ \underline s'=(\Agelimit{\delta},0,1,0,0,\star,\star);\\
0,& \text{otherwise.}
\end{cases}\\
\mathrlap{\Pr(\underline s'|\underline s=(\delta,1,1,0,0,\star,\star),a)}&&\nn
&=&\begin{cases}
\gamma, & a=0,~\underline s'=(\Agelimit{\delta},0,1,0,1,\star,0);\\
\bar\gamma, & a=0,~ \underline s'=(\Agelimit{\delta},1,1,0,0,\star,\star);\\
0,& \text{otherwise.}
\end{cases}\\
\mathrlap{\Pr(\underline s'|\underline s=(\delta,0,1,0,0,\star,\star),a)}&&\nn
&=&\begin{cases}
\gamma, & a=0,~\underline s'=(\Agelimit{\delta},0,0,0,1,\star,0);\\
\bar\gamma, & a=0,~ \underline s'=(\Agelimit{\delta},0,1,0,0,\star,\star);\\
\gamma, & a=1,~ \underline s'=(\Agelimit{\delta},0,1,0,1,\star,0);\\
\bar\gamma, & a=1,~ \underline s'=(\Agelimit{\delta},1,1,0,0,\star,\star);\\
0,& \text{otherwise.}
\end{cases}\\
\mathrlap{\Pr(\underline s'|\underline s=(\delta,0,1,0,1,\star,\Delta_{\mathrm{s}}),a)}&&\nn
&=&\begin{cases}
\bar\gamma\bar\mu, & a=0,~ \underline s'=(\Agelimit{\delta},0,1,0,1,\star,\Agelimit{\Delta_{\mathrm{s}}}); \\
\gamma\bar\mu,     & a=0,~ \underline s'=(\Agelimit{\delta},0,0,1,1,0,\Agelimit{\Delta_{\mathrm{s}}});\\
\bar\gamma\mu,     &a=0,~ \underline s'= (\Agelimit{\Delta_{\mathrm{s}}},0,1,0,0,\star,\star);\\
\gamma\mu,         &a=0,~ \underline s'=(\Agelimit{\Delta_{\mathrm{s}}},0,0,0,1,\star,0);\\
0,& \text{otherwise.}
\end{cases}\\
\mathrlap{\Pr(\underline s'|\underline s=(\delta,0,0,1,1,\Delta_{\mathrm{b}},\Delta_{\mathrm{s}}),a)}&&\nn
&=&\begin{cases}
\mu, &   a=0,~ \underline s'= (\Agelimit{\Delta_{\mathrm{s}}},0,0,0,1,\star,\Agelimit{\Delta_{\mathrm{b}}});\\
\bar\mu, & a=0,~ \underline s'=(\Agelimit{\delta},0,0,1,1,\Agelimit{\Delta_{\mathrm{b}}},\Agelimit{\Delta_{\mathrm{s}}});\\
0,& \text{otherwise.}
\end{cases}\IEEEeqnarraynumspace\\
\mathrlap{\Pr(\underline s'|\underline s=(\delta,0,0,0,1,\star,\Delta_{\mathrm{s}}),a)}&&\nn
\quad&=&\begin{cases}
\bar\mu, & a=0,~\underline s'=(\Agelimit{\delta},0,0,0,1,\star,\Agelimit{\Delta_{\mathrm{s}}}); \\
\mu, & a=0,~\underline s'= (\Agelimit{\Delta_{\mathrm{s}}},0,0,0,0,\star,\star);\\
\bar\gamma\bar\mu, & a=1,~\underline s'=(\Agelimit{\delta},0,1,0,1,\star,\Agelimit{\Delta_{\mathrm{s}}}); \\
\bar\gamma\mu, & a=1,~\underline s'=(\Agelimit{\Delta_{\mathrm{s}}},0,1,0,0,\star,\star);\\
\gamma\bar\mu, & a=1,~\underline s'=(\Agelimit{\delta},0,0,1,1,0,\Agelimit{\Delta_{\mathrm{s}}}); \\
\gamma\mu, & a=1,~\underline s'=(\Agelimit{\Delta_{\mathrm{s}}},0,0,0,1,\star,0);\\
0,& \text{otherwise}.
\end{cases}
\end{IEEEeqnarray}
\end{subequations}
 
Based on the above definitions, the MDP formulation of problem (\textbf{PII}) is formulated as
\begin{align}\label{MDP2}
\!\!\pi^*(\underline s^0)\!=\!\argmin_{ \pi  \in \Pi_\textrm{2-Packet}} \Bigl\{\limsup_{T\rightarrow\infty}\dfrac{1}{T}\sum_{t=0}^{T-1}\mathbb{E}\left[C(\underline s^t,a^t) \mid \underline s^0\right]\Bigr\}. 
\end{align}

\subsubsection{Optimal Policy}
First, similar to the proof of Lemma~\ref{W_acc_con_1} for the $\text{1-Packet}$ system, one can show that the weak accessibility condition holds for the $\text{2-Packet}$ MDP problem in \eqref{MDP2} for any $(\mu,\gamma)\in\{(\theta_1,\theta_2) \mid 0<\theta_1,\theta_2\le1,~(\theta_1,\theta_2)\ne(1,1)\}$. Here, the recurrent states are 
$(\bar\Delta,1,1,0,0,\star,\star)$ for ${\gamma\ne1}$ and ${0<\mu\le1}$, and 
$(\bar\Delta,0,0,1,1,\bar\Delta,\bar\Delta)$ for ${\gamma=1}$ and ${0<\mu<1}$.

By the weak accessibility condition, the optimal average cost $\underline J_{\pi^*}$ and optimal policy $\pi^*$ are derived by solving the following equation for each $\underline s\in\mathcal{\underline S}$ (cf. \eqref{BertPro421}):
\begin{align}\label{B_Eq_2}
\underline J_{\pi^*}+V^*_{\underline s}=\min_{a\in\mathcal{A}}\Bigl\{C(\underline s,a)+\sum_{\underline s'\in\mathcal{\underline S}}\Pr(\underline s'|\underline s,a)V^*_{\underline s'}\Bigr\}.
\end{align}
The RVI algorithm presented in Algorithm \ref{RVI_algo} is used to iteratively solve for $\underline J_{\pi^*}$ and $V^*_{\underline s}$ in \eqref{B_Eq_2}.

\section{Average AoI Analysis}\label{Average AoI Analysis}
In this section, we 
derive the average AoI expressions for the following three policies: \begin{enumerate*} \item a zero-wait policy for the 1-Packet system, hereinafter referred to as
the {\em \ZW{1}} policy,
\item a zero-wait policy for the 2-Packet system, hereinafter referred to as the \emph{\ZW{2}} policy
(Section \ref{sec_2P_ZW}), and \item a waiting-based policy for the 1-Packet system, hereinafter referred to as the \emph{\Wone{}} policy (Section \ref{sec_1P_Wait}).
\end{enumerate*}

\subsection{$\ZWone$: AoI Analysis}\label{sec_1P_ZW}
In this section, we derive the average AoI under the $\ZWone$ policy. Since the $\ZWone$ policy conforms to the \textit{zero-wait} serving discipline (also known as \textit{work-conserving} or \textit{just-in-time} policy \cite{8000687}), the controller sends a new request message immediately after receiving a status update packet. That is, if status update packet $i-1$ is received by the end of slot $t-1$, the controller generates request packet $i$ at the beginning of slot $t$, which enters an empty system. The AoI evolution under $\ZWone$ is illustrated in Fig.~\ref{Zero-W-1}.

To characterize the average AoI as given by \eqref{A_AoI_Main}, we need to calculate $\mathbb{E}[I_i]$, $\mathbb{E}[I_i^2]$, and $\mathbb{E}[I_iT_i]$.
As preliminaries, let $X_i$ be the random variable representing the required time to serve the $i$th request packet by the controller server. Similarly, let $Y_i$ be the random variable representing the required time to serve the $i$th status update packet by the sampler server. 
Because the service times of all request messages are stochastically identical 
and the service times of all status update packets are stochastically identical, ${X_i=^{\mathrm{st}}X}$ and ${Y_i=^{\mathrm{st}}Y}$ for all ${i=\{1,2,\ldots\}}$, and ${X\sim\mathrm{Geo}(\gamma)}$ and ${Y\sim\mathrm{Geo}(\mu)}$.  Now we delve into calculations of $\mathbb{E}[I_i]$, $\mathbb{E}[I_i^2]$, and $\mathbb{E}[I_iT_i]$.

\subsubsection{Derivation of $\mathbb{E}[I_i]$ and $\mathbb{E}[I^2_i]$}
As it can be seen in Fig.~\ref{Zero-W-1}, the 
interarrival time  between status update packets $i-1$ and $i$ at the sampler server is 
\begin{equation}
 I_i=Y_{i-1} + X_i. 
\end{equation} 
The mean and second moment of the interarrival time, $\mathbb{E}[I_i]$ and $\mathbb{E}[I_i^2]$, are 
\begin{align}
\mathbb{E}[I_i]&=\mathbb{E}[Y_{i-1}]+\mathbb{E}[X_{i}]=\dfrac{1}{\mu}+\dfrac{1}{\gamma},\label{I_Z_1}\\
\mathbb{E}[I^2_i]&=\mathbb{E}[(Y_{i-1}+X_{i})^2]\nn
&=\mathbb{E}[Y^2_{i-1}]+\mathbb{E}[X^2_{i}]+2\mathbb{E}[Y_{i-1}]\mathbb{E}[X_{i}]\label{I-2_Z_1a}\\
&=\dfrac{2-\mu}{\mu^2}+\dfrac{2-\gamma}{\gamma^2}+\dfrac{2}{\mu\gamma},
\label{I-2_Z_1}
\end{align}
where \eqref{I-2_Z_1a}
follows from independence 
of $X_{i}$ and $Y_{i-1}$. 

\subsubsection{Derivation of $\mathbb{E}[I_iT_i]$}
As it can be seen in Fig.~\ref{Zero-W-1}, the system time of packet $i$, $T_i$, equals $Y_i$, the service time of update packet $i$. Thus, 
\begin{align}
\mathbb{E}[I_iT_i]
&=\mathbb{E}[(Y_{i-1}+X_{i})Y_{i}]\nn
&\overset{(a)}=\mathbb{E}[Y_{i}]\mathbb{E}[Y_{i-1}+X_{i}]=\dfrac{1}{\mu}\left(\dfrac{1}{\mu}+\dfrac{1}{\gamma}\right)\label{C_Z_1},
\end{align}
where $(a)$ follows from independence of $Y_{i}$ and  $X_i,Y_{i-1}$.

\begin{figure}[t]
\centering
\includegraphics[width=.97\linewidth,trim = 0mm 1mm 0mm 0mm,clip]{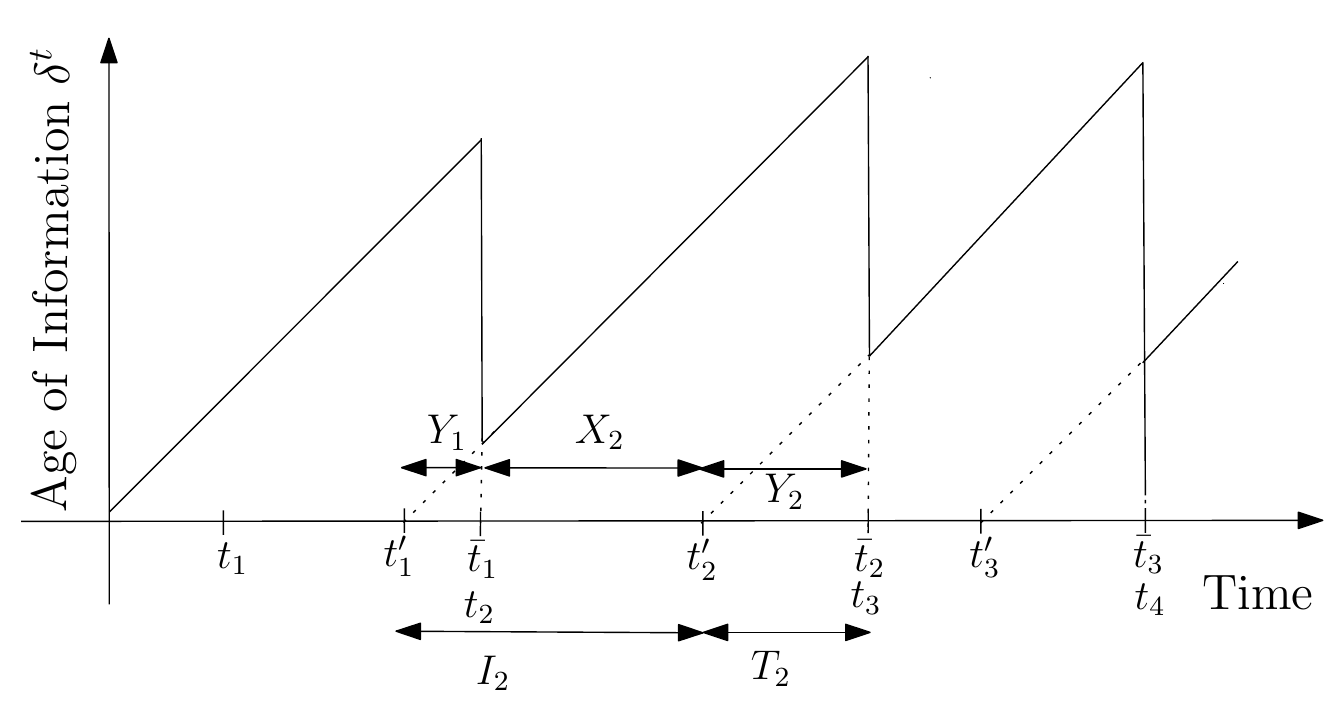}\vspace{-2mm}
\caption{AoI as a function of time under the $\ZWone$ policy.}  
\label{Zero-W-1}\vspace{-5mm}
\end{figure}

Substituting \eqref{I_Z_1}, \eqref{I-2_Z_1}, and \eqref{C_Z_1} into \eqref{A_AoI_Main} yields Theorem~\ref{Age_Zero-W-1}.  
\begin{theorem}\label{Age_Zero-W-1}
The average AoI under the $\ZWone$ policy is
\begin{align}\label{ZW1exp}
\Delta^{\mathrm{\ZWone}}
=\frac{2}{\mu}+\frac{\mu}{\gamma(\mu+\gamma)}-1.
\end{align}
\end{theorem}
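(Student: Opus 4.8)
The statement is a direct consequence of the general average-AoI formula \eqref{A_AoI_Main} together with the three moments already isolated in \eqref{I_Z_1}, \eqref{I-2_Z_1}, and \eqref{C_Z_1}, so the plan is essentially a substitution followed by algebraic simplification. The only structural facts needed are the zero-wait identities $I_i = Y_{i-1} + X_i$ and $T_i = Y_i$ — which hold because, under $\ZWone$, request $i$ enters an empty system immediately after update $i-1$ departs, so update $i$ experiences no queueing at either server — and the mutual independence of the geometric service times $X \sim \mathrm{Geo}(\gamma)$ and $Y \sim \mathrm{Geo}(\mu)$; both were established in the preamble to the theorem, so I would take them as given.

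The computation I would carry out: start from $\Delta^{\ZWone} = \frac{1}{\mathbb{E}[I_i]}\bigl(\tfrac12\mathbb{E}[I_i^2] + \mathbb{E}[I_iT_i]\bigr) - \tfrac12$ and insert $\mathbb{E}[I_i] = \tfrac1\mu + \tfrac1\gamma$, $\mathbb{E}[I_i^2] = \tfrac{2-\mu}{\mu^2} + \tfrac{2-\gamma}{\gamma^2} + \tfrac{2}{\mu\gamma}$, and $\mathbb{E}[I_iT_i] = \tfrac1\mu\bigl(\tfrac1\mu + \tfrac1\gamma\bigr)$. To keep the algebra transparent I would set $a = 1/\mu$ and $b = 1/\gamma$, so that $\mathbb{E}[I_i] = a+b$, $\tfrac12\mathbb{E}[I_i^2] = a^2 + ab + b^2 - \tfrac{a+b}{2}$, and $\mathbb{E}[I_iT_i] = a^2 + ab$; adding these, the bracketed numerator becomes $2a^2 + 2ab + b^2 - \tfrac{a+b}{2}$. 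Dividing by $a+b$ and combining the resulting $-\tfrac12$ with the standalone $-\tfrac12$ gives $\Delta^{\ZWone} = \frac{2a^2 + 2ab + b^2}{a+b} - 1$; the split $2a^2 + 2ab + b^2 = 2a(a+b) + b^2$ then yields $\Delta^{\ZWone} = 2a + \frac{b^2}{a+b} - 1$. Finally, substituting $a = 1/\mu$, $b = 1/\gamma$ and simplifying $\frac{b^2}{a+b} = \frac{1/\gamma^2}{(\mu+\gamma)/(\mu\gamma)} = \frac{\mu}{\gamma(\mu+\gamma)}$ reproduces \eqref{ZW1exp}.

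I do not expect any genuine obstacle: the probabilistic modeling has already been done, so what remains is pure bookkeeping. The one thing to watch is the arithmetic — correctly expanding the second-moment terms from $(Y_{i-1}+X_i)^2$ and performing the division by $a+b$ — and I would do a spot check of the final formula at an easy parameter value (for instance $\gamma = 1$, where $X \equiv 1$ and the expression collapses to $\tfrac2\mu + \tfrac{\mu}{1+\mu} - 1$) to catch any slip.
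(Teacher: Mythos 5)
Your proposal is correct and follows exactly the paper's route: the paper's entire proof is the one-line substitution of \eqref{I_Z_1}, \eqref{I-2_Z_1}, and \eqref{C_Z_1} into \eqref{A_AoI_Main}, and your explicit algebra (via $a=1/\mu$, $b=1/\gamma$) carries out that substitution correctly and arrives at \eqref{ZW1exp}.
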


\subsection{\ZW{2}: AoI Analysis}\label{sec_2P_ZW}
To evaluate the AoI of the $\ZW{2}$ system using  \eqref{A_AoI_Main}, we need to track the event $B_{i}$ that the sampler server is busy (serving update packet $i-1$)  when request message $i$ arrives at the sampler and update packet $i$ is generated. In particular, with $\Bbar_i$ denoting the complement of $B_i$,  the two-step partition 
$\set{\Bbar_{i-1}\Bbar_{i},\Bbar_{i-1}B_{i},B_{i-1}\Bbar_{i},B_{i-1}B_{i}}$ is needed to evaluate the AoI of the \ZW{2} system. The proof in Appendix~\ref{ZW2-proof} verifies the following result.
\begin{theorem}\label{thm:ZW2} The average AoI under the $\ZWtwo$ policy is
\begin{align}
 \Delta^{\ZWtwo}   &=\frac{1}{\gamma}+\frac{1}{\mu}-1
    +\frac{2\gamma^2\bar{\mu}}{\mu(\gamma\bar{\mu}(\gamma+\mu)+\mu^2)}.
\end{align}
\end{theorem}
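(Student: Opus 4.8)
The plan is to invoke the stationary average-AoI formula \eqref{A_AoI_Main}, so the task reduces to evaluating $\mathbb{E}[I_i]$, $\mathbb{E}[I_i^2]$, and $\mathbb{E}[I_iT_i]$ under $\ZWtwo$. The decisive first step is to express $I_i$ and $T_i$ purely in terms of the i.i.d.\ service times $X_j\sim\mathrm{Geo}(\gamma)$ (controller server) and $Y_j\sim\mathrm{Geo}(\mu)$ (sampler server). Under $\ZWtwo$ the system always carries exactly two active requests, so request $i$ is released to the controller at $t_i=\bar t_{i-2}$, and the two single-server FCFS queues obey $t'_i=\max(\bar t_{i-2},t'_{i-1})+X_i$ and $\bar t_i=\max(t'_i,\bar t_{i-1})+Y_i$. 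A key observation is that update $i-1$ enters the sampler server and request $i$ enters the controller server at the same instant $\max(t'_{i-1},\bar t_{i-2})$; hence the common $\max$ cancels upon subtraction and one gets the identities
\begin{align}
I_i=X_i+(Y_{i-2}-X_{i-1})^+,\qquad T_i=Y_i+(Y_{i-1}-X_i)^+ .\nonumber
\end{align}
Moreover $(Y_{i-2}-X_{i-1})^+>0$ holds exactly on $B_{i-1}$ and $(Y_{i-1}-X_i)^+>0$ exactly on $B_i$, which is precisely why the two-step partition $\{\Bbar_{i-1}\Bbar_i,\Bbar_{i-1}B_i,B_{i-1}\Bbar_i,B_{i-1}B_i\}$ is the natural conditioning: on $\Bbar_{i-1}$ request $i$ simply queues behind request $i-1$ in the controller buffer so $I_i=X_i$, and on $\Bbar_i$ update $i$ never waits so $T_i=Y_i$.

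From here I would compute the required (conditional) moments cell by cell, together with the cell probabilities. Since $\{B_{i-1},(Y_{i-2}-X_{i-1})^+\}$ depends only on $(X_{i-1},Y_{i-2})$ while $\{B_i,(Y_{i-1}-X_i)^+\}$ depends only on $(X_i,Y_{i-1})$, the two ``blocks'' are independent; in particular $B_{i-1}$ and $B_i$ are independent, so the cell probabilities factor through $p_B\triangleq\Pr(B_i)=\Pr(X<Y)=\gamma\bar\mu/(1-\bar\gamma\bar\mu)$. All remaining ingredients are elementary geometric sums, and geometric memorylessness collapses most of them: e.g.\ $(Y-X)^+$ conditioned on $\{X<Y\}$ is again $\mathrm{Geo}(\mu)$, giving $\mathbb{E}[(Y-X)^+]=p_B/\mu$ and $\mathbb{E}[((Y-X)^+)^2]=(2-\mu)p_B/\mu^2$; the only moment genuinely coupling the two blocks, $\mathbb{E}[X_i(Y_{i-1}-X_i)^+]$, is a one-line double sum. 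Note also that $\{(I_i,T_i)\}$ is a fixed finite-window function of i.i.d.\ service times, hence stationary and ergodic, so the hypothesis behind \eqref{A_AoI_Main} is automatic.

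Finally I would assemble $\mathbb{E}[I_i]=\mathbb{E}[X]+\mathbb{E}[(Y-X)^+]$, $\mathbb{E}[I_i^2]$ (the square of a sum of two independent parts, so only one cross term), and $\mathbb{E}[I_iT_i]$ (expanding the product into the one ``hard'' term $\mathbb{E}[X_i(Y_{i-1}-X_i)^+]$ plus three that factor because the remaining variables are independent), substitute into \eqref{A_AoI_Main}, and simplify. I expect two sticking points. First, getting the queueing recursion right: the feature separating $\ZWtwo$ from $\ZWone$ is that the controller buffer is occupied precisely on $\Bbar_{i-1}$, and any miscounting of where the two active requests reside would corrupt the identities for $I_i,T_i$. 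Second, the final algebra: conditioning on $B_i$ size-biases $X_i$, which appears in both $I_i$ and $T_i$, so the conditional moments must be tracked carefully before everything telescopes to the stated closed form, whose denominator $\gamma\bar\mu(\gamma+\mu)+\mu^2$ equals $\mu(1-\bar\gamma\bar\mu)+\gamma^2\bar\mu$.
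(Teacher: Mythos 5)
Your proposal is correct and follows essentially the same route as the paper: the identities $I_i=X_i+(Y_{i-2}-X_{i-1})^+$ and $T_i=Y_i+(Y_{i-1}-X_i)^+$ obtained from the max-cancellation in the Lindley recursions are exactly the closed-form packaging of the paper's four-case enumeration over $\{\Bbar_{i-1}\Bbar_i,\Bbar_{i-1}B_i,B_{i-1}\Bbar_i,B_{i-1}B_i\}$, with $(Y_{i-1}-X_i)^+$ playing the role of the residual $R_{i-1}\mathds{1}_{B_i}$. The remaining ingredients you identify (independence of the two blocks, $\pb=\gamma\bar\mu/(\mu+\gamma\bar\mu)$, the conditional geometric residual, and the size-biased $\E{X_i|B_i}$ entering the one coupled term $\E{X_i(Y_{i-1}-X_i)^+}$) coincide with the paper's Lemma~\ref{ZW2-XB-props} and its moment computations, so the assembly yields the stated expression.
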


By applying algebraic manipulations to the difference ${\Delta^{\mathrm{\ZWtwo}}-\Delta^{\mathrm{\ZWone}}}$, we obtain the following corollary that specifies the ranges of service time parameters $\gamma$ and $\mu$ for which $\ZWtwo$ outperforms the $\ZWone$. 
\begin{corollary}\label{Z_W_2&Z_W_1}
The $\ZWtwo$ policy outperforms the $\ZWone$ policy, i.e.,  ${\Delta^{\mathrm{\ZWtwo}}\le\Delta^{\mathrm{\ZWone}}}$, if the service rates satisfy
\begin{subequations}
\begin{align}
\label{Compa_W1_W2}
\mu &\ge\frac{\gamma\big(\bar\gamma+ \sqrt{(\gamma + 1)^2 + 4}\big)}{2(\gamma + 1)}\quad \text{or} \\
\gamma &\le \frac{\mu\big(-\bar\mu+ \sqrt{\bar\mu(5-\mu)}\big)}{2+\bar\mu}.
\end{align}
\end{subequations}
\end{corollary}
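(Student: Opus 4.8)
The plan is to work directly from the closed forms in Theorem~\ref{Age_Zero-W-1} and Theorem~\ref{thm:ZW2} and examine the sign of $D(\gamma,\mu)\triangleq\Delta^{\ZWtwo}-\Delta^{\ZWone}$. Substituting the two expressions, the three ``elementary'' terms $\tfrac1\gamma-\tfrac1\mu-\tfrac{\mu}{\gamma(\gamma+\mu)}$ collapse (using $(\mu-\gamma)(\mu+\gamma)=\mu^2-\gamma^2$) to $-\gamma/(\mu(\gamma+\mu))$, so that, after putting everything over the common denominator $\mu(\gamma+\mu)\bigl(\gamma\bar\mu(\gamma+\mu)+\mu^2\bigr)$,
\[
D(\gamma,\mu)=\frac{\gamma}{\mu}\cdot\frac{g(\gamma,\mu)}{(\gamma+\mu)\bigl(\gamma\bar\mu(\gamma+\mu)+\mu^2\bigr)},\qquad g(\gamma,\mu)\triangleq\gamma\bar\mu(\gamma+\mu)-\mu^2 .
\]
On the admissible box $0<\gamma,\mu\le1$ the prefactor $\gamma/\mu$ and the denominator $(\gamma+\mu)\bigl(\gamma\bar\mu(\gamma+\mu)+\mu^2\bigr)$ are strictly positive, so $\Delta^{\ZWtwo}\le\Delta^{\ZWone}$ if and only if $g(\gamma,\mu)\le0$, and the task reduces to locating the zero set of the quadratic $g$.

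I would then obtain the two stated thresholds by treating $g$ as a quadratic in each variable in turn. Viewed in $\mu$, $g=-(\gamma+1)\mu^2+\gamma\bar\gamma\,\mu+\gamma^2$ has negative leading coefficient, hence $g\le0$ holds outside the interval bounded by its roots; using $\bar\gamma^2+4(\gamma+1)=(\gamma+1)^2+4$ to clean up the discriminant, the larger root is precisely the right-hand side of \eqref{Compa_W1_W2} while the smaller root is negative, giving $g\le0\iff\mu\ge\tfrac{\gamma(\bar\gamma+\sqrt{(\gamma+1)^2+4})}{2(\gamma+1)}$. Viewed in $\gamma$, $g=\bar\mu\gamma^2+\mu\bar\mu\,\gamma-\mu^2$; for $\mu<1$ this opens upward, so $g\le0$ between its roots, the negative root is discarded, and the simplification $\bar\mu^2+4\bar\mu=\bar\mu(5-\mu)$ produces the second displayed bound. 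The boundary case $\mu=1$ (so $\bar\mu=0$) is checked separately: there $g=-1<0$, i.e.\ $\ZWtwo$ is strictly better, in agreement with both inequalities.

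The algebra is routine, so the obstacles are only organizational: (i) carrying out the cancellations that bring $D$ to the compact form above, in particular confirming the collapse $\tfrac1\gamma-\tfrac1\mu-\tfrac{\mu}{\gamma(\gamma+\mu)}=-\tfrac{\gamma}{\mu(\gamma+\mu)}$ and that the surviving numerator is exactly $g$; (ii) verifying strict positivity of the common denominator throughout the admissible box, including the boundaries $\gamma=1$ or $\mu=1$ with $(\gamma,\mu)\ne(1,1)$, so that $\operatorname{sign}(D)=\operatorname{sign}(g)$; and (iii) checking in each quadratic that the ``other'' root lies outside $(0,1]$, so it may be dropped and the threshold stated cleanly. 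These steps are elementary but all must be in place to make the stated ``if'' (and in fact ``iff'') rigorous.
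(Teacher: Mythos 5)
Your approach is exactly the paper's: the paper offers no written proof beyond ``applying algebraic manipulations to the difference $\Delta^{\ZWtwo}-\Delta^{\ZWone}$,'' and your reduction of that difference to $\frac{\gamma}{\mu}\cdot\frac{g(\gamma,\mu)}{(\gamma+\mu)(\gamma\bar\mu(\gamma+\mu)+\mu^2)}$ with $g(\gamma,\mu)=\gamma\bar\mu(\gamma+\mu)-\mu^2$ is correct, as is the positivity of the prefactor and denominator and the treatment of $g$ as a quadratic in $\mu$, whose larger root matches \eqref{Compa_W1_W2} exactly.

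One point deserves correction rather than the claim that the quadratic formula ``produces the second displayed bound.'' Solving $\bar\mu\gamma^2+\mu\bar\mu\gamma-\mu^2=0$ gives the positive root $\gamma_+=\frac{\mu\bigl(-\bar\mu+\sqrt{\bar\mu(5-\mu)}\bigr)}{2\bar\mu}$, with denominator $2\bar\mu$, whereas the corollary prints $2+\bar\mu$. These are not equal: at $\mu=1/2$ the true boundary is $\gamma=1/2$ (consistent with the first inequality, which at $\gamma=1/2$ gives the threshold $\mu\ge1/2$), while the printed expression evaluates to $0.2$. So the second clause as printed is a strictly smaller region --- still a valid sufficient condition under the ``or,'' but not the exact root and not an equivalent restatement of the first clause; it appears to be a typo in the paper. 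Your derivation is the correct one, and you should state the $2\bar\mu$ form (or flag the discrepancy) rather than assert agreement with the displayed formula.
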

The  specified regions 
are illustrated in Fig. \ref{ZW_1P_2P_region}.
As one observation from \eqref{Compa_W1_W2}, when $\mu\ge \frac{\sqrt{2}}{2}\approx0.7071$, we have $\Delta^{\mathrm{\ZWtwo}}\le\Delta^{\mathrm{\ZWone}}$ for any value of $\gamma$.

\begin{figure}[t]
\centering
\includegraphics[width=.9\linewidth,trim = 0mm 0mm 0mm 0mm,clip]{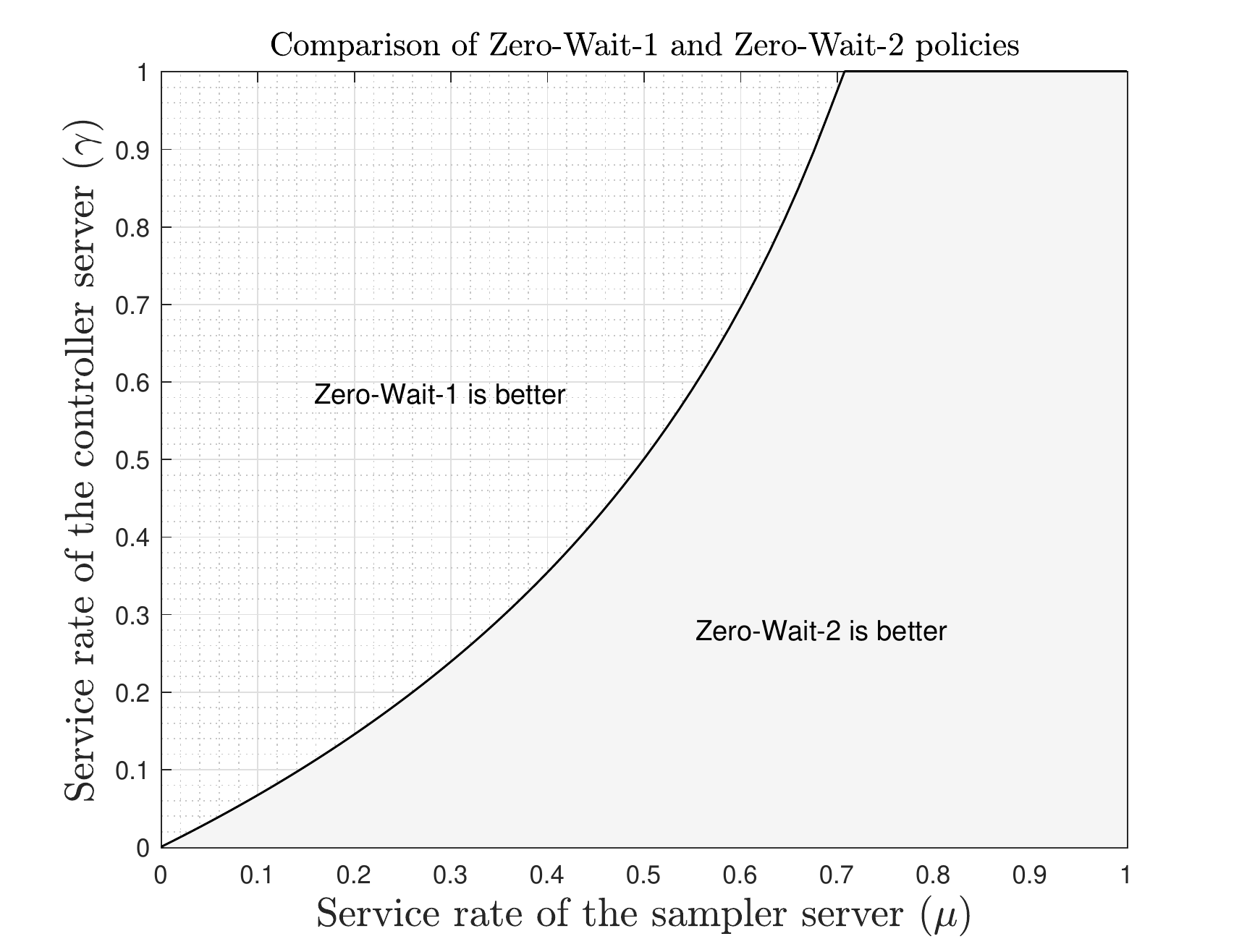}
\caption{Average AoI performance comparison between the $\ZWone$ policy and the $\ZWtwo$ policy with respect to different service rate pairs $(\mu,\gamma)$.}\vspace{-2mm}  
\label{ZW_1P_2P_region}\vspace{-3mm} 
\end{figure}

\subsection{$\Wone$: AoI Analysis}\label{sec_1P_Wait}
In this section, we derive the average AoI under the $\Wone$ policy. Differently to the zero-wait policies, when the controller receives status update packet ${i-1}$, the $\Wone$ policy permits the controller to  assign a \textit{waiting time} prior to sending request message $i$. Such appropriately placed waiting time has potential to lead to lower average AoI than that of the zero-wait policy\footnote{This is not unforeseen: a waiting-based policy, instead of a zero-wait policy, has been shown to realize optimal status updating in various other systems, as detailed in Remark~\ref{remark_beta_policy}.}, 
as will be shown by our subsequent analysis. 

Formally, let ${Z_i}$ be a random variable for the waiting time assigned to request message $i$ after receiving update ${i-1}$. In this work, in line with \cite{7283009}, we consider that ${Z_i=f(Y_{i-1})}$, i.e., the waiting time is a deterministic function of the service time of status update packet $i-1$ (at the sampler server). AoI evolution under the $\Wone$ policy is illustrated in Fig.~\ref{Wait-1}. 

\begin{figure}[t]
\centering
\includegraphics[width=.97\linewidth,trim = 0mm 1mm 0mm 0mm,clip]{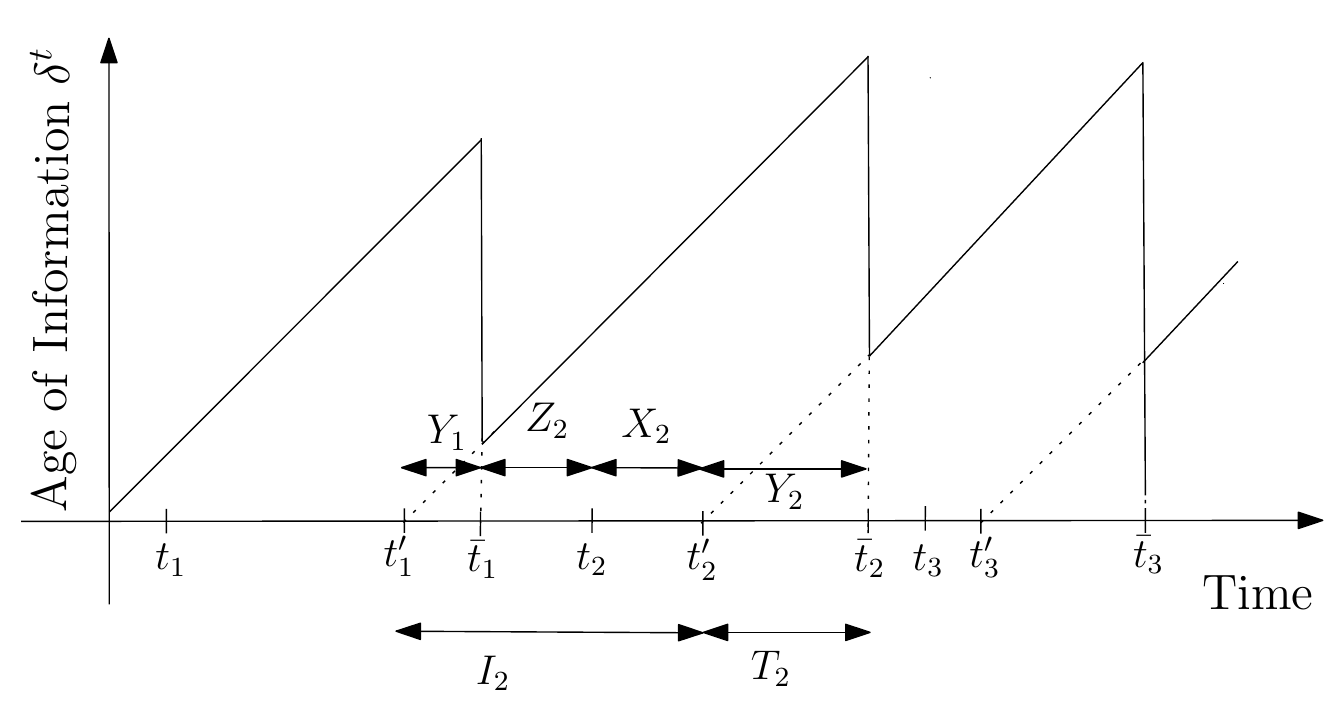}\vspace{-2mm}
\caption{AoI as a function of time under the $\Wone$ policy.}
\vspace{-5mm}
\label{Wait-1}
\end{figure}

We restrict to  \textit{threshold-type} waiting functions of form 
\begin{align}\label{threshold-f}
Z_i=f(Y_{i-1})=(\beta-Y_{i-1})^+,
\end{align} 
where $\beta$ is a positive integer;  ${\beta=1}$ corresponds to the $\ZWone$ policy. While we have not been able to prove the optimality of this waiting function for our considered system, the structure of an optimal policy (obtained by the MDP approach), as visualized in Section \ref{Numerical Results}, has such threshold-type structure.
   
\begin{remark}\label{remark_beta_policy}
This is consistent with prior works on the optimality of waiting policies for various  status updating systems. The threshold-type waiting function
\eqref{threshold-f} with appropriately adjusted $\beta$, was  proven to be the optimal waiting function in a stop-and-wait continuous-time system with a non-delayed control channel in \cite[Theorem 1]{7283009} and \cite[Theorem 4]{8000687} and in its discrete-time equivalent in \cite[Theorem~3]{8764465}. Under a delayed control channel, the optimality of the threshold-type waiting in continuous-time systems has been shown in \cite[Proposition 2]{9540757} and \cite[Theorem 1]{arxiv02929}. 
\end{remark}

To characterize the average AoI as given by \eqref{A_AoI_Main}, we need to calculate $\mathbb{E}[I_i]$, $\mathbb{E}[I_i^2]$, and $\mathbb{E}[I_iT_i]$. As it can be seen in Fig.~\ref{Wait-1}, the interarrival time of status update packets $i-1$ and $i$ is the summation of the service time of status update packet $i-1$, $Y_{i-1}$, the waiting time of request message $i$, $Z_i$, and the service time of request message $i$, $X_{i}$, i.e., ${I_i=Y_{i-1} + Z_i + X_i}$. The mean interarrival time, $\mathbb{E}[I_i]$, is given as
\begin{align}\label{I_Z_1_Wait1}
\mathbb{E}[I_i]=\mathbb{E}[Y_{i-1}]+\mathbb{E}[Z_{i}]+\mathbb{E}[X_{i}]=\dfrac{1}{\mu}+\mathbb{E}[Z_{i}]+\dfrac{1}{\gamma},
\end{align}
where $\mathbb{E}\left[Z_i \right]$ is given by the following lemma.
\begin{lemma}\label{lemma_EZ}
The mean waiting time $\mathbb{E}\left[Z_i \right]$ is given as
\begin{equation}
\begin{array}{ll}
\mathbb{E}\left[Z_i \right] = \displaystyle\frac{1}{\mu}\left( \beta\mu + \bar\mu^\beta -1 \right).
\end{array}
\end{equation}
\end{lemma}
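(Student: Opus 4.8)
The plan is to treat $\mathbb{E}[Z_i]$ as the expectation of a bounded, nonnegative, integer-valued random variable and evaluate it by a short geometric-series computation. Recall that $Y_{i-1}\sim^{\mathrm{st}}Y\sim\mathrm{Geo}(\mu)$ is supported on $\{1,2,\ldots\}$ with $\Pr(Y=k)=\mu\bar\mu^{k-1}$ and tail $\Pr(Y\ge k)=\bar\mu^{k-1}$. Since $Z_i=(\beta-Y_{i-1})^+$, it takes values in $\{0,1,\ldots,\beta-1\}$, being strictly positive precisely when $Y_{i-1}\le\beta-1$. I would first record the tail-sum identity $\mathbb{E}[Z_i]=\sum_{m=1}^{\beta-1}\Pr(Z_i\ge m)$, valid because $Z_i\le\beta-1$ almost surely.

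Next I would rewrite each tail event. For $1\le m\le\beta-1$ we have $Z_i\ge m\iff \beta-Y_{i-1}\ge m\iff Y_{i-1}\le\beta-m$, so
\begin{equation}
\Pr(Z_i\ge m)=\Pr(Y_{i-1}\le\beta-m)=1-\Pr(Y_{i-1}\ge\beta-m+1)=1-\bar\mu^{\beta-m}.
\end{equation}
Summing over $m$ and re-indexing with $\ell=\beta-m$ gives $\mathbb{E}[Z_i]=(\beta-1)-\sum_{\ell=1}^{\beta-1}\bar\mu^{\ell}$.

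The one genuinely computational step is the finite geometric sum $\sum_{\ell=1}^{\beta-1}\bar\mu^{\ell}=\bar\mu\,\frac{1-\bar\mu^{\beta-1}}{1-\bar\mu}=\frac{\bar\mu-\bar\mu^{\beta}}{\mu}$; after substituting this, the closed form follows from the algebraic simplification $\mu(\beta-1)-\bar\mu=\beta\mu-\mu-(1-\mu)=\beta\mu-1$, yielding
\begin{equation}
\mathbb{E}[Z_i]=(\beta-1)-\frac{\bar\mu-\bar\mu^{\beta}}{\mu}=\frac{\mu(\beta-1)-\bar\mu+\bar\mu^{\beta}}{\mu}=\frac{1}{\mu}\bigl(\beta\mu+\bar\mu^{\beta}-1\bigr),
\end{equation}
as claimed. (As a sanity check I would verify the edge case $\beta=1$, where the sum is empty and the formula returns $\mathbb{E}[Z_i]=0$, consistent with the $\ZWone$ policy.) An equivalent route is the direct evaluation $\mathbb{E}[Z_i]=\sum_{k=1}^{\beta-1}(\beta-k)\mu\bar\mu^{k-1}$, which reduces to the same geometric sums after differentiating/telescoping; either way there is no real obstacle here — the only thing to watch is the indexing, i.e., that $\mathrm{Geo}(\mu)$ starts at $1$ rather than $0$, which shifts the exponents in the tail probabilities.
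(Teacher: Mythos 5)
Your proof is correct, and it takes a mildly different route from the paper's. The paper evaluates $\mathbb{E}[Z_i]$ directly as the PMF-weighted sum $\sum_{j=1}^{\beta-1}(\beta-j)\Pr(Y_{i-1}=j)$, splits it into $\beta\sum_j \bar\mu^{j-1}\mu-\sum_j j\bar\mu^{j-1}\mu$, and invokes two closed-form series identities, $\sum_{j=1}^{m}\alpha^j$ and $\sum_{j=1}^{m}j\alpha^j$. You instead use the tail-sum identity $\mathbb{E}[Z_i]=\sum_{m\ge1}\Pr(Z_i\ge m)$, translate each tail event into a condition on $Y_{i-1}$, and finish with a single plain geometric sum. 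The two computations are of course equivalent, but your decomposition is slightly leaner: it sidesteps the $\sum j\alpha^j$ identity entirely and exploits the clean geometric tail $\Pr(Y\ge k)=\bar\mu^{k-1}$, which also makes the $\beta=1$ sanity check immediate. Your caution about the support of $\mathrm{Geo}(\mu)$ starting at $1$ is exactly the right thing to watch, and your indexing is consistent with the paper's convention. No gaps.
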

\begin{proof} 
See Appendix~\ref{appendix_EZ}.
\end{proof}

The second moment of the interarrival time, $\mathbb{E}[I_i^2]$, is derived as
\begin{align}\label{I-2_Z_1_Wait1}
\mathbb{E}[I^2_i]&=\mathbb{E}[(Y_{i-1} + Z_i + X_i)^2]\nn
&=\mathbb{E}[(Y_{i-1} + X_i)^2] + 2\mathbb{E}[Z_i(Y_{i-1} + X_i)] + \mathbb{E}[ Z_i^2 ]\nn
&\overset{(a)}{=}\mathbb{E}[Y^2] + \mathbb{E}[X^2] + 2\mathbb{E}[X]\mathbb{E}[Y] + 2\mathbb{E}[Z_iY_{i-1}] \nn
&\qquad+ 2\mathbb{E}[Z_i]\mathbb{E}[X_i]  + \mathbb{E}[ Z_i^2 ],
\end{align}
where equality $(a)$ comes from the fact that $X_{i}$ and $Y_{i-1}$ are independent and $Z_{i}$ and $X_{i}$ are independent; the expectations $\mathbb{E}[Z_iY_{i-1}]$ and $\mathbb{E}\left[Z_i^2 \right]$ are given by the following lemmas.
\begin{lemma}\label{lemma_EZY}
The expectation $\mathbb{E}\left[Z_iY_{i-1} \right]$ is given as
\begin{align}
\mathbb{E}\left[Z_iY_{i-1} \right]
&=\frac{1}{\mu^2}\big( \mu + \beta\mu +\bar\mu^\beta(2-\mu+\beta\mu) - 2\big)
\end{align}
\end{lemma}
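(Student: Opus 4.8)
The plan is to evaluate $\mathbb{E}[Z_iY_{i-1}]$ directly from the law of $Y_{i-1}\sim\mathrm{Geo}(\mu)$, using that the threshold form \eqref{threshold-f} localizes the product onto a finite set of values. First, since $Z_i=(\beta-Y_{i-1})^+$ vanishes whenever $Y_{i-1}\ge\beta$, the product $Z_iY_{i-1}$ is nonzero only when $Y_{i-1}\in\{1,\dots,\beta-1\}$, on which it equals $(\beta-Y_{i-1})Y_{i-1}$. Hence, with $\Pr(Y_{i-1}=y)=\mu\bar\mu^{y-1}$,
\[
\mathbb{E}[Z_iY_{i-1}]=\mu\sum_{y=1}^{\beta-1}(\beta-y)\,y\,\bar\mu^{y-1}
=\mu\beta\sum_{y=1}^{\beta-1}y\,\bar\mu^{y-1}-\mu\sum_{y=1}^{\beta-1}y^2\,\bar\mu^{y-1}.
\]

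Second, I would substitute the standard closed forms for the truncated arithmetic--geometric sums $\sum_{y=1}^{n}y q^{y-1}$ and $\sum_{y=1}^{n}y^2 q^{y-1}$ (each obtained by differentiating the finite geometric sum $\sum_{y=1}^{n}q^y$ once, resp.\ twice) with $q=\bar\mu$ and $n=\beta-1$. Writing $\bar\mu=1-\mu$ turns the $1/(1-q)^k$ factors into $1/\mu^k$ factors; collecting terms, the $\bar\mu^{\beta-1}$ contributions cancel and the remaining expression collapses to $\frac{1}{\mu^2}\big(\mu+\beta\mu+\bar\mu^\beta(2-\mu+\beta\mu)-2\big)$, which is the claim. (As a sanity check, $\beta=1$ gives the empty sum $0$, consistent with $Z_i=0$ under $\ZWone$.)

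An alternative route I would keep in reserve avoids the truncated sums entirely: on the event $\{Z_i\neq0\}$ one has $Y_{i-1}=\beta-Z_i$, while on $\{Z_i=0\}$ the product $Z_iY_{i-1}$ is zero, so \emph{identically} $Z_iY_{i-1}=Z_i(\beta-Z_i)=\beta Z_i-Z_i^2$. Taking expectations reduces the claim to $\mathbb{E}[Z_i]$ (given by Lemma~\ref{lemma_EZ}) and $\mathbb{E}[Z_i^2]$, the latter being computed by the same truncated-geometric-sum manipulation as in Appendix~\ref{appendix_EZ}. Either way, the only real obstacle is the algebraic bookkeeping: keeping track of the $\bar\mu^{\beta-1}$, $\bar\mu^{\beta}$, and $\bar\mu^{\beta+1}$ terms, verifying that the $\bar\mu^{\beta-1}$ coefficient vanishes, and confirming that the non-exponential part simplifies to $(\beta+1)\mu-2$; there is no conceptual difficulty.
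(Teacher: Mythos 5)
Your main argument is exactly the paper's proof: restrict the sum to $Y_{i-1}\in\{1,\dots,\beta-1\}$ where $Z_iY_{i-1}=(\beta-Y_{i-1})Y_{i-1}$, split into $\beta\sum j\bar\mu^{j-1}\mu-\sum j^2\bar\mu^{j-1}\mu$, and apply the closed-form truncated arithmetic--geometric sums (the paper's identities \eqref{series_j_alpha_j_low} and \eqref{series_j2_alpha_j_low}), so the proposal is correct and essentially identical. Your reserve route via the pointwise identity $Z_iY_{i-1}=\beta Z_i-Z_i^2$ also checks out against Lemmas~\ref{lemma_EZ} and \ref{lemma_EZ2}, but it is not needed.
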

\begin{proof} 
See Appendix~\ref{appendix_EZY}.
\end{proof}
\begin{lemma}\label{lemma_EZ2}
The second moment of the waiting time, $\mathbb{E}\left[Z_i^2 \right]$, is given as
\begin{equation}
\begin{array}{ll}
\mathbb{E}\left[Z_i^2 \right]&\hspace{-2mm} \!=\! \displaystyle-\frac{1}{\mu^2}\big(\mu + 2\beta\mu + \bar\mu^\beta(2-\mu)- \beta^2\mu^2 - 2\big).
\end{array}
\end{equation}
\end{lemma}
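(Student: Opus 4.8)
The plan is to avoid the brute-force evaluation of $\sum_{k\ge1}(\beta-k)^2\,\mu\bar\mu^{k-1}$ and instead exploit a pointwise algebraic identity for $Z_i^2$ that reduces the second moment to quantities already in hand. Recall $Z_i=(\beta-Y_{i-1})^+$. On the event $\{Y_{i-1}<\beta\}$ we have $Z_i=\beta-Y_{i-1}$, hence $Z_i^2=Z_i(\beta-Y_{i-1})=\beta Z_i-Z_iY_{i-1}$; on the complementary event $\{Y_{i-1}\ge\beta\}$ both $Z_i$ and the right-hand side vanish. Therefore the identity $Z_i^2=\beta Z_i-Z_iY_{i-1}$ holds surely, and taking expectations gives $\mathbb{E}[Z_i^2]=\beta\,\mathbb{E}[Z_i]-\mathbb{E}[Z_iY_{i-1}]$.

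The remaining step is a routine substitution of the closed forms from Lemma~\ref{lemma_EZ} and Lemma~\ref{lemma_EZY}, followed by simplification. Writing $\beta\,\mathbb{E}[Z_i]=\frac{1}{\mu^2}\bigl(\beta^2\mu^2+\beta\mu\bar\mu^\beta-\beta\mu\bigr)$ and subtracting $\mathbb{E}[Z_iY_{i-1}]=\frac{1}{\mu^2}\bigl(\mu+\beta\mu+\bar\mu^\beta(2-\mu+\beta\mu)-2\bigr)$, the $\beta\mu\bar\mu^\beta$ terms cancel and collecting the rest yields $\mathbb{E}[Z_i^2]=\frac{1}{\mu^2}\bigl(\beta^2\mu^2-2\beta\mu-\mu+(\mu-2)\bar\mu^\beta+2\bigr)$, which is exactly the claimed expression after factoring out $-1/\mu^2$.

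If a derivation independent of the two earlier lemmas is preferred, the alternative route is the tail-sum formula $\mathbb{E}[Z_i^2]=\sum_{m\ge1}(2m-1)\Pr(Z_i\ge m)$ combined with $\Pr(Z_i\ge m)=\Pr(Y_{i-1}\le\beta-m)=1-\bar\mu^{\beta-m}$ for $1\le m\le\beta$ (and $0$ otherwise, since $Z_i\le\beta-1$); the constant part telescopes to $(\beta-1)^2$ and the $\bar\mu$-weighted part reduces to a finite geometric plus an arithmetic–geometric sum. I expect the only mild obstacle along this second route to be the bookkeeping of the index shift $\ell=\beta-m$; the identity-based route has essentially no obstacle beyond the one-line case check on $\{Y_{i-1}\ge\beta\}$, which is why I would present that one.
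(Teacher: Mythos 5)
Your proof is correct, but it takes a genuinely different route from the paper's. The paper evaluates $\mathbb{E}[Z_i^2]=\sum_{j=1}^{\beta-1}(\beta-j)^2\Pr(Y_{i-1}=j)$ directly, expanding the square into three sums and invoking closed-form identities for $\sum_{j=1}^{m}\alpha^j$, $\sum_{j=1}^{m}j\alpha^j$, and $\sum_{j=1}^{m}j^2\alpha^j$ --- the last of which is the heaviest piece of bookkeeping in that appendix. You instead observe the surely-valid pointwise identity $Z_i^2=\beta Z_i-Z_iY_{i-1}$ (trivially checked on $\{Y_{i-1}<\beta\}$ and on $\{Y_{i-1}\ge\beta\}$, where both sides vanish), take expectations, and substitute the closed forms from Lemmas~\ref{lemma_EZ} and~\ref{lemma_EZY}; I verified the algebra and the cancellation of the $\beta\mu\bar\mu^\beta$ terms, and the result matches the stated expression. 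What your route buys is brevity and the complete avoidance of the $\sum j^2\alpha^j$ identity; what it costs is self-containment, since its correctness is inherited from the two preceding lemmas (in particular from Lemma~\ref{lemma_EZY}, whose own proof is where the $\sum j^2\alpha^j$ identity actually gets used in the paper). Your sketched tail-sum alternative is also sound but unnecessary given that the identity-based argument is complete.
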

\begin{proof} 
See Appendix~\ref{appendix_EZ2}.
\end{proof}

As it can be seen in Fig.~\ref{Wait-1}, the system time of packet $i$, $T_i$, is equal to the service time of status update packet $i$, $Y_i$. Thus, the expectation $\mathbb{E}[I_iT_i]$ is derived as
\begin{align}
\mathbb{E}[I_iT_i]&=\mathbb{E}[(Y_{i-1}+Z_{i}+X_{i})Y_{i}]\nn
 &\overset{(a)}{=} \mathbb{E}[Y_{i}]\mathbb{E}[Y_{i-1}+Z_{i}+X_{i}]\nn
 &=\!\dfrac{1}{\mu}\!\left(\dfrac{1}{\mu}+\mathbb{E}[Z_i]+\dfrac{1}{\gamma}\!\right),\label{C_Z_1_Wait1}
\end{align}
where equality $(a)$ follows from $Y_{i}$ being independent of $Y_{i-1}$, $Z_i$, and $X_{i}$. Substituting \eqref{I_Z_1_Wait1}, \eqref{I-2_Z_1_Wait1}, and \eqref{C_Z_1_Wait1} into \eqref{A_AoI_Main}, we have Theorem~\ref{Age_Wait-1}.  
\begin{theorem}\label{Age_Wait-1}
For waiting functions
${Z_i
=(\beta-Y_{i-1})^+}$, the average AoI under the $\Wone$ policy is given as 
\begin{align}
&\Delta^{\mathrm{\Wone}} = \nn
&\quad\frac{\beta\mu (-\beta\gamma + \gamma - 2) - 2(\beta\gamma + 1)}{2(\gamma(\bar\mu^\beta + \beta\mu) + \mu)} + \beta + \frac{1}{\gamma} + \frac{2}{\mu} - 1.\label{eq_Wait1_beta}
\end{align}
\end{theorem}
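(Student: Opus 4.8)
The plan is to obtain \eqref{eq_Wait1_beta} by direct substitution of the three moments $\mathbb{E}[I_i]$, $\mathbb{E}[I_i^2]$, and $\mathbb{E}[I_iT_i]$ — already assembled in \eqref{I_Z_1_Wait1}, \eqref{I-2_Z_1_Wait1}, \eqref{C_Z_1_Wait1} via Lemmas~\ref{lemma_EZ}, \ref{lemma_EZY}, \ref{lemma_EZ2} — into the master formula \eqref{A_AoI_Main}, followed by algebraic simplification. Before grinding, I would first exploit the structural fact from \eqref{C_Z_1_Wait1} that $\mathbb{E}[I_iT_i]=\tfrac1\mu\,\mathbb{E}[I_i]$, which collapses \eqref{A_AoI_Main} to
\begin{equation*}
\Delta^{\mathrm{\Wone}}=\frac{\tfrac12\mathbb{E}[I_i^2]+\tfrac1\mu\mathbb{E}[I_i]}{\mathbb{E}[I_i]}-\frac12=\frac{\mathbb{E}[I_i^2]}{2\,\mathbb{E}[I_i]}+\frac1\mu-\frac12,
\end{equation*}
so that the only nontrivial object left is the ratio $\mathbb{E}[I_i^2]/(2\mathbb{E}[I_i])$.

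Next I would evaluate the two moments. Plugging Lemma~\ref{lemma_EZ} into \eqref{I_Z_1_Wait1} telescopes to the clean form $\mathbb{E}[I_i]=\beta+\bar\mu^\beta/\mu+1/\gamma=(\gamma(\bar\mu^\beta+\beta\mu)+\mu)/(\gamma\mu)$, which already matches the denominator appearing in \eqref{eq_Wait1_beta}. For the second moment I would substitute the geometric moments $\mathbb{E}[X]=1/\gamma$, $\mathbb{E}[X^2]=(2-\gamma)/\gamma^2$, $\mathbb{E}[Y]=1/\mu$, $\mathbb{E}[Y^2]=(2-\mu)/\mu^2$ together with Lemmas~\ref{lemma_EZY} and \ref{lemma_EZ2} into \eqref{I-2_Z_1_Wait1}, collecting terms into a part that is a rational function of $\gamma,\mu$ and a part multiplying $\bar\mu^\beta$. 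Then, writing $\mathbb{E}[I_i^2]/(2\mathbb{E}[I_i])$ over the common denominator $2(\gamma(\bar\mu^\beta+\beta\mu)+\mu)=2\gamma\mu\,\mathbb{E}[I_i]$ and splitting off the polynomial part $\beta+1/\gamma+1/\mu-1/2$, the remainder should reduce to the single fraction $(\beta\mu(-\beta\gamma+\gamma-2)-2(\beta\gamma+1))/(2(\gamma(\bar\mu^\beta+\beta\mu)+\mu))$; adding back the $\tfrac1\mu-\tfrac12$ from the collapsed master formula then yields exactly \eqref{eq_Wait1_beta}.

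The main obstacle is purely the bookkeeping in that last step: $\mathbb{E}[I_i^2]$ carries $\bar\mu^\beta$ in three separate places (through $\mathbb{E}[Z_i]$, $\mathbb{E}[Z_iY_{i-1}]$, and $\mathbb{E}[Z_i^2]$), and one must verify that after subtracting $2\,\mathbb{E}[I_i]\,(\beta+1/\gamma+1/\mu-1/2)$ all $\bar\mu^\beta$ contributions cancel, so that the exponential term survives only inside the common denominator $\gamma\mu\,\mathbb{E}[I_i]$. The natural consistency check is that the coefficient of $\bar\mu^\beta$ in $\mathbb{E}[I_i^2]$ equals $\tfrac1\mu$ times the coefficient of $\bar\mu^\beta$ in $2\,\mathbb{E}[I_i]\,(\beta+1/\gamma+1/\mu-1/2)$; once that cancellation is confirmed, what remains is a finite rational-function identity in $\gamma$ and $\mu$ verified by direct expansion, and setting $\beta=1$ (so $\bar\mu^\beta=\bar\mu$) should recover Theorem~\ref{Age_Zero-W-1} as a final sanity check.
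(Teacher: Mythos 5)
Your proposal is correct and follows essentially the same route as the paper, which likewise obtains Theorem~\ref{Age_Wait-1} by substituting \eqref{I_Z_1_Wait1}, \eqref{I-2_Z_1_Wait1}, and \eqref{C_Z_1_Wait1} into \eqref{A_AoI_Main} and simplifying. Your two organizing observations --- that $\mathbb{E}[I_iT_i]=\mathbb{E}[I_i]/\mu$ collapses the master formula, and that the $\bar\mu^\beta$ contributions must cancel outside the denominator $\gamma\mu\,\mathbb{E}[I_i]$ --- both check out and merely streamline the bookkeeping the paper leaves implicit.
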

Next, we provide an upper bound on the optimal $\beta$.
\begin{lemma}\label{lemma_beta_bound}
For a given service rate pair $(\gamma,\mu)$, $\beta^*$, the optimal value of $\beta$ for the $\Wone$ policy, satisfies $\beta^*\le\beta^{\max}$, where 
\begin{IEEEeqnarray}{rCl}
\beta^{\text{max}}&=&\bigg\lfloor{\dfrac{2\gamma+\sqrt{(\mu^2+\gamma\mu-2\gamma)^2+8(\mu^2+\gamma\mu)}}{2(\mu^2+\gamma\mu)}-\frac{1}{2}}\bigg\rfloor.\IEEEeqnarraynumspace\label{betamax2}
 \end{IEEEeqnarray}
\end{lemma}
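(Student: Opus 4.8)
The plan is to study $\Delta^{\Wone}$ in \eqref{eq_Wait1_beta} as a function $g(\beta)$ of a \emph{real} variable $\beta\ge 1$, show that $g$ is eventually nondecreasing along the integers, and locate the turning point. Letting $\beta\to\infty$ in \eqref{eq_Wait1_beta}, the fractional term behaves like $-\beta/2$ (the $\bar\mu^{\beta}$ contribution vanishes), so $g(\beta)\to\infty$; hence a minimizing integer $\beta^*\ge 1$ exists, and it suffices to exhibit an integer $\beta^{\max}$ with $g(\beta)>g(\beta-1)$ for every integer $\beta\ge\beta^{\max}+1$, since that forces $\beta^*\le\beta^{\max}$.

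For the one-step difference I would use the renewal-reward form of the average AoI. From \eqref{A_AoI_Main} with $T_i=Y_i$ and $\mathbb{E}[I_iT_i]=\mathbb{E}[I_i]/\mu$ (cf.\ \eqref{C_Z_1_Wait1}), $g(\beta)=\mathbb{E}[I_i^2]/\big(2\mathbb{E}[I_i]\big)+1/\mu-1/2$, where under $\Wone$ we have $I_i=\max(Y_{i-1},\beta)+X_i$ with $X_i\sim\mathrm{Geo}(\gamma)$ and $Y_{i-1}\sim\mathrm{Geo}(\mu)$ independent. Using the coupling $\max(Y_{i-1},\beta+1)=\max(Y_{i-1},\beta)+\mathbf{1}\{Y_{i-1}\le\beta\}$ one gets that increasing $\beta$ by one changes $\mathbb{E}[I_i]$ by $1-\bar\mu^{\beta}$ and $\mathbb{E}[I_i^2]$ by $(2\beta+1+2/\gamma)(1-\bar\mu^{\beta})$, and a short manipulation then yields the clean equivalence
\begin{equation}\label{eq_prop_mono}
g(\beta+1)>g(\beta)\quad\Longleftrightarrow\quad \mu^2(\beta+1)\Big(\beta+\tfrac{2}{\gamma}\Big)>2\,\bar\mu^{\,\beta+1}.
\end{equation}
(The same identity can be obtained, more laboriously, by clearing denominators directly in \eqref{eq_Wait1_beta}.) In \eqref{eq_prop_mono} the left side is increasing in $\beta$ and the right side is decreasing, so the inequality holds for all $\beta$ past a single crossover.

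It remains to bound that crossover by $\beta^{\max}$. Replacing $\beta$ by $\beta-1$ in \eqref{eq_prop_mono}, it suffices to establish $\mu^2\beta(\beta-1+2/\gamma)>2\bar\mu^{\beta}$ for all real $\beta\ge\rho$, where $\rho$ is the positive root of $(\mu^2+\gamma\mu)\beta^2+(\mu^2+\gamma\mu-2\gamma)\beta-2=0$; indeed solving this quadratic gives $\rho$ equal to the bracketed quantity in \eqref{betamax2}, so $\beta^{\max}:=\lfloor\rho\rfloor$, and then every integer $\beta\ge\beta^{\max}+1$ exceeds $\rho$. Because the left side is increasing and the right side decreasing in $\beta$, this reduces to the single inequality $\mu^2\rho(\rho-1+2/\gamma)>2\bar\mu^{\,\rho}$, which I would attack with an elementary bound such as $\bar\mu^{\,\rho}=(1-\mu)^{\rho}\le 1/(1+\mu\rho)$, combined with the defining relation $(\mu^2+\gamma\mu)\rho(\rho+1)=2(\gamma\rho+1)$ to eliminate $\rho^2$.

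The main obstacle is precisely this last algebraic verification: one must check that the chosen bound on $(1-\mu)^{\rho}$ is strong enough that $\mu^2\rho(\rho-1+2/\gamma)$ stays above $2\bar\mu^{\,\rho}$ at $\beta=\rho$, \emph{uniformly} over all admissible $(\gamma,\mu)$ (the gap is tightest when $\mu$ is small and $\gamma$ close to $1$), and to keep careful track of the integer boundary so that it is the floor $\lfloor\rho\rfloor$, and not $\lceil\rho\rceil$, that is certified. A secondary, purely mechanical, difficulty is carrying out the simplification behind \eqref{eq_prop_mono}.
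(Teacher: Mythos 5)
Your overall strategy is sound in outline (a minimizer exists because $\Delta^{\Wone}\to\infty$ as $\beta\to\infty$, and establishing $g(\beta)>g(\beta-1)$ for every integer $\beta\ge\beta^{\max}+1$ would indeed force $\beta^*\le\beta^{\max}$), and your one-step difference criterion checks out at least at $\beta=1$, where it reproduces the threshold $2\mu(\mu+\gamma)>\gamma$ implicit in Corollary~\ref{Z_W_1&1_Packet}. But the proof has a genuine gap exactly where you flag it: the entire argument hinges on the inequality $\mu^2\rho\left(\rho-1+2/\gamma\right)>2\bar\mu^{\,\rho}$ at the root $\rho$ of the quadratic, and you never establish it --- you only name candidate tools (``which I would attack with \dots''). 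Without that verification there is no proof that the crossover of your monotonicity condition lies at or below $\rho$, hence no proof that $\beta^*\le\lfloor\rho\rfloor$. Note also that $\rho$ is the root of a quadratic that, in your setup, has no a priori connection to the one-step difference condition; the fact that the lemma's $\beta^{\max}$ is defined by \emph{that particular} quadratic is something your route must prove, whereas it is unexplained why that quadratic should appear at all. A secondary (smaller) gap is that the identity \eqref{eq_prop_mono} itself is asserted rather than derived, though the derivation via $\max(Y_{i-1},\beta+1)=\max(Y_{i-1},\beta)+\mathbf{1}\{Y_{i-1}\le\beta\}$ is routine.

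The paper avoids all of this with a much weaker but sufficient observation: the optimal $\beta^*$ cannot do worse than $\beta=1$, so $\Delta^{\Wone}(\beta^*)\le\Delta^{\ZWone}$. Substituting \eqref{eq_Wait1_beta} and \eqref{ZW1exp} into this single inequality and clearing denominators gives $q(\beta^*)+\bar\mu^{\beta^*}\left[2\beta^*(\mu+\gamma)+2\right]\le 0$ with $q(\beta)=\beta^2(\mu^2+\gamma\mu)+\beta(\mu^2+\gamma\mu-2\gamma)-2$; since the exponential term is positive it can simply be discarded, yielding $q(\beta^*)\le 0$ and hence $\beta^*\le\lfloor\rho\rfloor$ by the quadratic formula. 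This is why the quadratic in \eqref{betamax2} arises: it is the polynomial part of the comparison against the zero-wait policy, not a monotonicity threshold. Your approach, if completed, would prove something stronger (eventual monotonicity of the AoI in $\beta$), but for the stated lemma the comparison-with-$\beta=1$ argument closes the proof with no transcendental inequality to verify. I recommend either adopting that route, or, if you want to keep yours, actually carrying out the bound on $\bar\mu^{\,\rho}$ uniformly in $(\gamma,\mu)$ --- as written, the proposal does not constitute a proof.
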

\begin{proof}
    See Appendix \ref{appendixBBeta}.
\end{proof}

From Lemma~\eqref{lemma_beta_bound}, the optimal $\Wone$ policy is easily found by searching for the 
$\beta\in\{1,\ldots,\beta^{\text{max}}\}$ that minimizes \eqref{eq_Wait1_beta}. We also note that from the bound $\sqrt{x+y}\le\sqrt{x}+\sqrt{y}$ for non-negative $x,y$, it can be shown that 
\begin{align}
\beta^{\max} &\le \biggl\lfloor\Bigl[\frac{2\gamma}{\mu^2+\gamma\mu}-1
\Bigr]^{+} +\sqrt{\frac{2}{\mu^2+\gamma\mu}}
\biggr\rfloor\label{r_L_05}
\end{align}
From \eqref{r_L_05}, we see that $\beta^{\max}$ grows as $O(1/\mu)$ as $\mu\to0$.
By applying algebraic manipulations to the difference ${\Delta^{\mathrm{\Wone}}-\Delta^{\mathrm{\ZWone}}}$, we obtain the following corollary.
\begin{corollary}\label{Z_W_1&1_Packet}
In the $\text{1-Packet}$ system, it is optimal to wait, i.e., to set ${\beta\ge2}$ for the waiting function ${Z_i(Y_{i-1})=(\beta-Y_{i-1})^+}$, if the service rates $\mu$ and $\gamma$ satisfy
\begin{equation}
\mu\le\displaystyle\frac{-\gamma+\sqrt{\gamma^2+2\gamma}}{2} \quad\text{or}\quad\gamma\ge\displaystyle\frac{2\mu^2}{1-2\mu}.
\end{equation}
\end{corollary}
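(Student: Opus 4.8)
The plan is to start from the closed-form expressions already available, namely $\Delta^{\mathrm{\Wone}}$ in \eqref{eq_Wait1_beta} and $\Delta^{\mathrm{\ZWone}}$ in \eqref{ZW1exp}, and to analyze whether the minimum over admissible $\beta\ge1$ is attained at $\beta=1$ or at some $\beta\ge2$. Since $\beta=1$ recovers the $\ZWone$ policy exactly, ``it is optimal to wait'' is equivalent to the statement that $\Delta^{\mathrm{\Wone}}$ as a function of $\beta$ is strictly decreasing at $\beta=1$, i.e. that moving from $\beta=1$ to $\beta=2$ strictly decreases the average AoI. So the first step is to form the difference $g(\beta)\triangleq\Delta^{\mathrm{\Wone}}(\beta)-\Delta^{\mathrm{\ZWone}}$ and evaluate, or rather examine the sign of, $g(2)$ (equivalently $\Delta^{\mathrm{\Wone}}(2)-\Delta^{\mathrm{\Wone}}(1)$). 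Because the expressions share the common tail $\beta+\tfrac1\gamma+\tfrac2\mu-1$, the difference collapses to a comparison of the two rational terms plus the integer offset, so this is a finite algebraic manipulation rather than anything delicate.

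Concretely, I would plug $\beta=1$ and $\beta=2$ into \eqref{eq_Wait1_beta}, clear the denominators $2(\gamma(\bar\mu+\mu)+\mu)=2(\gamma+\mu)$ (for $\beta=1$, using $\bar\mu^1+\mu=1$) and $2(\gamma(\bar\mu^2+2\mu)+\mu)$ (for $\beta=2$), and subtract. The sign of $\Delta^{\mathrm{\Wone}}(2)-\Delta^{\mathrm{\Wone}}(1)$ is then the sign of a polynomial in $\gamma$ and $\mu$ obtained after cross-multiplying by the (positive) product of the two denominators. Both denominators are manifestly positive for $0<\gamma,\mu\le1$, so the inequality direction is preserved. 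The resulting numerator should factor into something whose sign is governed by a quadratic; completing the square or applying the quadratic formula in $\mu$ (with $\gamma$ as parameter) yields the root $\mu=\tfrac{-\gamma+\sqrt{\gamma^2+2\gamma}}{2}$, and symmetrically solving the same quadratic in $\gamma$ yields $\gamma=\tfrac{2\mu^2}{1-2\mu}$ (valid when $\mu<1/2$, with the condition being vacuous/automatically satisfied for $\mu\ge1/2$ in the sense that the ``$\mu\le\cdots$'' branch then dominates). The two displayed conditions are precisely the two equivalent ways of writing the region where this quadratic has the required sign, which is why the corollary states them as an ``or'': they describe the same set in the $(\mu,\gamma)$ plane, one solved for $\mu$, the other for $\gamma$.

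One subtlety worth flagging is that showing $g(2)<0$ only establishes that $\beta=1$ is \emph{not} optimal; it does not by itself pin down the optimizer, but the corollary only claims ``it is optimal to wait, i.e., to set $\beta\ge2$,'' which is exactly this non-optimality of $\beta=1$. Strictly speaking one should also confirm there is no pathology where, say, $g(2)<0$ but the true optimum over $\{1,\dots,\beta^{\max}\}$ somehow still forces $\beta=1$ — but that is impossible, since $g(2)<0$ means $\Delta^{\mathrm{\Wone}}(2)<\Delta^{\mathrm{\ZWone}}=\Delta^{\mathrm{\Wone}}(1)$, so the minimizer is in $\{2,\dots,\beta^{\max}\}$. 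The finiteness of this search range is guaranteed by Lemma~\ref{lemma_beta_bound}.

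The main obstacle is purely computational bookkeeping: after substituting $\beta=1,2$ and clearing denominators, one gets a degree-two-in-$\mu$ (and degree-two-in-$\gamma$) polynomial with several cross terms, and it is easy to make sign errors when deciding which branch of the quadratic formula to keep and when checking that the discriminant $\gamma^2+2\gamma$ (resp. the analogue for $\gamma$) is nonnegative on the parameter domain. I would double-check the final region by testing a couple of concrete points — e.g. a small $\mu$ with moderate $\gamma$ should satisfy the condition, consistent with the paper's later observation that waiting helps most when the forward service is slow — and by verifying that the boundary curves $\mu=\tfrac{-\gamma+\sqrt{\gamma^2+2\gamma}}{2}$ and $\gamma=\tfrac{2\mu^2}{1-2\mu}$ are indeed the same curve (substitute one into the other and simplify to $0=0$). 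No conceptual difficulty is expected beyond this.
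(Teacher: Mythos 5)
Your proposal is correct and follows essentially the same route as the paper, which obtains the corollary precisely by "applying algebraic manipulations to the difference $\Delta^{\mathrm{\Wone}}-\Delta^{\mathrm{\ZWone}}$": substituting $\beta=2$ into \eqref{eq_Wait1_beta} (using $\bar\mu^2+2\mu=1+\mu^2$), clearing the positive denominators, and reducing the sign of $\Delta^{\mathrm{\Wone}}(2)-\Delta^{\mathrm{\ZWone}}$ to that of the quadratic $2\mu^2+2\mu\gamma-\gamma$, whose two solved forms give the stated "or" conditions. Your flag that the $\gamma\ge 2\mu^2/(1-2\mu)$ branch is only meaningful for $\mu<1/2$ is also consistent with the paper's own remark that no waiting gain is possible for $\mu>(\sqrt{3}-1)/2$.
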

Corollary~\ref{Z_W_1&1_Packet} reveals a threshold: a $\Wone$ policy that reduces the average AoI {\em cannot} be realized if ${\mu>\frac{\sqrt{3}-1}{2}\approx0.366}$, regardless of the value of $\gamma$.

\section{Numerical Results}\label{Numerical Results}
In this section, we evaluate the performance of the optimal and fixed control policies proposed for the 1-Packet and 2-Packet systems as well as visualize the structure of the optimal policies. The optimal policies are obtained via the MDP-based approaches presented in Section \ref{sec_1P_control} and \ref{sec_2P_control}. The fixed policies are the $\ZWone$ policy (Section \ref{sec_1P_ZW}), $\ZWtwo$ policy (Section \ref{sec_2P_ZW}), and $\Wone$ policy (Section \ref{sec_1P_Wait}). Algorithm~\ref{RVI_algo} is run with the maximum value of the AoI $\bar\Delta=50$ and the stopping criterion threshold ${\epsilon=0.0005}$, unless otherwise stated.

\subsection{Structure of the Optimal Policies}
In this section, the structure of the optimal policy for the 1-Packet and 2-packet systems for service rate values $\mu\in\{0.2,0.3,0.4,0.5,0.6,0.7,0.8,0.9,1\}$ and $\gamma\in\{0.4,0.7,1\}$ are illustrated.

\subsubsection{1-Packet System}
Fig.~\ref{1-Packet_Structure} illustrates the structure of the optimal policy in the 1-Packet system by showing the optimal action in state ${s=(\delta,0,0,\star)}$ (i.e., empty system) for the different values of $\mu$ and $\gamma$. The figure reveals a \textit{threshold} structure with respect to the AoI, $\delta$: 
when the current AoI has a low value (e.g., when $\delta\le2$ for ${\gamma=0.4}$ and $\mu=0.2$), the optimal action is to stay idle (${a=0}$). Similarly, when the AoI exceeds this threshold, the optimal action is to send a request packet. When the reverse link service rate $\gamma$ increases, the AoI threshold increases. These behaviors are due to the fact that sending a new request packet would put the same burden on the network, regardless of its contribution to reducing the AoI, as any other request packet. Thus, if sending a new request packet cannot significantly reduce the AoI, it is optimal to wait. Finally, we observe that Corollary~\ref{Z_W_1&1_Packet} holds in Fig.~\ref{1-Packet_Structure}: for the sampler server's service rate $\mu$ exceeding $\frac{\sqrt{3}-1}{2}\approx0.366$, the optimal action is always ${a=1}$.

\begin{figure*}
\centering
\subfigure[$\gamma=0.4$ ]
{
\includegraphics[width=0.33\textwidth]{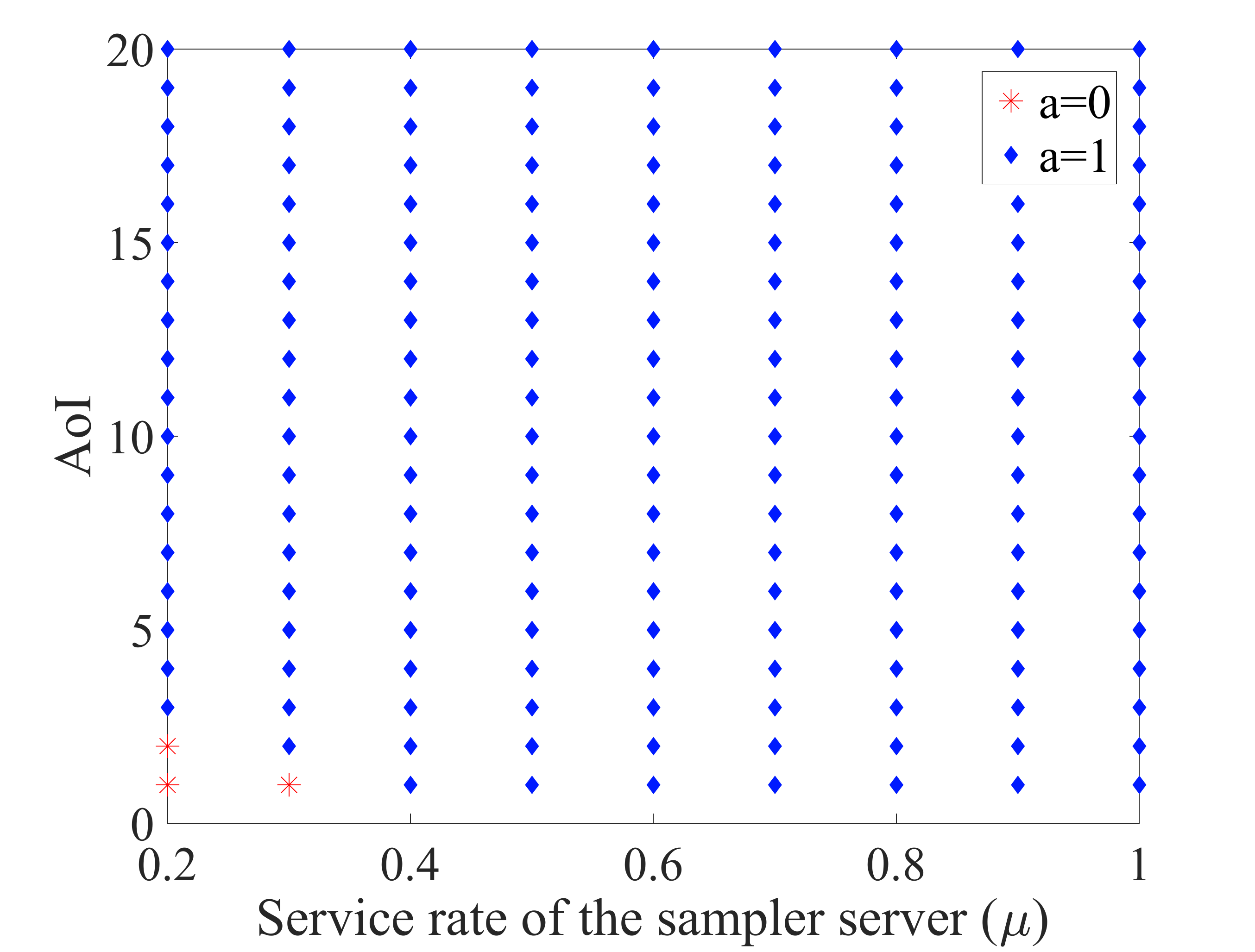}
\label{1_gamma_p4}\hspace{-6mm}
}
\subfigure[$\gamma=0.7$]
{
\includegraphics[width=0.33\textwidth]{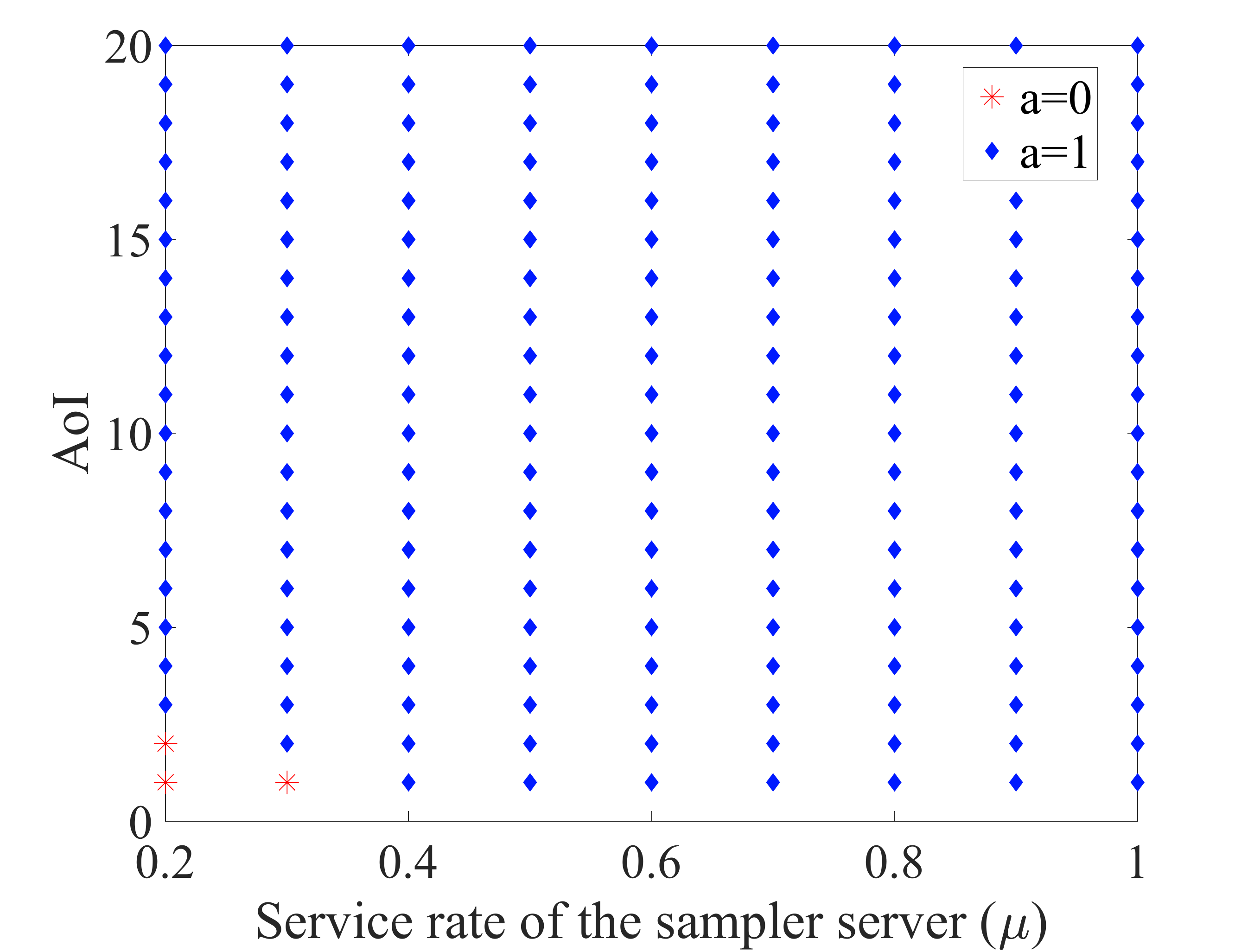}
\label{1_gamma_p7}\hspace{-6mm}
}
\subfigure[$\gamma=1$]
{
\includegraphics[width=0.32\textwidth]{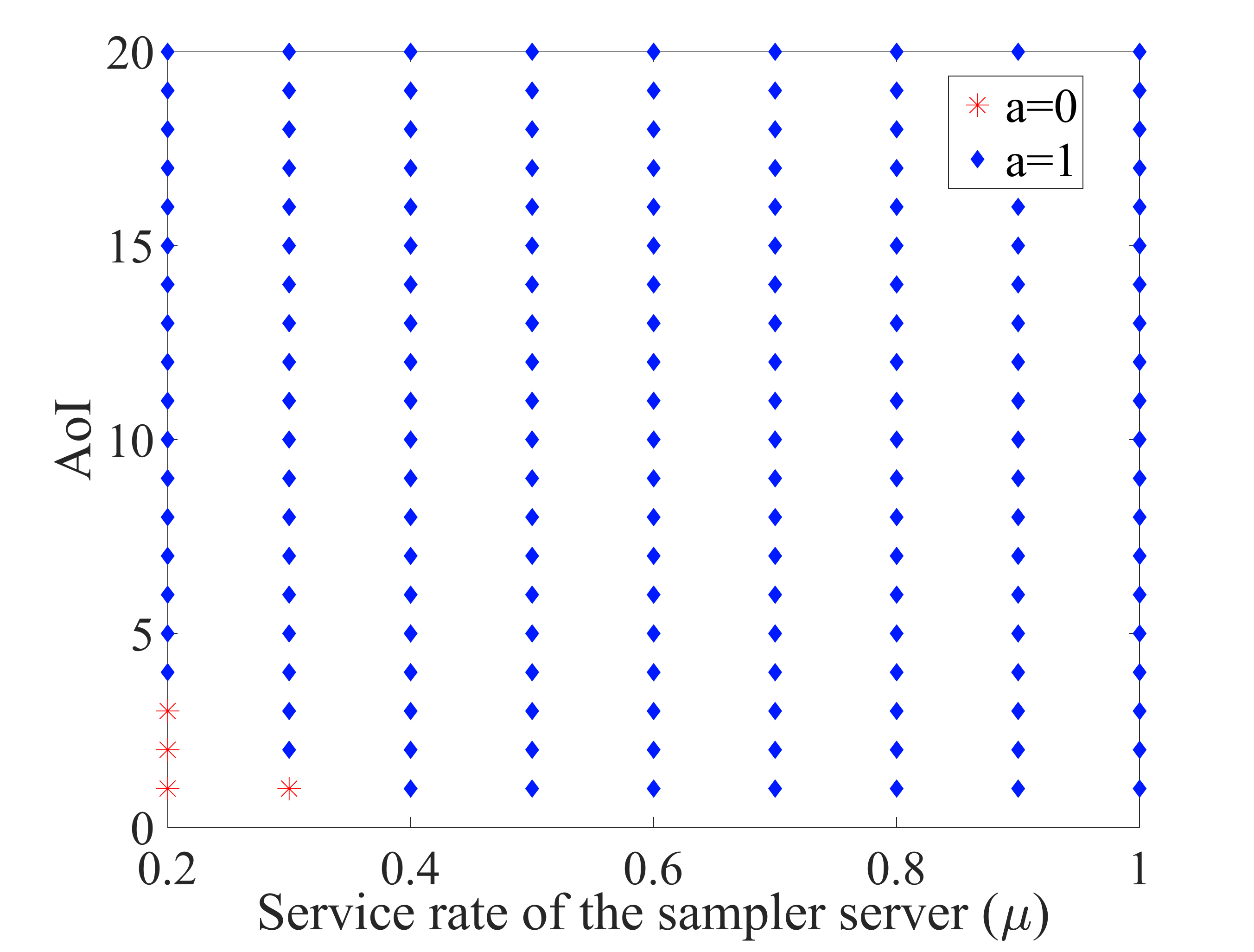}
\label{1_gamma_p10}
}
\vspace{-2mm}
\caption{Structure of the optimal policy for the 1-Packet system in state ${s=(\delta,0,0,\star)}$ and for the 2-Packet system in state ${\underline s=(\delta,0,0,0,0,\star,\star)}$; ${a=1}$ indicates that the optimal action is to generate a request packet, whereas ${a=0}$ indicates that the optimal action is to stay idle.}
\label{1-Packet_Structure}
\vspace{-5mm}
\end{figure*}

To further analyze the threshold structure, Fig.~\ref{Beta_p} illustrates the average AoI obtained by the $\Wone$ policy (Theorem~\ref{Age_Wait-1}) as a function of $\beta$ for different values of $\gamma$ with ${\mu=0.1}$. The average AoI obtained by the optimal policy in the 1-Packet system (obtained with RVI parameters ${\bar\Delta=100}$ and  $\epsilon=10^{-6}$) is also depicted. As it can be seen, when the value of $\gamma$ increases, the optimal value of $\beta$ increases, i.e., the optimal policy induces longer waiting times. This is because when the request packets are served faster in the reverse link, the corresponding generated status update packets will provide small AoI reduction if the AoI upon sending the new request was already small. Thus, it is beneficial to purposely postpone sending the requests. Note that the same trend is visible in Fig.~\ref{1-Packet_Structure}. Importantly, Fig.~\ref{Beta_p} shows that the average AoI obtained by the optimal $\Wone$ policy (i.e., the $\Wone$ policy with the optimal $\beta$) and the optimal policy in the 1-Packet system coincide. This shows evidence on the optimality of the waiting-based scheme, as conjectured in Section~\ref{sec_1P_Wait}.

\begin{figure*}
\centering
\subfigure[$\gamma=0.4$ ]
{
\includegraphics[width=0.33\textwidth]{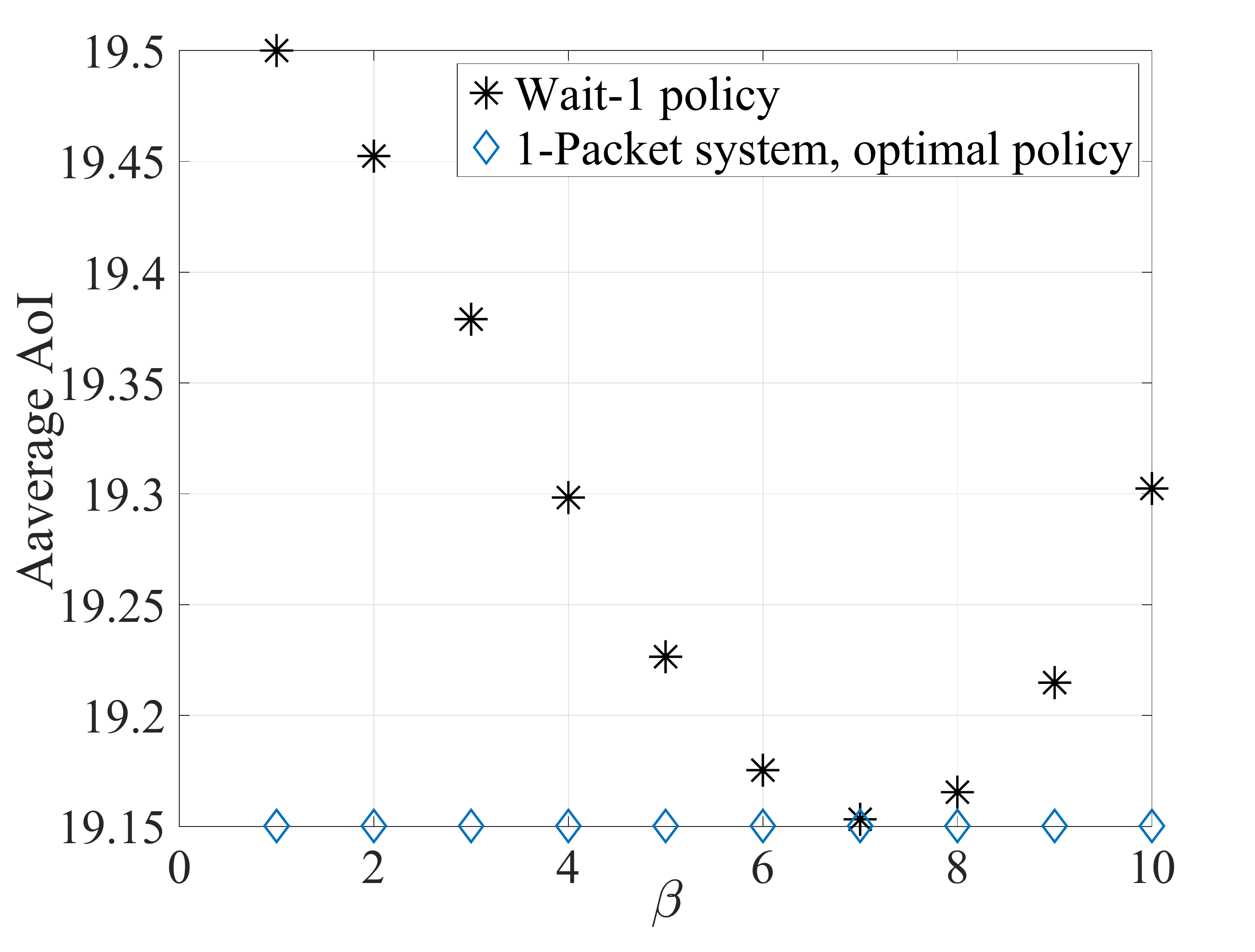}
\label{Beta_p4}\hspace{-6mm}
}
\subfigure[$\gamma=0.7$]
{
\includegraphics[width=0.33\textwidth]{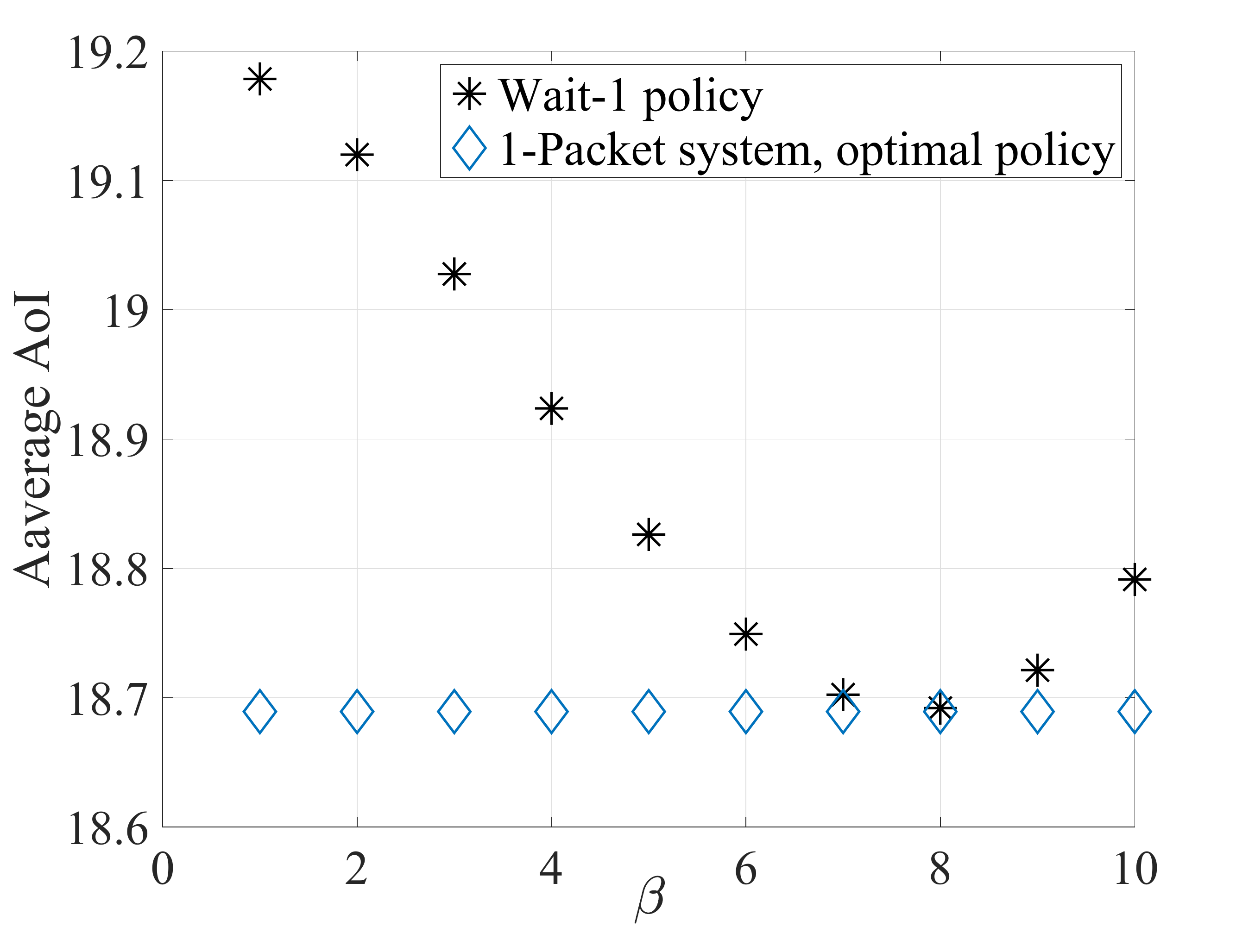}
\label{Beta_p7}\hspace{-6mm}
}
\subfigure[$\gamma=1$]
{
\includegraphics[width=0.33\textwidth]{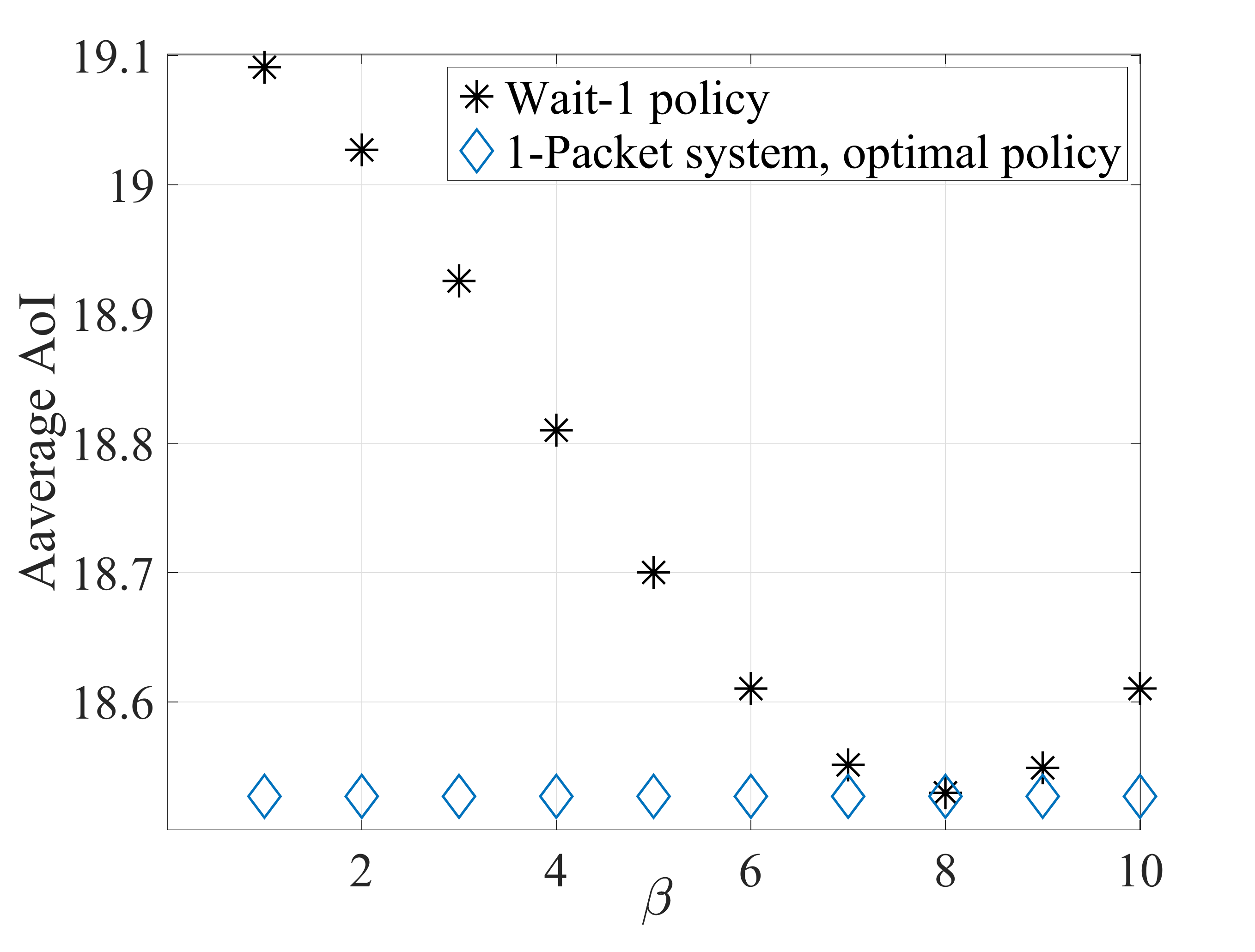}
\label{Beta_p10}
}
\vspace{-2mm}
\caption{Average AoI under the $\Wone$ policy as a function of $\beta$ for different values of $\gamma$ with $\mu=0.1$.
}
\label{Beta_p}
\end{figure*}

\subsubsection{2-Packet System}
In the 2-Packet system, the three states of interest where the optimal action needs to be determined are: 1) ${\underline s=(\delta,0,0,0,0,\star,\star)}$ (i.e., empty system), 2) ${\underline s=(\delta,0,0,0,1,\star,\Delta_{\mathrm{s}})}$ (i.e., there is a status update packet at the sampler server), and 3) ${\underline s=(\delta,0,1,0,0,\star,\Delta_{\mathrm{s}})}$ (i.e., there is a request packet at the controller server). Regarding the first case, the structure of the optimal policy in the 2-Packet system in state $\underline s=(\delta,0,0,0,0,\star,\star)$ is the same as that for the 1-Packet system; this threshold-type policy with respect to the AoI was illustrated in Fig.~\ref{1-Packet_Structure}. 
Regarding the third case, the optimal action in state $\underline s=(\delta,0,1,0,0,\star,\star)$ is to stay idle ($a=0$), regardless of the current value of the AoI. This stems from the perfect knowledge on the system's occupancy; see Remark~\ref{remark_state_aware}.  As the controller will immediately know when the reverse link server becomes empty, it can send a new request message just in time, should this action be optimal.
The second case is elaborated next.

Interestingly, we observe in Fig.~\ref{2s2-Packet_Structure} that in states ${\underline s=(\delta,0,0,0,1,\star,\Delta_{\mathrm{s}})}$, the optimal policy has a threshold with respect to the age $\Delta_{\mathrm{s}}$ of the packet at the sampler server. This behavior can be interpreted as follows. First, due to the memoryless property of the geometric random variable, the average residual service time of the packet in the sampler server is ${\mathbb{E}[Y]=1/\mu}$, regardless of $\Delta_{\mathrm{s}}$. Consequently, a newly inserted request message will lead its associated status update packet to enter the sampler queue with the same probability, regardless of $\Delta_{\mathrm{s}}$. Thus, when in state ${\underline s=(\delta,0,0,0,1,\star,\Delta_{\mathrm{s}})}$, the controller cannot take measures to avoid queueing delays in the sampler server, which always are detrimental to the age performance. However, the controller can look at the value of $\Delta_{\mathrm{s}}$ to optimize its action. If the status update packet under service has a low value of $\Delta_{\mathrm{s}}$, it will provide large age reduction once completing service. If a new request was sent in such state, it would, in turn, provide relatively low age reduction, because the preceding packet set the AoI to a low value. In this case, it is better to wait prior to sending a request, and hence the threshold in Fig.~\ref{2s2-Packet_Structure}. Furthermore, based on the same grounds as for Fig.\ \ref{1-Packet_Structure}, when the value of $\gamma$ increases, the optimal policy induces longer waiting times before sending a new request in state ${\underline s=(\delta,0,0,0,1,\star,\Delta_{\mathrm{s}})}$.

\begin{figure*}
\centering
\subfigure[$\gamma=0.4$ ]
{
\includegraphics[width=0.33\textwidth]{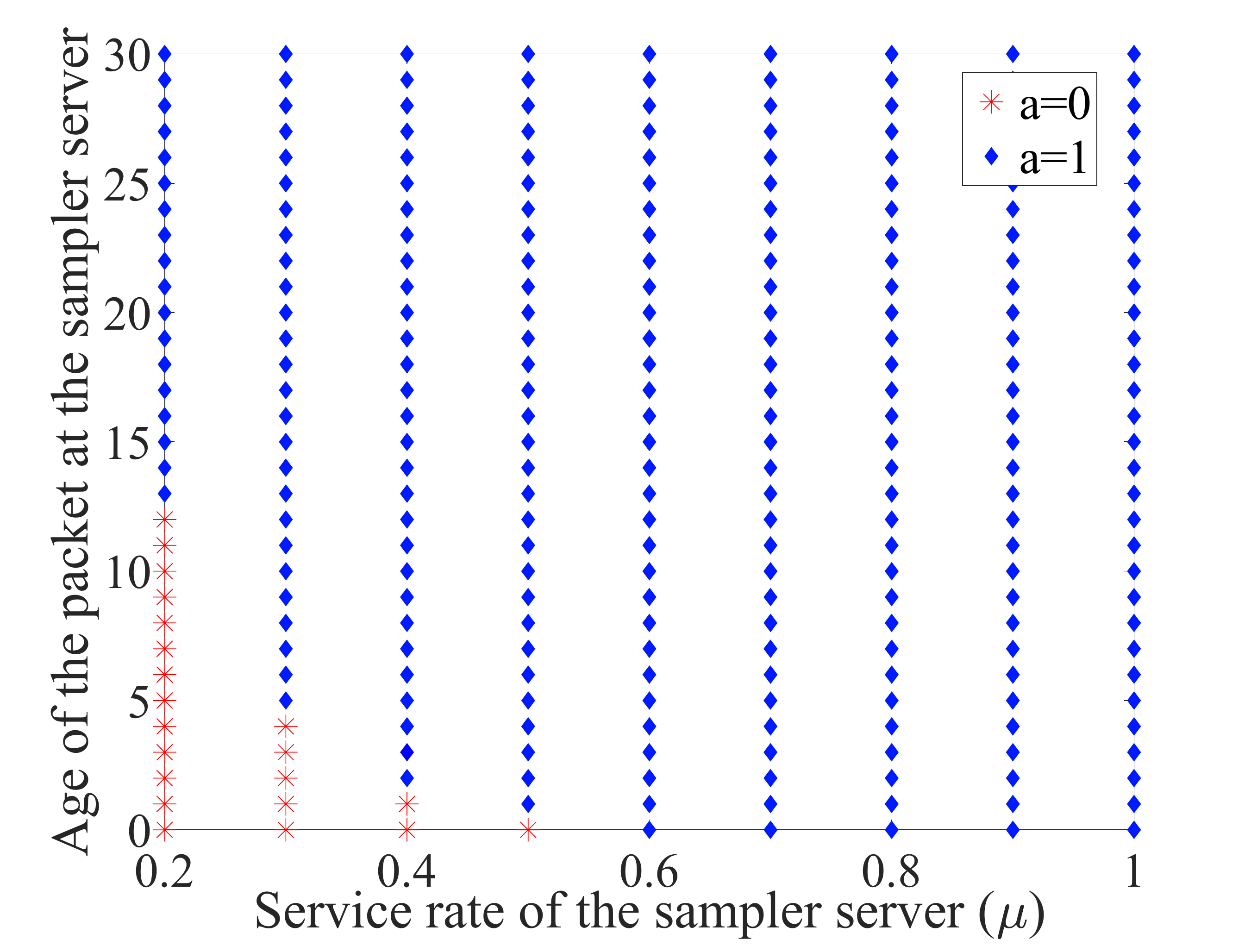}
\label{2s2_gamma_p4}\hspace{-6mm}
}
 \subfigure[$\gamma=0.7$]
 {
 \includegraphics[width=0.33\textwidth]{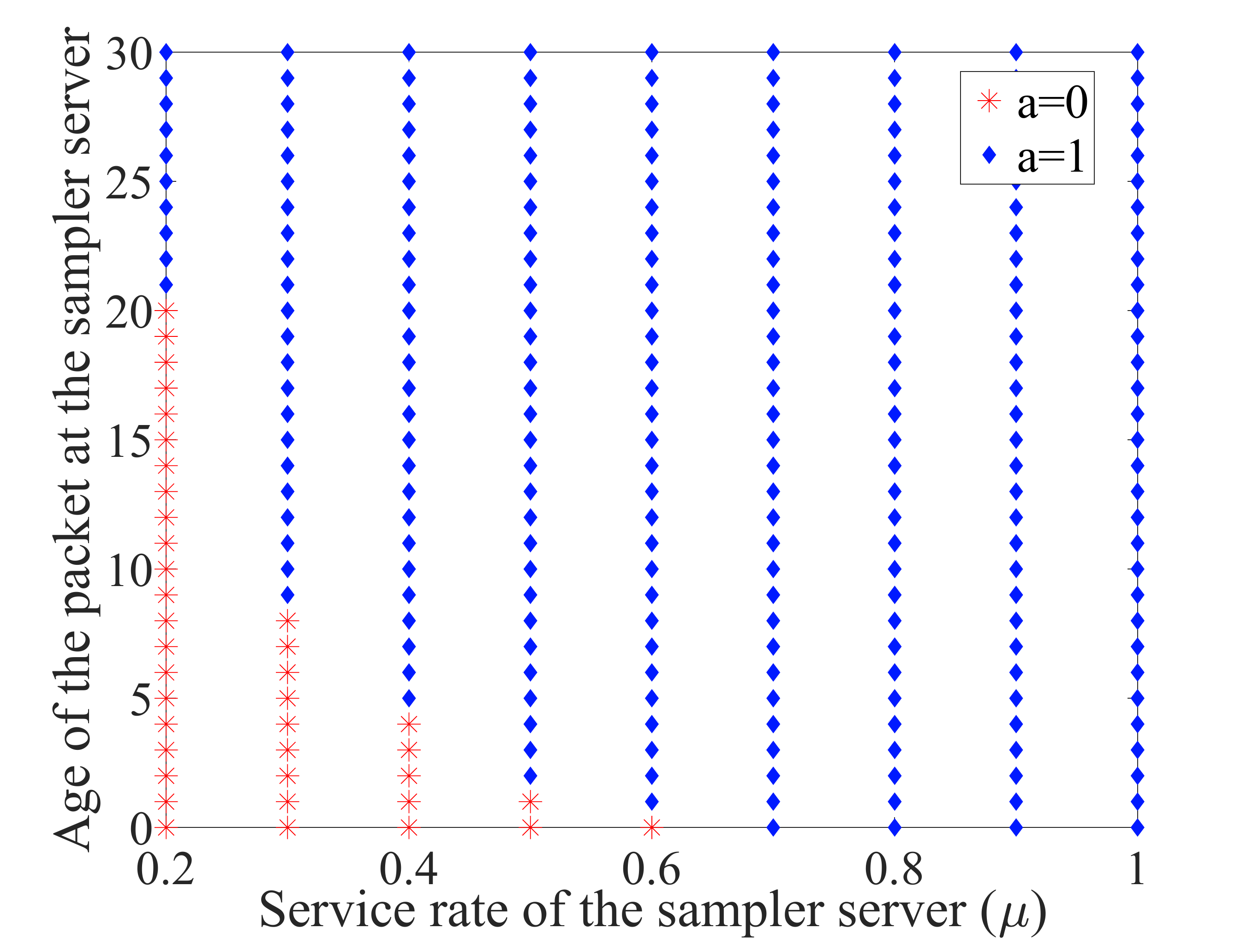}
 \label{2s2_gamma_p7}\hspace{-6mm}
 }
 \subfigure[$\gamma=1$]
 {
 \includegraphics[width=0.33\textwidth]{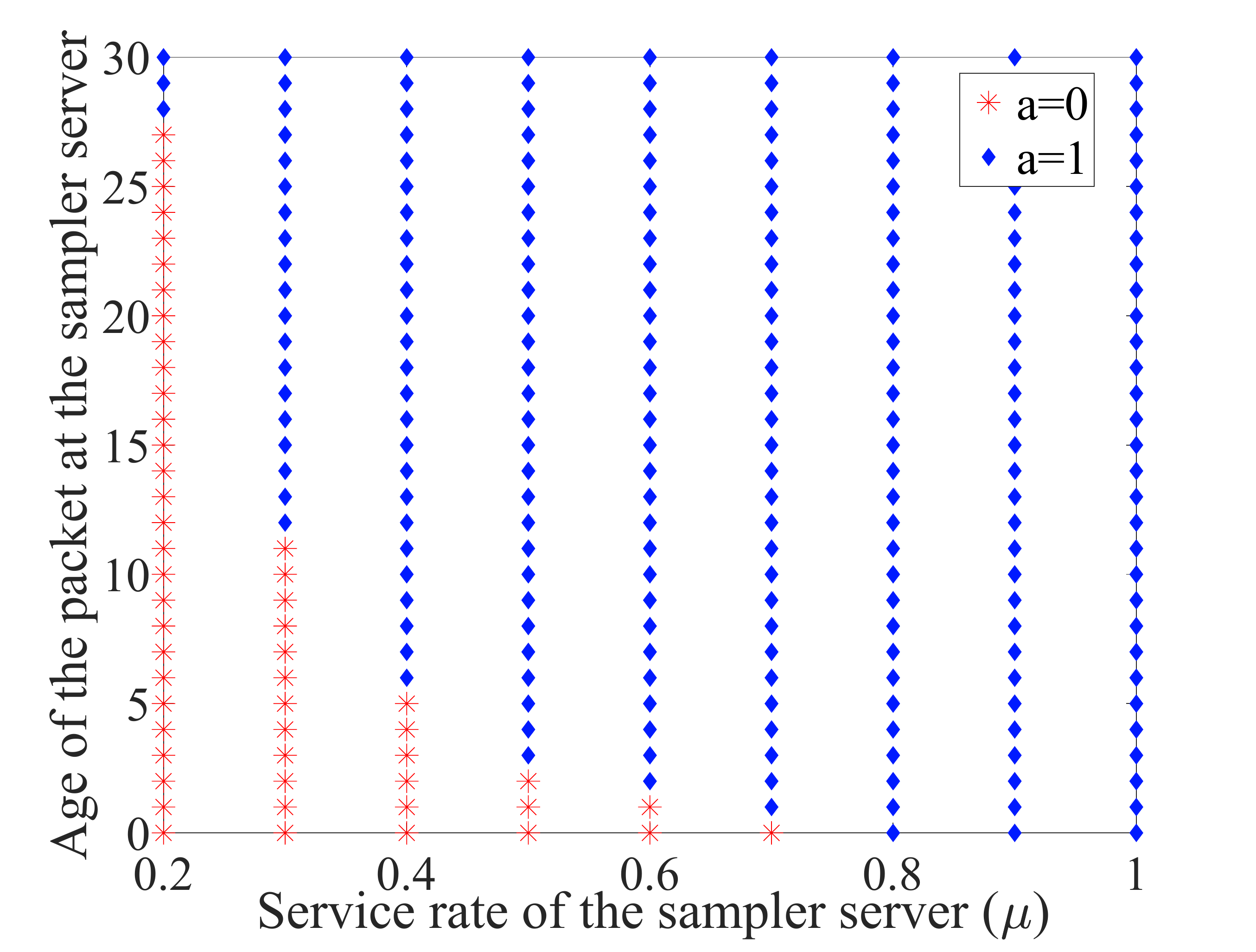}
 \label{2s2_gamma_p10}
 }
\vspace{-2mm}
\caption{Structure of the optimal policy for the 2-Packet system in state $\underline s=(\delta,0,0,0,1,\star,\Delta_{\mathrm{s}})$.}
\label{2s2-Packet_Structure}
\vspace{-5mm}
\end{figure*}

In practice, we select the AoI upper-bound $\bar{\Delta}$ by increasing its value until the average AoI of an optimal policy does not change. To illustrate this behavior, Fig.~\ref{AgebarD} shows the average AoI of an optimal policy for the 1-Packet system as a function of $\bar{\Delta}$ for different service rate values $\mu$ and $\gamma$. As it can be seen, if we set $\bar{\Delta}$ to a too small value, the average AoI value will substantially change -- more precisely, decrease. In consequence, one would draw wrong conclusions about the system's performance. To give some numbers, the figure shows that for the service rates $\mu=0.2,~\gamma=0.4$, an age upper-bound value $\bar{\Delta}>45$ provides sufficient accuracy for practical purposes. Similarly, for $\mu=0.4,~\gamma=0.4$, this saturation point occurs for age upper-bound values $\bar{\Delta}>25$.

\begin{figure}[t]
\centering
\includegraphics[width=0.4\textwidth]{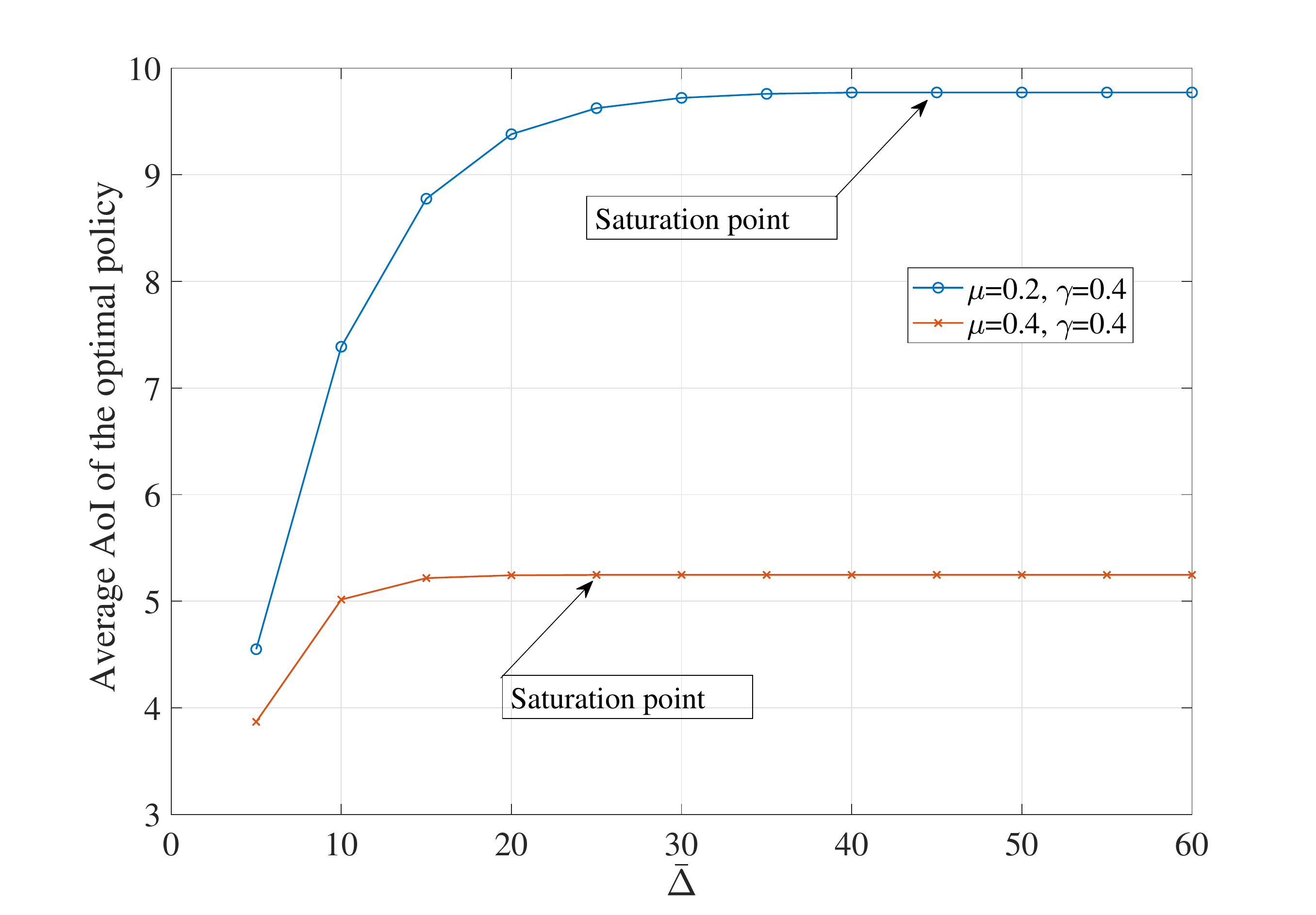}\vspace{-2mm}
\caption{Average AoI of the optimal policy for the 1-Packet system as a function of $\bar\Delta$ under different values of ${\mu}$ and ${\gamma}$.}
\label{AgebarD}
\vspace{-5mm}
\end{figure}

\subsection{Average AoI Value for Different Policies}
Fig.~\ref{Average_AoI_val} illustrates the value of the average AoI as a function of $\mu$ for different values of $\gamma$ under the $\ZWone$ policy, the $\ZWtwo$ policy, and the optimal policies in the 1-Packet and 2-packet systems. As it can be seen, the optimal policy in the 2-Packet system, in general, outperforms the other policies, or, for certain service rates, coincides with another policy. This is as expected, because the policy optimization is over a set of policies for the 2-Packet system includes these other policies.

Fig.~\ref{Average_AoI_val} shows that for 
${\mu\ge\sqrt{2}/2\approx0.7071}$, 
$\ZWtwo$ 
outperforms 
$\ZWone$
regardless of the value of $\gamma$, as shown in Corollary~\ref{Z_W_2&Z_W_1}. Moreover, we can see that for 
${\mu\ge\frac{\sqrt{3}-1}{2}\approx0.366}$, the $\ZWone$ policy and the optimal policy in the 1-Packet system provide the same 
average AoI regardless of the value of $\gamma$, as shown in Corollary~\ref{Z_W_1&1_Packet}.

We can also see from Fig.~\ref{Average_AoI_val} that as $\mu$ increases, the average AoI decreases for all policies, as expected, because the status update packets are served faster. However, maybe a bit surprisingly, such monotonic behavior is not seen with respect to $\gamma$; increasing the value of $\gamma$ is not always beneficial.
As it can be seen, for the low values of $\mu$, the $\ZWtwo$ policy results in a higher average AoI value when the value of $\gamma$ increases. This occurs because 
a large value of $\gamma$ causes the requests to be served too fast in relation to the sampler service rate $\mu$, leading to queueing of update packets in the sampler buffer, deteriorating the AoI performance.

\begin{figure}[t]
\centering
\subfigure[$\gamma=0.4$ ]
{
\includegraphics[width=0.4\textwidth]{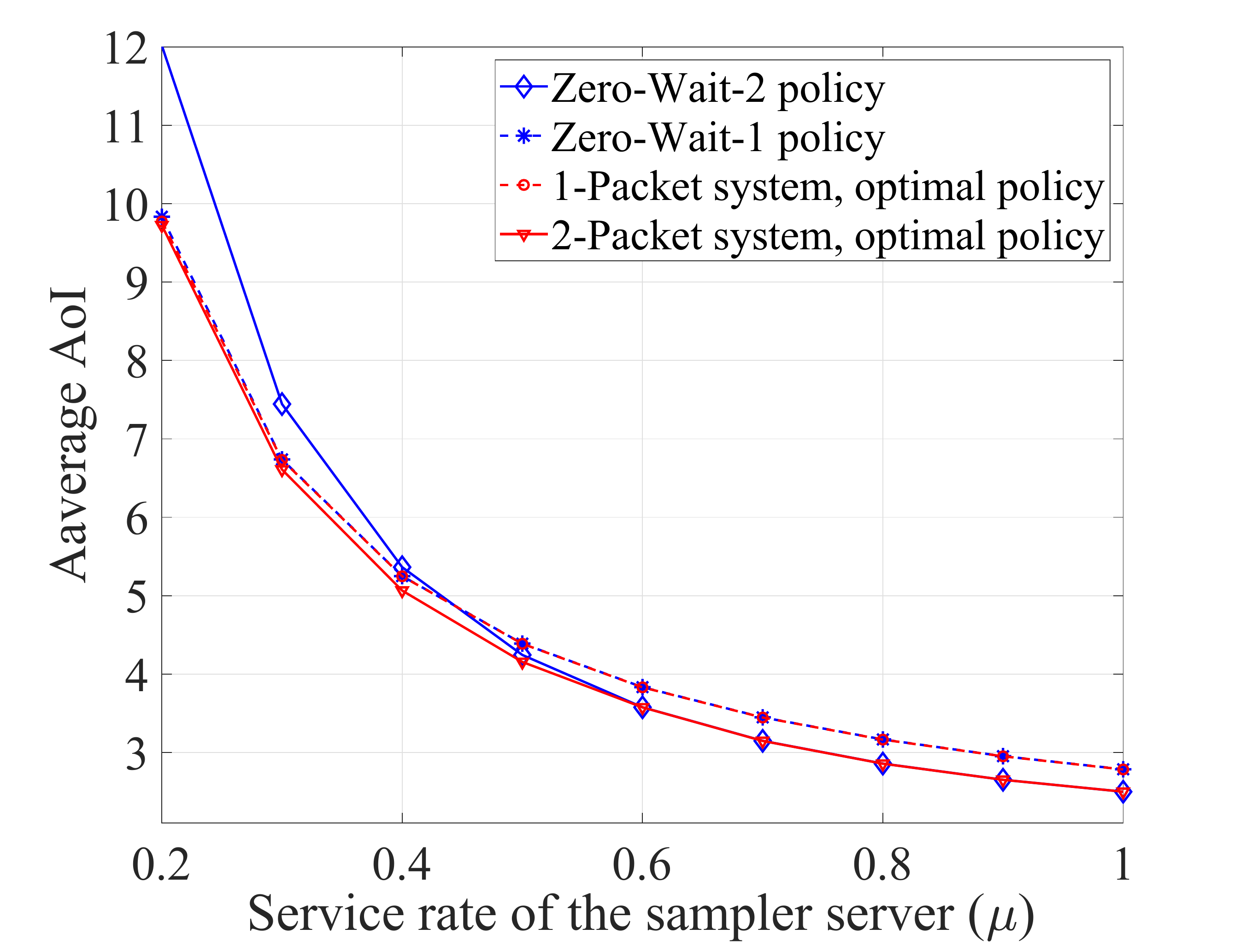}
\label{Average_AoI_val_p4}\vspace{-2mm}
}
 \subfigure[$\gamma=0.7$]
 {
 \includegraphics[width=0.4\textwidth]{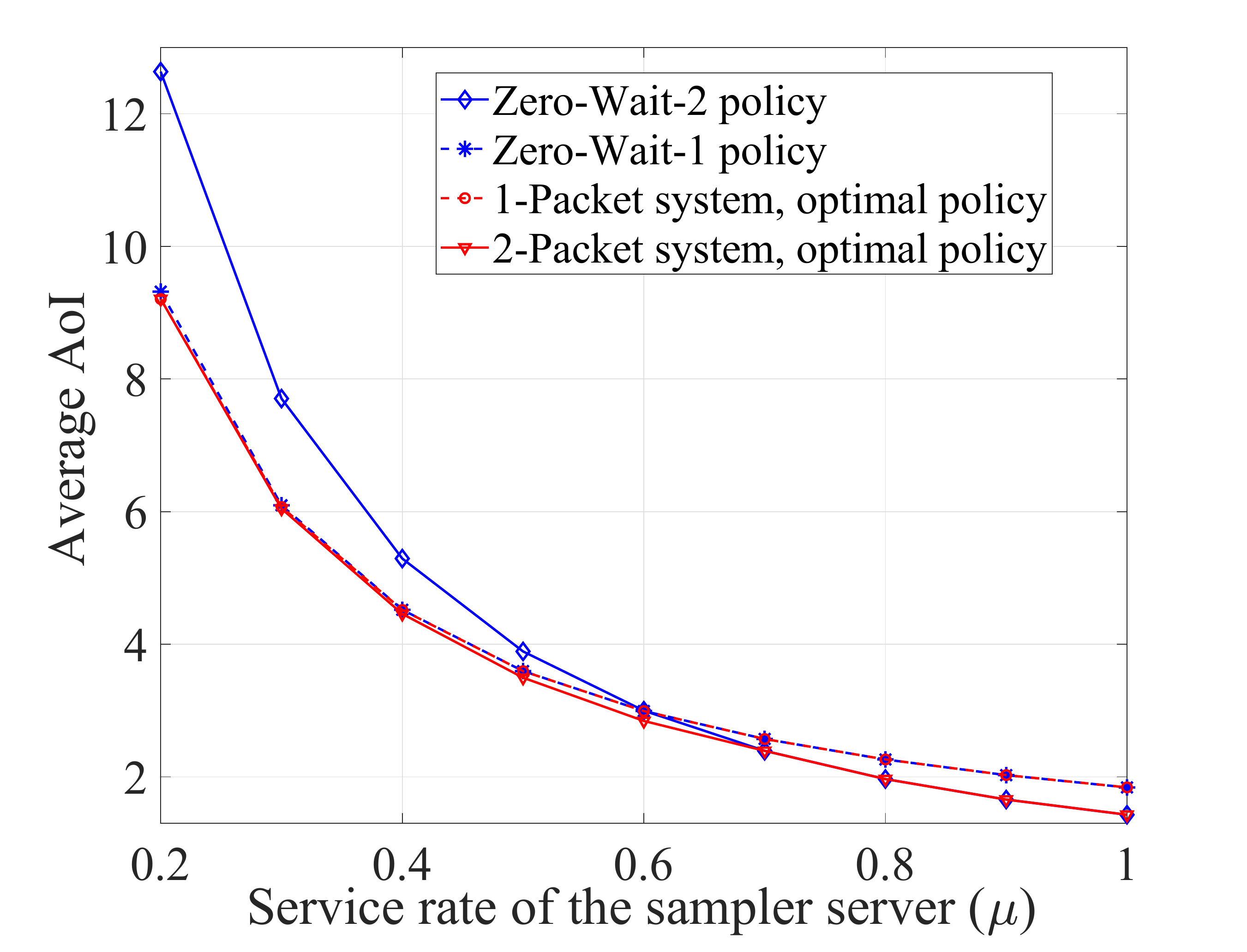}
 \label{Average_AoI_val_p7}\vspace{-2mm}
 }
 \subfigure[$\gamma=1$]
 {
 \includegraphics[width=0.4\textwidth]{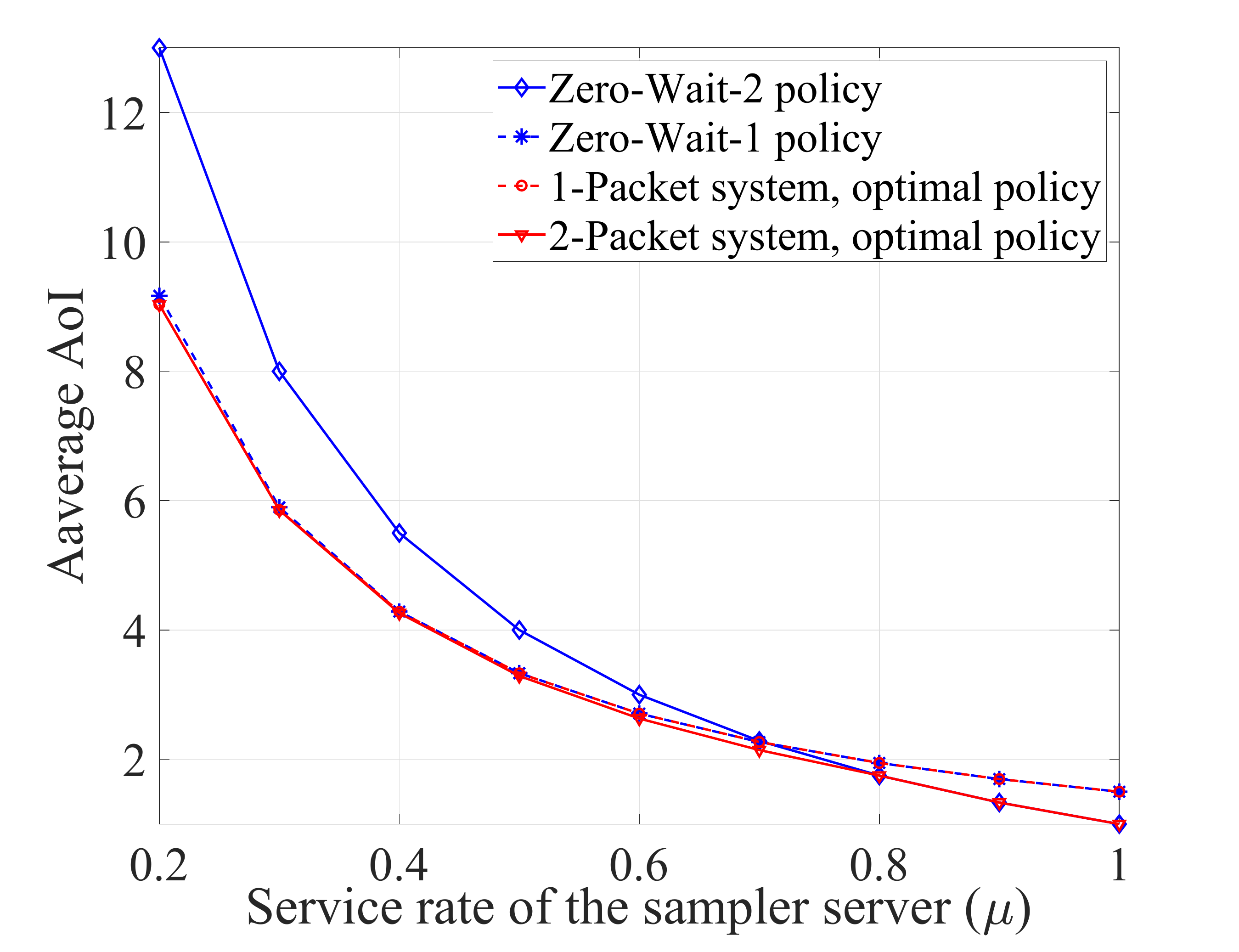}
 \label{Average_AoI_val_p10}
 }
\vspace{-2mm}
\caption{Average AoI as a function of $\mu$ for different values of $\gamma$.}
\label{Average_AoI_val}
\vspace{-5mm}
\end{figure}

\section{Conclusions} \label{Conclusions}
We studied status updating under two-way delay in a discrete-time system. 
We developed AoI-optimal control policies using the tools of MDPs in two scenarios. We began with the system having at most one active request. Then, we initiated pipelining-type status updating by studying a system having at most two active requests.
In addition, we conducted AoI analysis by deriving closed-form expressions of the average AoI under the $\ZWone$,   $\ZWtwo$ and $\Wone$ policies which employ only the controller's local information. 

Numerical results showed that 
for both 1-Packet and 2-Packet systems, 
the optimal policy is a threshold-type policy. In addition, it was seen that for high values of the update service rate $\mu$,
$\ZWone$ is an optimal policy for the  1-Packet system, and $\ZWtwo$ is an optimal policy for the 2-Packet system, regardless of the service rate of the controller server. Moreover, the results showed that increasing the request message service rate $\gamma$ is not necessarily beneficial for all policies, e.g., when the update service rate $\mu$ is low, the average AoI under the $\ZWtwo$ policy is increasing 
$\gamma$.

The interesting future work would include the study of i) optimal policies to minimize non-linear functions of the AoI for arbitrary service time distributions for controller and sampler servers, ii) the structure of optimal policies in the 1-Packet and 2-Packet systems (our numerical results showed that the optimal policies appear to have a threshold-based structure, which, however, remains to be proven),
and iii) the performance of the two-way delay system using real-world data sets.

\begin{figure*}[t]
\centering
\includegraphics[width=.65\linewidth,trim = 0mm 3mm 0mm 0mm,clip]{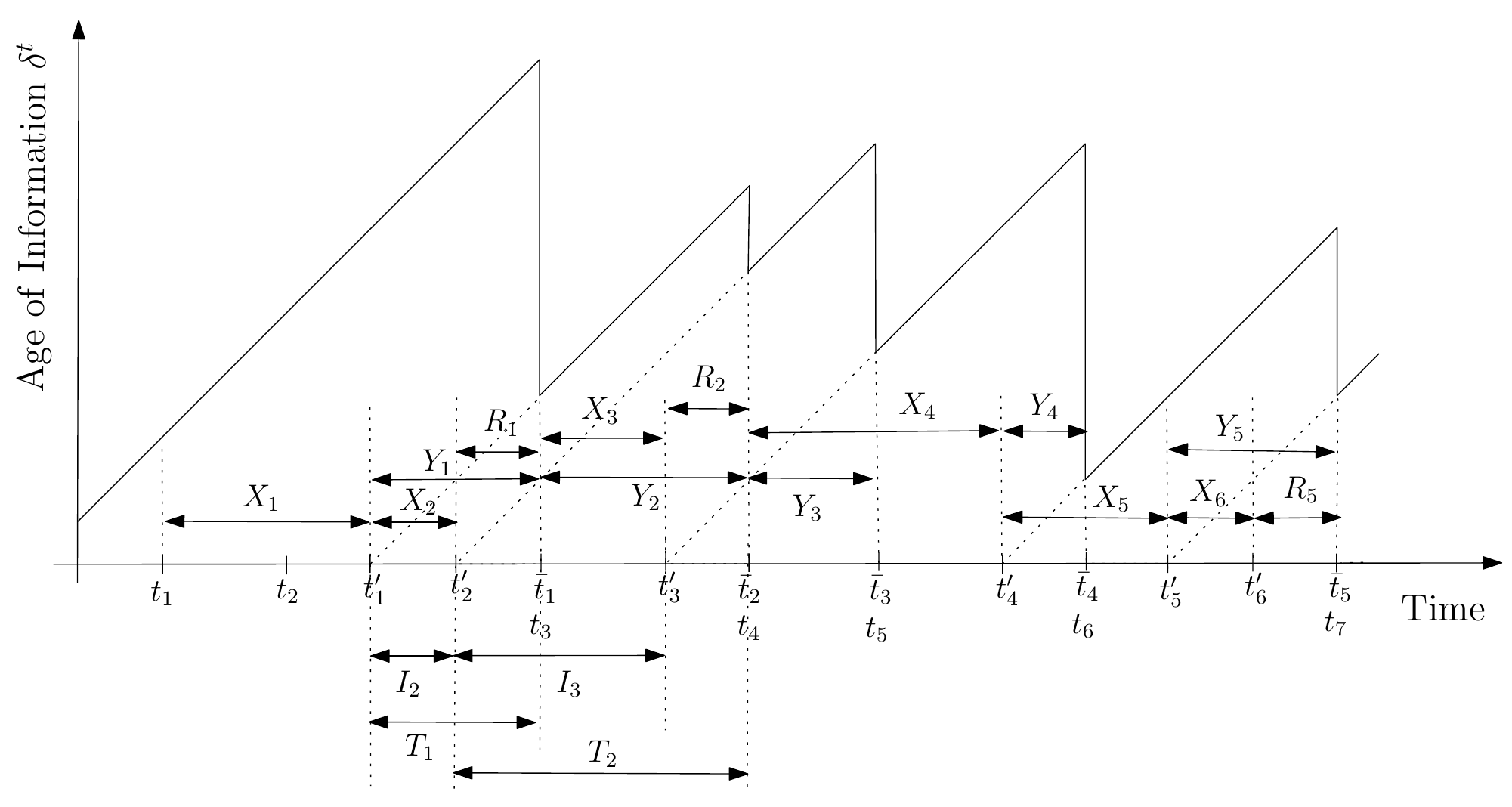}\vspace{-2mm}
\caption{AoI as a function of time under the $\ZWtwo$ policy.}\vspace{-3mm}
\label{fig:Zero-W-2}
\end{figure*}


\section{Appendix}
\subsection{Proof of Theorem~\ref{thm:ZW2}}\label{ZW2-proof}
Referring to the \ZW{2} AoI sample path depicted in Figure~\ref{fig:Zero-W-2}, we start with the arrival of update packet $i-1$ at the sampler server and we use the partition $\set{B_{i-1},\Bbar_{i-1}}$  to describe what can follow:  
\begin{itemize}
\item If $\Bbar_{i-1}$, then update $i-1$ immediately goes into service at the sampler server, starting service time $Y_{i-1}$. Request message $i$ simultaneously starts service time $X_i$ at the controller server.
\begin{itemize}
\item If $X_i\ge Y_{i-1}$, then event $\Bbar_i$ occurs since update $i$ arrives at the sampler server after update $i-1$ departs. Thus, update packet $i$ arrives at an idle sampler server and immediately starts service time $Y_i$. In this case, $I_i=X_i$ and $T_i=Y_i$.
\item If $B_i=\set{X_i<Y_{i-1}}$ occurs,
then update $i$ waits 
for the residual service time $R_{i-1}$ of update $i-1$ before starting its own service time $Y_i$. In this case, $I_i=X_i$ and $T_i=R_{i-1}+Y_i$. 
\end{itemize}
\item If $B_{i-1}$, update $i-1$ arrives when the sampler server is busy and waits for the residual service time $R_{i-2}$ of update $i-2$. After this waiting time, update $i-2$ is delivered to the sink, request $i$ is generated at the controller and starts service time $X_i$, {\em and} update $i-1$ simultaneously starts service time $Y_{i-1}$ at the sampler server. 
\begin{itemize}
\item If $\Bbar_i=\set{
X_i\ge Y_{i-1}}$ occurs, update packet $i$ arrives at an empty sampler server and immediately starts service time $Y_i$. In this case, 
$I_i=R_{i-2}+X_i$ and $T_i=Y_i$.
\item If $B_i=\set{X_i<Y_{i-1}}$ occurs, update $i$ arrives at the sampler server before update $i-1$ departs. In this case, update $i$ waits at the sampler server for the residual service time $R_{i-1}$ of update $i-1$, implying 
$I_i=R_{i-2}+X_i$ and $T_i=R_{i-1}+Y_i$. 
\end{itemize}
\end{itemize}
To summarize this two-step partitioning, we see that
\begin{subequations}\label{BBpartition}
\begin{align}
\text{if $\Bbar_{i-1}\Bbar_i$:}\quad I_i&=X_i, &  T_i&=Y_i;\\
\text{if $\Bbar_{i-1}B_i$:}\quad I_i&= X_i, & T_i &=R_{i-1}+Y_i;\\
\text{if $B_{i-1}\Bbar_i$:}\quad I_i&= R_{i-2}+X_i, & T_i&=Y_i;\\
\text{if $B_{i-1}B_i$:}\quad I_i&=R_{i-2}+X_i, & T_i &=R_{i-1}+Y_i.
\end{align}
\end{subequations}
Note that the memoryless property of geometric random variables implies that conditioned on $B_j$, each $R_{j-1}$ is a geometric $(\mu)$ random variable, independent of all $X_i$ and $\set{Y_i\colon i\neq j-1}$. 

Furthermore, careful reading of this two-step partitioning shows that $B_i$ and $B_{i-1}$ are  independent events since ${B_i=\set{X_i<Y_{i-1}}}$, whether  ${B_{i-1}=\set{X_{i-1}<Y_{i-2}}}$ occurred or not in the prior cycle. Since the $X_i$ and $Y_i$ sequences are independent (and iid within each sequence), 
\begin{align}
\prob{B_i}=\prob{B_i|B_{i-1}}
=\prob{X_i<Y_{i-1}}.
\end{align}
We further observe that the occurrence of $B_i$ suggests that $X_i$ is shorter than usual. 
In Appendix~\ref{appendix-ZW2-XB-props}, we verify the following properties.

\begin{lemma}\label{ZW2-XB-props}
\begin{subequations}
In the $\ZW{2}$ system: 
\begin{enumerate}
\item[(a)]
Update $i$ arrives at a busy sampler server with probability 
\begin{equation}\label{new-pro_X_Y}
\prob{B_i}=\pb\triangleq\Pr(X_i < Y_{i-1})=\displaystyle\frac{\gamma\bar\mu}{\mu+\gamma\bar{\mu}}.
\end{equation}
\item[(b)] Given event  $B_i$, 
\begin{equation}
    \E{X_i|B_i} =\frac{1}{\mu+\gamma\bar\mu}=\frac{\pb}{\gamma\bar\mu}.
\end{equation}
\end{enumerate}
\end{subequations}
\end{lemma}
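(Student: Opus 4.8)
The plan is to compute the two claimed quantities directly from the defining event $B_i = \{X_i < Y_{i-1}\}$, using only the facts that $X_i \sim \mathrm{Geo}(\gamma)$ and $Y_{i-1} \sim \mathrm{Geo}(\mu)$ are independent. For part (a), I would condition on the value of $X_i$ (or equivalently on $Y_{i-1}$) and sum a geometric series: writing $\prob{X_i < Y_{i-1}} = \sum_{k\ge 1}\prob{X_i = k}\prob{Y_{i-1} > k}$, each factor is a power of $\bar\gamma$ or $\bar\mu$, so the sum collapses to a single geometric series in $\bar\gamma\bar\mu$. Simplifying the resulting closed form should yield $\gamma\bar\mu/(\mu + \gamma\bar\mu)$; it is worth double-checking the indexing convention for the support of the geometric (whether it starts at $0$ or $1$) since the stated moments $\E{X}=1/\gamma$, $\E{X^2}=(2-\gamma)/\gamma^2$ pin down the support to $\{1,2,\dots\}$.

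For part (b), I would compute $\E{X_i \mathds{1}_{B_i}} = \sum_{k\ge 1} k\,\prob{X_i = k}\prob{Y_{i-1} > k}$, again a sum of the form $\sum_k k\, r^k$ for an appropriate ratio $r$ built from $\bar\gamma\bar\mu$, which has the standard closed form $r/(1-r)^2$ up to the prefactors. Dividing by $\prob{B_i} = \pb$ from part (a) then gives $\E{X_i \mid B_i}$. I expect the algebra to consolidate to $1/(\mu + \gamma\bar\mu)$, which is exactly $\pb/(\gamma\bar\mu)$ by part (a); presenting it in both forms as in the statement is then just a matter of recording the identity.

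The computation is entirely routine — there is no real obstacle, only bookkeeping. The one place to be careful is the geometric-series manipulation: one must track the off-by-one in $\prob{Y_{i-1} > k}$ versus $\prob{Y_{i-1} \ge k}$ consistently, and make sure the prefactor $\gamma$ (resp.\ $\mu$) from the point mass is not dropped. An alternative, perhaps cleaner, route for part (a) is to note that $\min(X_i, Y_{i-1})$ and the ``winner'' are determined slot by slot: in each slot both a $\mathrm{Geo}(\gamma)$ and a $\mathrm{Geo}(\mu)$ trial are running, and $B_i$ is the event that the $X$-process terminates strictly before the $Y$-process; a one-step renewal argument gives $\pb = \gamma\bar\mu + \bar\gamma\bar\mu\,\pb$, which solves immediately to $\pb = \gamma\bar\mu/(1 - \bar\gamma\bar\mu) = \gamma\bar\mu/(\mu + \gamma\bar\mu)$. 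A similar one-step argument conditioned on $B_i$ handles part (b). I would likely present the renewal argument as the primary proof since it is the shortest and most transparent.
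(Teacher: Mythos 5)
Your proposal is correct, and the primary computation you describe is essentially the paper's own proof: for (a) condition on $X_i=j$, write $\prob{Y_{i-1}>j}=\bar\mu^{\,j}$, and collapse the resulting geometric series in $\bar\gamma\bar\mu$ using $1-\bar\gamma\bar\mu=\mu+\gamma\bar\mu$; for (b) compute $\sum_{j\ge1} j\,\gamma\bar\gamma^{\,j-1}\bar\mu^{\,j}$ and divide by $\pb$, exactly as the paper does. Your attention to the support convention is well placed and correctly resolved: the stated moments do pin the geometric support to $\{1,2,\dots\}$, which is what makes $\prob{Y_{i-1}>j}=\bar\mu^{\,j}$ the right tail. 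The one genuinely different element is the one-step renewal recursion $\pb=\gamma\bar\mu+\bar\gamma\bar\mu\,\pb$, which the paper does not use; it is valid (the only subtlety is that a tie $X_i=Y_{i-1}$ must be counted as $\Bbar_i$, which your recursion handles correctly since the $\gamma\mu$ branch contributes nothing) and it buys a derivation with no series manipulation at all, at the cost of being slightly less mechanical to extend to the conditional first moment in (b), where the direct sum $\sum_j j r^j = r/(1-r)^2$ is arguably the cleaner tool. Either presentation would be acceptable.
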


Since \eqref{BBpartition} shows that $I_i$ depends only on the partition $\set{B_{i-1},\Bbar_{i-1}}$, it follows from the law of total expectation that
\begin{align}
&\E{I_i}=\E{I_i|\Bbar_{i-1}}\prob{\Bbar_{i-1}}
+\E{I_i|B_{i-1}}\prob{B_{i-1}}\nn
&=\E{X_i|\Bbar_{i-1}}\prob{\Bbar_{i-1}}
+\E{R_{i-2}+X_i|B_{i-1}}\prob{B_{i-1}}\nn
&=\E{X_i}+\E{R_{i-2}|B_{i-1}}\prob{B_{i-1}}\nn
&=\frac{1}{\gamma} +\frac{\pb}{\mu}.
\label{eqn:ZW2-EI}
\end{align}
Similarly for the second moment,
\begin{IEEEeqnarray}{rCl}
\E{I_i^2}&=&\E{I^2_i|\Bbar_{i-1}}\prob{\Bbar_{i-1}}
+\E{I^2_i|B_{i-1}}\prob{B_{i-1}}\nn
&=&\E{X_i^2|\Bbar_{i-1}}\prob{\Bbar_{i-1}}\nn
&&\quad+\E{R_{i-2}^2+2X_iR_{i-2} +X_i^2|B_{i-1}}\prob{B_{i-1}}\nn
&=&\E{X^2_i}+\E{R_{i-2}^2+2X_iR_{i-2}|B_{i-1}}\prob{B_{i-1}}.\IEEEeqnarraynumspace
\label{eqn:ZW2-EI2a}
\end{IEEEeqnarray}
Since $X_i$ is geometric $(\gamma)$, independent of the event $B_{i-1}$, and  $R_{i-1}$ is geometric $(\mu)$ given $B_{i-1}$, 
it follows from \eqref{eqn:ZW2-EI} and \eqref{eqn:ZW2-EI2a} that
\begin{align}
\E{I_i^2} &=\frac{2-\gamma}{\gamma^2}+
\paren{\frac{2-\mu}{\mu^2} +\frac{2}{\gamma\mu}}\pb\nn
&=\paren{\frac{2}{\gamma}-1}\paren{\frac{1}{\gamma}+\frac{\pb}{\mu}}+2\frac{\pb}{\mu^2}\nn
&=\paren{\frac{2}{\gamma}-1}\E{I}+2\frac{\pb}{\mu^2}.\label{eqn:EI2}
\end{align}
We employ the partition $\set{\Bbar_{i-1}\Bbar_i,\Bbar_{i-1}B_i,B_{i-1}\Bbar_i,B_{i-1}B_i}$ and the law of total expectation to evaluate
\begin{align}
\E{I_iT_i}
&=\E{I_iT_i|\Bbar_{i-1}\Bbar_i}
\prob{\Bbar_{i-1}\Bbar_i}\nn
&\qquad+\E{I_iT_i|\Bbar_{i-1}B_i}\prob{\Bbar_{i-1}B_i}\nn
&\qquad+\E{I_iT_i|B_{i-1}\Bbar_i}\prob{B_{i-1}\Bbar_i}\nn
&\qquad
+\E{I_iT_i|B_{i-1}B_i}\prob{B_{i-1}B_i}.
\end{align}
It follows from \eqref{BBpartition} and the law of total expectation that
\begin{IEEEeqnarray}{lCl}
\mathrlap{\E{I_iT_i}}&& \nn
&=&\E{X_iY_i|\Bbar_{i-1}\Bbar_i}
\prob{\Bbar_{i-1}\Bbar_i}\nn
&&+\E{X_i(R_{i-1}+Y_i)|\Bbar_{i-1}B_i}
\prob{\Bbar_{i-1}B_i}\nn
&&+\E{(R_{i-2}+X_i)Y_i|B_{i-1}\Bbar_i}
\prob{B_{i-1}\Bbar_i}\nn
&&+
\E{(R_{i-2}+X_i)(R_{i-1}+Y_i)|B_{i-1}B_i}
\prob{B_{i-1}B_i}\\
&=&\E{X_iY_i} +\E{R_{i-1}X_i|\Bbar_{i-1}B_i}
\prob{\Bbar_{i-1}B_i}\nn
&&+\E{R_{i-2}Y_i|B_{i-1}\Bbar_i}
\prob{B_{i-1}\Bbar_i}\nn
&&+
\E{R_{i-2}(R_{i-1}\!+\!Y_i)\!+\!R_{i-1}X_i|B_{i-1}B_i}
\prob{B_{i-1}B_i}.\IEEEeqnarraynumspace
\label{eqn:EITv3}
\end{IEEEeqnarray}
Note that $Y_i$ is independent of $B_{i-1}$ and $B_i$ while  $X_i$ is independent of $B_{i-1}$ but  influenced by the occurrence of $B_i$. Since $\E{X_iY_i}=1/(\gamma\mu)$, it follows from \eqref{eqn:EITv3} that \begin{align}
\E{I_iT_i} &=\frac{1}{\gamma\mu} 
+\frac{\E{X_i|B_i}
\prob{\Bbar_{i-1}B_i}}{\mu}
+\frac{\prob{B_{i-1}\Bbar_i}}{\mu^2}\nn
&\qquad+\paren{\frac{1}{\mu^2}+\frac{1}{\mu^2} +\frac{\E{X_i|B_i}}{\mu}}\prob{B_{i-1}B_i}\\
&=\frac{1}{\gamma\mu}
+\frac{2\prob{B_{i-1}B_i}+\prob{B_{i-1}\Bbar_i}}{\mu^2}\nn
&\qquad+\frac{\E{X_i|B_i}
(\prob{\Bbar_{i-1}B_i}+\prob{B_{i-1}B_i})}{\mu}\\
&=\frac{1}{\gamma\mu}
+\frac{\prob{B_{i-1}B_i}+\prob{B_{i-1}}}{\mu^2}
+\frac{\E{X_i|B_i}\pb}{\mu}\nn
&=\frac{1}{\gamma\mu}
+\frac{\pb(1+\pb)}{\mu^2}
+\frac{\E{X_i|B_i}\pb}{\mu}.\label{eqn:EIT-v4}
\end{align}
Applying Lemma~\ref{ZW2-XB-props} to 
\eqref{eqn:EIT-v4} yields
\begin{align}
\E{I_iT_i} 
&=\frac{1}{\gamma\mu}
+\frac{\pb(1+\pb)}{\mu^2}
+\frac{\pb^2}{\gamma\mu\bar{\mu}}\nn
&=\frac{1}{\mu}\paren{\frac{1}{\gamma}+\frac{\pb}{\mu}}+\frac{\pb^2}{\mu}\paren{\frac{1}{\mu}+\frac{1}{\gamma\bar{\mu}}}\nn
&=\frac{\E{I_i}}{\mu}+\frac{\pb^2}{\mu}\paren{\frac{1}{\mu}+\frac{1}{\gamma\bar{\mu}}}\nn
&=\frac{\E{I_i}}{\mu}+\frac{\pb^2}{\mu}\paren{\frac{\mu+\gamma\bar{\mu}}{\gamma\mu\bar{\mu}}}
=\frac{\E{I_i}}{\mu}+\frac{\pb}{\mu^2}.
\label{eqn:EIT}
\end{align}

Combining \eqref{A_AoI_Main}, \eqref{eqn:EI2} and \eqref{eqn:EIT}, we obtain
\begin{align}
    \Delta^{\ZWtwo}
    &=\frac{1}{\gamma}+\frac{1}{\mu}-1 +2\frac{\pb}{\mu^2\E{I}}.
\end{align}
Applying  Lemma~\ref{ZW2-XB-props}(a) and     \eqref{eqn:ZW2-EI} yields Theorem~\ref{thm:ZW2}. 

\subsection{Proof of Lemma~\ref{ZW2-XB-props}}
\label{appendix-ZW2-XB-props}
\noindent (a) The probability of event $B_{i}=\set{X_{i}<Y_{i-1}}$ is
\begin{align}
\pb&=\sum_{j=1}^{\infty}\Pr(X_i < Y_{i-1} \mid X_i=j)\Pr(X_i=j)\nn
&\overset{(a)}{=}
\sum_{j=1}^{\infty}\bar\mu^j\gamma\bar\gamma^{j-1}=\frac{\gamma}{\bar\gamma}\sum_{j=1}^{\infty}\big(\bar\mu\bar\gamma\big)^{j}\overset{(b)}{=}\dfrac{\gamma\bar\mu}{\mu+\gamma\bar\mu}.
\end{align}
where $(a)$ holds because ${\prob{X_i=j}=\gamma\bar\gamma^{j-1}}$ and 
$\prob{j < Y_{i-1}}
=\bar\mu^j$, 
and step $(b)$ follows from the identities $\sum_{j=1}^{\infty}a^{j}={a}/{(1-a)}$ for  $\abs{a}<1$, and  $1-\bar\mu\bar\gamma=\mu+\gamma\bar\mu$.

\noindent (b) For the conditional expectation, similar facts imply
\begin{align}
\E{X_i|B_i} 
&=\sum_{j=1}^\infty j\prob{X_i=j|X_i<Y_{i-1}}\nn
&=\frac{1}{\pb}\sum_{j=1}^\infty j\prob{X_i=j,Y_{i-1}>j}\nn
&=\frac{1}{\pb}\sum_{j=1}^\infty j\bar{\gamma}^{j-1}\gamma\bar{\mu}^{j}\nn
&=\frac{\gamma\bar{\mu}}{\pb(1-\bar{\gamma}\bar{\mu})}\sum_{j=1}^\infty j(\bar{\gamma}\bar{\mu})^{j-1}(1-\bar{\gamma}\bar{\mu})\nn
&=\frac{1}{1-\bar{\gamma}\bar{\mu}}
=\frac{1}{\mu+\gamma\bar\mu}
=\frac{\pb}{\gamma\bar{\mu}}.
\end{align}

\subsection{Proof of Lemma~\ref{lemma_EZ}}\label{appendix_EZ}
The mean waiting time $\mathbb{E}[Z_i]$ is derived as
\allowdisplaybreaks
\begin{align}
\mathbb{E}[Z_i]&=\displaystyle\sum_{j=1}^{\infty}j\Pr(Z_i=j) \overset{(a)}{=}\displaystyle\sum_{j=1}^{\beta-1}(\beta-j)\Pr(Y_{i-1}=j)\nn
&\overset{}{=}\displaystyle\beta\sum_{j=1}^{\beta-1}\bar\mu^{j-1}\mu-\sum_{j=1}^{\beta-1}j\bar\mu^{j-1}\mu\nn
&=\displaystyle\frac{\beta\mu}{\bar\mu}\sum_{j=1}^{\beta-1}\bar\mu^{j}-\frac{\mu}{\bar\mu}\sum_{j=1}^{\beta-1}j\bar\mu^{j}\overset{(b)}{=}\displaystyle\frac{1}{\mu}\left( \beta\mu + \bar\mu^\beta -1 \right),
\end{align}
where $(a)$ follows from ${Z_i=f(Y_{i-1})=(\beta-Y_{i-1})^+}$ and $(b)$ follows from the identities \cite[Page~484]{yates1999probability}
\begin{align}
\label{series_alpha_j_low}
\sum_{j=1}^{m}\alpha^j&=\dfrac{\alpha(1-\alpha^m)}{1-\alpha},\\
\label{series_j_alpha_j_low}
\sum_{j=1}^{m}j\alpha^j&=\dfrac{\alpha(1-\alpha^m(1+m(1-\alpha)))}{(1-\alpha)^2}.
\end{align}

\subsection{Proof of Lemma~\ref{lemma_EZY}}\label{appendix_EZY}
Since \eqref{threshold-f}
implies $\Pr(Z_i=k \mid Y_{i-1}=j)=1_{\{k=(\beta-j)^+\}}$,
\begin{align}
\E{Z_iY_{i-1}}
&=\sum_{j=1}^{\infty}\sum_{k=1}^{\infty}jk\Pr(Y_{i-1}=j)\Pr(Z_i=k \mid Y_{i-1}=j)\nn
&=\sum_{j=1}^{\beta-1}j(\beta-j)\Pr(Y_{i-1}=j).
\end{align}
This implies
\begin{align}
\E{Z_iY_{i-1}}
&=\beta\sum_{j=1}^{\beta-1}j\bar\mu^{j-1}\mu - \sum_{j=1}^{\beta-1}j^2\bar\mu^{j-1}\mu\nn
&=\frac{\beta\mu}{\bar\mu}\sum_{j=1}^{\beta-1}j\bar\mu^{j} - \frac{\mu}{\bar\mu}\sum_{j=1}^{\beta-1}j^2\bar\mu^{j}\nn
&\overset{(a)}{=}\displaystyle\frac{1}{\mu^2}\left( \mu(1 + \beta)+ \bar\mu^\beta (2-\mu+ \beta\mu)- 2 \right),
\end{align}
where  $(a)$ follows from \eqref{series_j_alpha_j_low} and the identity 
\begin{align}
\sum_{j=1}^{m}j^2\alpha^j&=\dfrac{\alpha^2+\alpha+\alpha^{m+2}(2m^2+2m-1)}{(1-\alpha)^3}\nn
&\qquad -\frac{\alpha^{m+1}[(m+1)^2+m^2\alpha^{2}]}{(1-\alpha)^3}.
\label{series_j2_alpha_j_low}
\end{align}
Note that \eqref{series_j2_alpha_j_low} can be verified by taking derivatives with respect to $\alpha$ of both sides of \eqref{series_j_alpha_j_low}. 
 
\subsection{Proof of Lemma~\ref{lemma_EZ2}}\label{appendix_EZ2}
It follows from \eqref{threshold-f} that 
\allowdisplaybreaks
\begin{align}
&\mathbb{E}[Z_i^2]=\displaystyle\sum_{j=1}^{\infty}j^2\Pr(Z_i=j)\overset{(a)}{=}\displaystyle\sum_{j=1}^{\beta-1}(\beta-j)^2\Pr(Y_{i-1}=j)\nn
&\nonumber\overset{}{=}\displaystyle\beta^2\sum_{j=1}^{\beta-1}\bar\mu^{j-1}\mu-2\beta\sum_{j=1}^{\beta-1}j\bar\mu^{j-1}\mu  \sum_{j=1}^{\beta-1}j^2\bar\mu^{j-1}\mu\\
&\overset{(a)}{=}\displaystyle-\frac{1}{\mu^2}\left(\mu(1 + 2\beta) + \bar\mu^\beta(2-\mu) - \beta^2\mu^2 - 2\right).
\end{align} 
Note that $(a)$ follows from 
 \eqref{series_alpha_j_low}, \eqref{series_j_alpha_j_low}, and \eqref{series_j2_alpha_j_low}. 
  \subsection{Proof of Lemma~\ref{lemma_beta_bound}}\label{appendixBBeta}
Let $\beta^*$ denote the optimal $\beta$ for the $\Wone$ policy. To find an upper bound on 
$\beta^*$, we observe that the $\Wone$ policy with $\beta^*$ results in an average AoI not greater than that of the $\ZWone$ policy. Thus
$\beta^*$ satisfies the following inequality
\begin{align}\label{r_L_01}
  \Delta^{\ZWone}- \Delta^{\Wone}(\beta^*)\ge0,
\end{align}
where $\Delta^{\Wone}(\beta^*)$ represents the average AoI under the $\Wone$ policy with ${\beta=\beta^*}$. By substituting the expressions for $\Delta^{\ZWone}$ in \eqref{ZW1exp} and $\Delta^{\Wone}(\beta^*)$ in \eqref{eq_Wait1_beta} into \eqref{r_L_01}, some algebra shows that $\beta^*$ must satisfy the constraint
\begin{align}\label{r_L_02}
q(\beta^*)+\bar{\mu}^{\beta^*}[2\beta^*(\mu+\gamma)+2]\le 0,
\end{align}
where $q(\beta)$ denotes the quadratic function
\begin{align}\label{qbeta1}
    q(\beta)&\equiv  \beta^2(\mu^2+\gamma\mu)
    +\beta(\mu^2+\gamma\mu-2\gamma)-2.
\end{align}
We observe that $q(0)=-2$ and that $q(\beta)>0$ for sufficiently large $\beta$. 
Since the term $\bar{\mu}^{\beta^*}[2\beta^*(\mu+\gamma)+2]$ on the left side of \eqref{r_L_02} is positive, it follows from \eqref{r_L_02}  that $\beta^*$ must satisfy $\beta^*\le \beta^{\max}$ where $\beta^{\max}$ is the maximum value of $\beta$ satisfying $q(\beta)\le 0$.
With $a=\mu^2+\gamma\mu$, the quadratic formula implies
\begin{IEEEeqnarray}{rCl}\label{r_L_04}
\beta^{\max}&=&\dfrac{2\gamma-a+
\sqrt{(a-2\gamma)^2+8a}}{2a}.\IEEEeqnarraynumspace
\end{IEEEeqnarray}
The claim follows with substitution of $a$ and  the observation that $\beta^*$ is an integer.

\bibliographystyle{IEEEtran}
\bibliography{Bibliography}

\begin{thebibliography}{10}
\providecommand{\url}[1]{#1}
\csname url@samestyle\endcsname
\providecommand{\newblock}{\relax}
\providecommand{\bibinfo}[2]{#2}
\providecommand{\BIBentrySTDinterwordspacing}{\spaceskip=0pt\relax}
\providecommand{\BIBentryALTinterwordstretchfactor}{4}
\providecommand{\BIBentryALTinterwordspacing}{\spaceskip=\fontdimen2\font plus
\BIBentryALTinterwordstretchfactor\fontdimen3\font minus \fontdimen4\font\relax}
\providecommand{\BIBforeignlanguage}[2]{{%
\expandafter\ifx\csname l@#1\endcsname\relax
\typeout{** WARNING: IEEEtran.bst: No hyphenation pattern has been}%
\typeout{** loaded for the language `#1'. Using the pattern for}%
\typeout{** the default language instead.}%
\else
\language=\csname l@#1\endcsname
\fi
#2}}
\providecommand{\BIBdecl}{\relax}
\BIBdecl

\bibitem{5984917}
S.~Kaul, M.~Gruteser, V.~Rai, and J.~Kenney, ``Minimizing age of information in vehicular networks,'' in \emph{Proc. Commun. Society. Conf. on Sensor, Mesh and Ad Hoc Commun. and Net.}, Salt Lake City, UT, USA, Jun. 27--30, 2011, pp. 350--358.

\bibitem{Sunbook2019}
Y.~Sun, I.~Kadota, R.~Talak, and E.~Modiano, \emph{Age of Information: {A} New Metric For Information Freshness}.\hskip 1em plus 0.5em minus 0.4em\relax Synthesis Lectures on Communication Networks, 2019, vol.~12, no.~2.

\bibitem{9380899}
R.~D. Yates, Y.~Sun, D.~R. Brown, S.~K. Kaul, E.~Modiano, and S.~Ulukus, ``Age of information: An introduction and survey,'' \emph{{IEEE} J. Select. Areas Commun.}, vol.~39, no.~5, pp. 1183--1210, 2021.

\bibitem{6195689}
S.~Kaul, R.~Yates, and M.~Gruteser, ``Real-time status: How often should one update?'' in \emph{Proc. IEEE Int. Conf. on Computer. Commun. (INFOCOM)}, Orlando, FL, USA, Mar. 25--30, 2012, pp. 2731--2735.

\bibitem{7283009}
R.~D. {Yates}, ``Lazy is timely: Status updates by an energy harvesting source,'' in \emph{Proc. IEEE Int. Symp. Inform. Theory}, Hong Kong, China, Jun. 14--19, 2015, pp. 3008--3012.

\bibitem{8000687}
Y.~Sun, E.~Uysal-Biyikoglu, R.~D. Yates, C.~E. Koksal, and N.~B. Shroff, ``Update or wait: How to keep your data fresh,'' \emph{{IEEE} Trans. Inform. Theory}, vol.~63, no.~11, pp. 7492--7508, 2017.

\bibitem{7415972}
M.~Costa, M.~Codreanu, and A.~Ephremides, ``On the age of information in status update systems with packet management,'' \emph{{IEEE} Trans. Inform. Theory}, vol.~62, no.~4, pp. 1897--1910, Apr. 2016.

\bibitem{6310931}
S.~K. Kaul, R.~D. Yates, and M.~Gruteser, ``Status updates through queues,'' in \emph{Proc. Conf. Inform. Sciences Syst. (CISS)}, Princeton, NJ, USA, Mar. 21--23, 2012, pp. 1--6.

\bibitem{7541765}
K.~{Chen} and L.~{Huang}, ``Age-of-information in the presence of error,'' in \emph{Proc. IEEE Int. Symp. Inform. Theory}, Barcelona, Spain, Jul. 10--15, 2016, pp. 2579--2583.

\bibitem{9048933}
A.~{Soysal} and S.~{Ulukus}, ``Age of information in {G/G/1/1} systems,'' in \emph{Proc. Annual Asilomar Conf. Signals, Syst., Comp.}, Pacific Grove, CA, USA, Nov. 3--6, 2019, pp. 2022--2027.

\bibitem{7364263}
C.~{Kam}, S.~{Kompella}, G.~D. {Nguyen}, and A.~{Ephremides}, ``Effect of message transmission path diversity on status age,'' \emph{{IEEE} Trans. Inform. Theory}, vol.~62, no.~3, pp. 1360--1374, Dec. 2016.

\bibitem{6284003}
R.~D. Yates and S.~Kaul, ``Real-time status updating: Multiple sources,'' in \emph{Proc. IEEE Int. Symp. Inform. Theory}, Cambridge, MA, USA, Jul. 1--6, 2012, pp. 2666--2670.

\bibitem{9119460}
N.~{Akar}, O.~{Dogan}, and E.~U. {Atay}, ``Finding the exact distribution of (peak) age of information for queues of {PH/PH/1/1} and {M/PH/1/2} type,'' \emph{{IEEE} Trans. Commun.}, vol.~68, no.~9, pp. 5661--5672, Jun. 2020.

\bibitem{8406909}
J.~P. {Champati}, H.~{Al-Zubaidy}, and J.~{Gross}, ``Statistical guarantee optimization for age of information for the {D/G/1} queue,'' in \emph{Proc. IEEE Int. Conf. on Computer. Commun. (INFOCOM) Workshop}, Honolulu, HI, USA, Apr. 15--19, 2018, pp. 130--135.

\bibitem{8006504}
E.~Najm, R.~Yates, and E.~Soljanin, ``Status updates through {M/G/1/1} queues with {HARQ},'' in \emph{Proc. IEEE Int. Symp. Inform. Theory}, Aachen, Germany, Jun. 25--30 2017, pp. 131--135.

\bibitem{8820073}
Y.~Inoue, H.~Masuyama, T.~Takine, and T.~Tanaka, ``A general formula for the stationary distribution of the age of information and its application to single-server queues,'' \emph{{IEEE} Trans. Inform. Theory}, vol.~65, no.~12, pp. 8305--8324, 2019.

\bibitem{soysal2019age}
A.~Soysal and S.~Ulukus, ``Age of information in {G/G/1/1} systems: Age expressions, bounds, special cases, and optimization,'' [Online]: https://arxiv.org/abs/1905.13743, 2019.

\bibitem{9048909}
B.~{Buyukates}, A.~{Soysal}, and S.~{Ulukus}, ``Age of information in multicast networks with multiple update streams,'' in \emph{Proc. Annual Asilomar Conf. Signals, Syst., Comp.}, Pacific Grove, CA, USA,, Nov. 3--6 2019, pp. 1977--1981.

\bibitem{7282742}
L.~Huang and E.~Modiano, ``Optimizing age-of-information in a multi-class queueing system,'' in \emph{Proc. IEEE Int. Symp. Inform. Theory}, Hong Kong, China, Jun. 14--19, 2015, pp. 1681--1685.

\bibitem{8886357}
E.~{Najm}, R.~{Nasser}, and E.~{Telatar}, ``Content based status updates,'' \emph{{IEEE} Trans. Inform. Theory}, vol.~66, no.~6, pp. 3846--3863, Oct. 2020.

\bibitem{Hespanha-shs-06}
J.~Hespanha, ``Modelling and analysis of stochastic hybrid systems,'' \emph{IEE Proc. - Control Theory and Appl.}, vol. 153, pp. 520--535, Sep. 2006.

\bibitem{8469047}
R.~D. {Yates} and S.~K. {Kaul}, ``The age of information: Real-time status updating by multiple sources,'' \emph{{IEEE} Trans. Inform. Theory}, vol.~65, no.~3, pp. 1807--1827, Mar. 2019.

\bibitem{9103131}
R.~D. {Yates}, ``The age of information in networks: Moments, distributions, and sampling,'' \emph{{IEEE} Trans. Inform. Theory}, vol.~66, no.~9, pp. 5712--5728, May 2020.

\bibitem{8437591}
S.~K. {Kaul} and R.~D. {Yates}, ``Age of information: Updates with priority,'' in \emph{Proc. IEEE Int. Symp. Inform. Theory}, Vail, CO, USA, Jun. 17--22, 2018, pp. 2644--2648.

\bibitem{8406966}
R.~D. {Yates}, ``Age of information in a network of preemptive servers,'' in \emph{Proc. IEEE Int. Conf. on Computer. Commun. (INFOCOM)}, Honolulu, HI, USA, Apr. 15--19 2018, pp. 118--123.

\bibitem{8437907}
------, ``Status updates through networks of parallel servers,'' in \emph{Proc. IEEE Int. Symp. Inform. Theory}, Vail, CO, USA, Jun. 17--22, 2018, pp. 2281--2285.

\bibitem{9013935}
A.~{Javani}, M.~{Zorgui}, and Z.~{Wang}, ``Age of information in multiple sensing,'' in \emph{Proc. IEEE Global Telecommun. Conf.}, Waikoloa, HI, USA, Dec. 9--13, 2019.

\bibitem{9048914}
S.~{Farazi}, A.~G. {Klein}, and D.~{Richard Brown}, ``Average age of information in multi-source self-preemptive status update systems with packet delivery errors,'' in \emph{Proc. Annual Asilomar Conf. Signals, Syst., Comp.}, Pacific Grove, CA, USA, Nov. 2019, pp. 396--400.

\bibitem{9252168}
M.~Moltafet, M.~Leinonen, and M.~Codreanu, ``Average {AoI} in multi-source systems with source-aware packet management,'' \emph{{IEEE} Trans. Commun.}, vol.~69, no.~2, pp. 1121--1133, Feb. 2021.

\bibitem{9162681}
M.~{Moltafet}, M.~{Leinonen}, and M.~{Codreanu}, ``Average age of information in a multi-source {M/M/1} queueing model with {LCFS} prioritized packet management,'' in \emph{Proc. IEEE Int. Conf. on Computer. Commun. (INFOCOM) Workshop}, Toronto, Canada, Jul. 6--9, 2020, pp. 303--308.

\bibitem{Moltafet2020mgf}
M.~Moltafet, M.~Leinonen, and M.~Codreanu, ``Moment generating function of the {AoI} in a two-source system with packet management,'' \emph{{IEEE} Wireless Commun. Lett.}, vol.~10, no.~4, pp. 882--886, Apr. 2021.

\bibitem{9174099}
------, ``Average age of information for a multi-source {M/M/1} queueing model with packet management,'' in \emph{Proc. IEEE Int. Symp. Inform. Theory}, Los Angeles, CA, USA, Jun. 21--26 2020, pp. 1765--1769.

\bibitem{9562231}
------, ``Source-aware packet management for computation-intensive status updating: {MGF} of the {AoI},'' in \emph{Proc. Int. Symp. Wireless Commun. Systems}, Berlin, Germany, Sep. 6--9 2021, pp. 1--6.

\bibitem{MajedMGF}
\BIBentryALTinterwordspacing
M.~A. Abd-Elmagid and H.~S. Dhillon, ``Closed-form characterization of the {MGF} of {AoI} in energy harvesting status update systems,'' 2021. [Online]. Available: \url{https://arxiv.org/abs/2105.07074}
\BIBentrySTDinterwordspacing

\bibitem{9611498}
M.~Moltafet, M.~Leinonen, and M.~Codreanu, ``Moment generating function of the {AoI} in multi-source systems with computation-intensive status updates,'' in \emph{Proc. IEEE Inform. Theory Workshop}, Kanazawa, Japan, Oct. 17--21, 2021, pp. 1--6.

\bibitem{7249268}
N.~Pappas, J.~Gunnarsson, L.~Kratz, M.~Kountouris, and V.~Angelakis, ``Age of information of multiple sources with queue management,'' in \emph{Proc. IEEE Int. Conf. Commun.}, London, UK, Jun.8--12 2015, pp. 5935--5940.

\bibitem{8764468}
A.~Kosta, N.~Pappas, A.~Ephremides, and V.~Angelakis, ``Age of information performance of multiaccess strategies with packet management,'' \emph{J. Commun. Netw.}, vol.~21, no.~3, pp. 244--255, 2019.

\bibitem{9120608}
M.~Moradian and A.~Dadlani, ``Age of information in scheduled wireless relay networks,'' in \emph{Proc. IEEE Wireless Commun. and Networking Conf.}, Seoul, Korea, May25--28 2020, pp. 1--6.

\bibitem{9148775}
A.~Kosta, N.~Pappas, A.~Ephremides, and V.~Angelakis, ``Non-linear age of information in a discrete time queue: Stationary distribution and average performance analysis,'' in \emph{Proc. IEEE Int. Conf. Commun.}, Dublin, Ireland, Jun.7--11 2020, pp. 1--6.

\bibitem{9611393}
J.~Zhang and Y.~Xu, ``On age of information for discrete time status updating system with {Ber/G/1/1} queues,'' in \emph{Proc. IEEE Inform. Theory Workshop}, Kanazawa, Japan, Oct. 17--21 2021, pp. 1--6.

\bibitem{8772205}
D.~{Qiao} and M.~C. {Gursoy}, ``Age-optimal power control for status update systems with packet-based transmissions,'' \emph{{IEEE} Wireless Commun. Lett.}, vol.~8, no.~6, pp. 1604--1607, Jul. 2019.

\bibitem{bhat2019throughput}
R.~V. Bhat, R.~Vaze, and M.~Motani, ``Throughput maximization with an average age of information constraint in fading channels,'' [Online]. https://arxiv.org/pdf/1911.07499.pdf, 2019.

\bibitem{8943134}
R.~{Talak}, S.~{Karaman}, and E.~{Modiano}, ``Optimizing information freshness in wireless networks under general interference constraints,'' \emph{IEEE/ACM Trans. Net.}, vol.~28, no.~1, pp. 15--28, Dec. 2019.

\bibitem{8723545}
A.~E. {Kalor} and P.~{Popovski}, ``Minimizing the age of information from sensors with common observations,'' \emph{{IEEE} Wireless Commun. Lett.}, vol.~8, no.~5, pp. 1390--1393, May 2019.

\bibitem{9181539}
B.~{Yu}, Y.~{Cai}, and D.~{Wu}, ``Joint access control and resource allocation for short-packet-based {mMTC} in status update systems,'' \emph{{IEEE} J. Select. Areas Commun.}, Early Access, 2020.

\bibitem{gu2020optimizing}
Y.~Gu, Q.~Wang, H.~Chen, Y.~Li, and B.~Vucetic, ``Optimizing information freshness in two-hop status update systems under a resource constraint,'' https://arxiv.org/abs/2007.02531, 2020.

\bibitem{9155420}
Z.~{Qian}, F.~{Wu}, J.~{Pan}, K.~{Srinivasan}, and N.~B. {Shroff}, ``Minimizing age of information in multi-channel time-sensitive information update systems,'' in \emph{Proc. IEEE Int. Conf. on Computer. Commun. (INFOCOM)}, Toronto, ON, Canada,, Jul. 6--9 2020, pp. 446--455.

\bibitem{8445873}
Y.~{Sun} and B.~{Cyr}, ``Information aging through queues: A mutual information perspective,'' in \emph{Proc. IEEE Works. on Sign. Proc. Adv. in Wirel. Comms.}, Kalamata, Greece, Jun. 25--28, 2018, pp. 1--5.

\bibitem{8764465}
Y.~Sun and B.~Cyr, ``Sampling for data freshness optimization: Non-linear age functions,'' \emph{Journal of Communications and Networks}, vol.~21, no.~3, pp. 204--219, 2019.

\bibitem{bedewy2021optimal}
A.~M. Bedewy, Y.~Sun, S.~Kompella, and N.~B. Shroff, ``Optimal sampling and scheduling for timely status updates in multi-source networks,'' \emph{{IEEE} Trans. Inform. Theory}, vol.~67, no.~6, pp. 4019--4034, 2021.

\bibitem{chen2021optimal}
Z.~Chen, N.~Pappas, E.~Björnson, and E.~G. Larsson, ``Optimizing information freshness in a multiple access channel with heterogeneous devices,'' \emph{IEEE Open Journal Comm. Soc.}, vol.~2, pp. 456--470, 2021.

\bibitem{8648525}
E.~T. Ceran, D.~Gündüz, and A.~György, ``Average age of information with hybrid {ARQ} under a resource constraint,'' \emph{{IEEE} Trans. Wireless Commun.}, vol.~18, no.~3, pp. 1900--1913, 2019.

\bibitem{9598864}
M.~Moltafet, M.~Leinonen, M.~Codreanu, and N.~Pappas, ``Power minimization for age of information constrained dynamic control in wireless sensor networks,'' \emph{{IEEE} Trans. Commun.}, vol.~70, no.~1, pp. 419--432, 2022.

\bibitem{zakeri22}
\BIBentryALTinterwordspacing
A.~Zakeri, M.~Moltafet, M.~Leinonen, and M.~Codreanu, ``Dynamic scheduling for minimizing {AoI} in resource-constrained multi-source relaying systems with stochastic arrivals,'' 2022. [Online]. Available: \url{https://arxiv.org/abs/2203.05656}
\BIBentrySTDinterwordspacing

\bibitem{9540757}
C.-H. Tsai and C.-C. Wang, ``Unifying {AoI} minimization and remote estimation—optimal sensor/controller coordination with random two-way delay,'' \emph{{IEEE/ACM} Trans. Networking}, vol.~30, no.~1, pp. 229--242, 2022.

\bibitem{arxiv02929}
\BIBentryALTinterwordspacing
J.~Pan, A.~M. Bedewy, Y.~Sun, and N.~B. Shroff, ``Optimal sampling for data freshness: Unreliable transmissions with random two-way delay.'' [Online]. Available: \url{https://arxiv.org/abs/2201.02929}
\BIBentrySTDinterwordspacing

\bibitem{9834697}
C.-C. Wang, ``How useful is delayed feedback in {AoI} minimization — {A} study on systems with queues in both forward and backward directions,'' in \emph{Proc. IEEE Int. Symp. Inform. Theory}, Helsinki, Finland, Jun.26--30 2022, pp. 3192--3197.

\bibitem{9868923}
Z.~Liu, B.~Li, Z.~Zheng, Y.~T. Hou, and B.~Ji, ``Towards optimal tradeoff between data freshness and update cost in information update systems,'' in \emph{Proc. Int. Conf. on Computer. Commun. and Net. (ICCCN)}, Honolulu, HI, USA, Jul.25--28 2022, pp. 1--10.

\bibitem{8006703}
B.~T. {Bacinoglu} and E.~{Uysal-Biyikoglu}, ``Scheduling status updates to minimize age of information with an energy harvesting sensor,'' in \emph{Proc. IEEE Int. Symp. Inform. Theory}, Aachen, Germany, Jun. 25--30, 2017.

\bibitem{8406974}
S.~{Feng} and J.~{Yang}, ``Optimal status updating for an energy harvesting sensor with a noisy channel,'' in \emph{Proc. IEEE Int. Conf. on Computer. Commun. (INFOCOM) Workshop}, Honolulu, HI, USA, Apr. 15--19 2018, pp. 348--353.

\bibitem{8437573}
B.~T. {Bacinoglu}, Y.~{Sun}, E.~{Uysal Bivikoglu}, and V.~{Mutlu}, ``Achieving the age-energy tradeoff with a finite-battery energy harvesting source,'' in \emph{Proc. IEEE Int. Symp. Inform. Theory}, Vail, CO, USA, Jun. 2018, pp. 876--880.

\bibitem{8123937}
X.~{Wu}, J.~{Yang}, and J.~{Wu}, ``Optimal status update for age of information minimization with an energy harvesting source,'' \emph{{IEEE} Trans. Green Commun. Net.}, vol.~2, no.~1, pp. 193--204, Mar. 2018.

\bibitem{8822722}
A.~{Arafa}, J.~{Yang}, S.~{Ulukus}, and H.~V. {Poor}, ``Age-minimal transmission for energy harvesting sensors with finite batteries: Online policies,'' \emph{{IEEE} Trans. Inform. Theory}, vol.~66, no.~1, pp. 534--556, Sep. 2020.

\bibitem{8606155}
I.~{Krikidis}, ``Average age of information in wireless powered sensor networks,'' \emph{{IEEE} Wireless Commun. Lett.}, vol.~8, no.~2, pp. 628--631, Apr. 2019.

\bibitem{9546792}
M.~Hatami, M.~Leinonen, and M.~Codreanu, ``{AoI} minimization in status update control with energy harvesting sensors,'' \emph{{IEEE} Trans. Commun.}, vol.~69, no.~12, pp. 8335--8351, 2021.

\bibitem{bedewy2019ageoptimal-conf}
A.~M. Bedewy, Y.~Sun, S.~Kompella, and N.~B. Shroff, ``Age-optimal sampling and transmission scheduling in multi-source systems,'' in \emph{ACM Int. Symp. Mobile Ad Hoc Netw. Comput.}, Catania, Italy, Jul. 2--5 2019, pp. 121--130, https://arxiv.org/abs/1812.09463.

\bibitem{8187436}
A.~{Kosta}, N.~{Pappas}, and V.~{Angelakis}, ``Age of information: A new concept, metric, and tool,'' \emph{Foun. and Trends in Net.}, vol.~12, no.~3, pp. 162--259, 2017.

\bibitem{9241401}
A.~Maatouk, S.~Kriouile, M.~Assad, and A.~Ephremides, ``On the optimality of the whittle’s index policy for minimizing the age of information,'' \emph{{IEEE} Trans. Wireless Commun.}, vol.~20, no.~2, pp. 1263--1277, 2021.

\bibitem{9085402}
M.~A. Abd-Elmagid, H.~S. Dhillon, and N.~Pappas, ``A reinforcement learning framework for optimizing age of information in {RF}-powered communication systems,'' \emph{{IEEE} Trans. Commun.}, vol.~68, no.~8, pp. 4747--4760, 2020.

\bibitem{8938128}
B.~Zhou and W.~Saad, ``Minimum age of information in the internet of things with non-uniform status packet sizes,'' \emph{{IEEE} Trans. Wireless Commun.}, vol.~19, no.~3, pp. 1933--1947, 2020.

\bibitem{bertsekas2007dynamic}
D.~Bertsekas, \emph{Dynamic programming and optimal control, volume II, 3rd ed}.\hskip 1em plus 0.5em minus 0.4em\relax Athena Scientific, 2007.

\bibitem{gallager2013stochastic}
R.~G. Gallager, \emph{Stochastic Processes: Theory for Applications}.\hskip 1em plus 0.5em minus 0.4em\relax Cambridge University Press, 2013.

\bibitem{Gallager1996}
R.~Gallager, \emph{Discrete Stochastic Processes}.\hskip 1em plus 0.5em minus 0.4em\relax New York: Springer Science+Business Media, LLC, 1996.

\bibitem{yates1999probability}
R.~D. Yates and D.~J. Goodman, \emph{Probability and stochastic processes}.\hskip 1em plus 0.5em minus 0.4em\relax John Willey \& Sons, 1999.

\end{thebibliography}

\end{document}